\newcommand{\blind}{0}
\newcommand*{\addFileDependency}[1]{
\typeout{(#1)}
%
%
\@addtofilelist{#1}
%
\IfFileExists{#1}{}{\typeout{No file #1.}}
}\makeatother
\newcommand*{\myexternaldocument}[1]{%
\externaldocument{#1}%
\addFileDependency{#1.tex}%
\addFileDependency{#1.aux}%
}
\newtheorem{theorem}{Theorem}
\newtheorem{proposition}{Proposition}
\newtheorem{remark}{Remark}
\newtheorem{condition}{Condition}
\begin{document}

\def\spacingset#1{\renewcommand{\baselinestretch}%
{#1}\small\normalsize} \spacingset{1}


\if0\blind
{
  \title{\bf Robust Design-Based Estimation and Inference for Stratified Randomized Trials with Varying Cluster Sizes}
  \author{Xinhe Wang \\
    and \\
    Ben B. Hansen\thanks{
    Research supported by the Institute of Education Sciences, U.S. Department of Education, through Grant R305D210029 to the University of Michigan. This report represents views of the authors, not the Institute or the U.S. Department of Education.} \\
    Department of Statistics, University of Michigan}
  \date{}
  \maketitle
} \fi
\if1\blind
{
  \bigskip
  \bigskip
  \bigskip
  \begin{center}
    {\LARGE\bf Robust Design-Based Estimation and Inference for Stratified Randomized Trials with Varying Cluster Sizes}
\end{center}
  \medskip
} \fi

\bigskip
\begin{abstract}
Clustered randomized controlled trials are often stratified or pair-matched to improve covariate balance and efficiency. Sample average treatment effects (SATEs) are commonly estimated by averaging stratum-level treatment-control mean contrasts---an approach that is natural and widely used. We show that, in stratified clustered trials with heterogeneous cluster sizes, such estimators need not be consistent for the SATE. They can converge to the wrong limit even under correct randomization and without model misspecification. The source is a covariance between cluster sizes and treatment effects: stratumwise averaging mis-weights clusters in a way that produces bias of constant order, regardless of sample size.
We study the H\'ajek (ratio) estimator as a robust alternative. By aggregating outcomes within treatment groups before taking their difference, it remains consistent in clustered trials that grow by increasing strata sizes or the number of strata. Despite that, its use in design-based analyses of clustered trials has been limited by the lack of variance estimators. We develop a design-based variance estimator that applies to any number of strata of any size, and show that it is asymptotically conservative, a property that holds even when some strata contain only a single treated or control unit. We also present tests improving the coverage of Wald tests when the number of clusters is moderate. The framework extends naturally to covariate-adjusted estimators via a variance orthogonality property. 
\end{abstract}

\noindent%
{\it Keywords:} generalized score test, heterogeneous treatment effect, Huber-White standard error, model-assisted inference, Neyman-Rubin causal model
\vfill

\newpage
\spacingset{1} 

\section{Introduction}
\label{sec:mov}

In clustered randomized controlled trials (cRCTs), fine stratification and pair matching are commonly recommended designs to improve efficiency \citep{raudenbush1997statistical, imai2009essential, imbens2011experimental}. However, when cluster sizes differ within strata or pairs, commonly used estimators can fail to identify average treatment effects.

Estimators that average stratum-specific treatment-control contrasts using stratum weights --- including the widely used fixed effects estimator, as well as the weighting by stratum size estimator of \citet{imai2009essential} --- are not generally consistent. In paired or finely stratified cRCTs, they fail to converge on the average treatment effect
whenever cluster sizes vary within strata and correlate with treatment effects. Such conditions are natural in practice: within-stratum alignment of cluster sizes may be incompatible with operational constraints or competing investigator priorities \citep[e.g.,][discussed in Section~\ref{secappl}]{Giles2012}; large and small clusters may differ systematically in baseline outcomes or intervention responsiveness \citep[e.g.,][]{song2019effect, bugni2025inference}.

Previous work either underestimates the problem or mis-diagnoses it. For example, \citet{imai2009essential} recommend pair matching on cluster size to mitigate bias, and \citet{de2024level} develop variance estimators for the fixed effects estimator that address overrejection. While valuable, these do not resolve the more fundamental issue that the point estimator can be inconsistent if cluster sizes vary within strata.

To see what kinds of estimator maintain consistency in cRCTs, consider the source of the bias: stratum-averaging estimators treat strata as the primary units of aggregation, but cluster weights are not properly represented in small strata. The H\'ajek (or ratio) estimator \citep{hajek1971comment} avoids the pitfall by aggregating cluster outcomes within treatment conditions first and then taking the difference, preserving the role of cluster weights. It is also closely connected to weighted least squares and to classical survey-sampling estimators \citep{cochran1977, lohr2021sampling}, yet its adoption in design-based analyses of stratified clustered trials has been limited by the lack of variance estimators that accommodate arbitrary stratum sizes---a gap that is especially acute when strata are small and within-stratum sample variances cannot be computed \citep{fogarty2018finely, pashley2021insights}.

In this paper, we characterize the inconsistency of common stratum-averaging estimators, clarify the conditions under which it arises, and quantify the asymptotic bias. We develop a design-based inferential framework for the H\'ajek estimator, including its consistency under both standard asymptotic regimes for stratified experiments (growing stratum sizes and growing numbers of strata), a novel variance estimator that accommodates arbitrary stratum numbers and sizes and is asymptotically conservative, a testing procedure for small to moderate samples, and a straightforward extension to covariate-adjusted estimation. Our results provide practical guidance for the analysis of finely stratified cRCTs by clarifying when stratum-averaging estimators can fail and when ratio-based estimators are preferable.

Although motivated by clustered trials, our framework applies more broadly to stratified RCTs with unequal unit weights.  This includes survey experiments, with subjects carrying sampling weights \citep[e.g.,][]{holbrook2010social,hanmer2014experiments}, and internet experiments with feature-rich non-probability samples \citep[e.g.,][]{weinbergKapelner22}, where calibration weights may foster generalizability \citep{tiptonOlsen22}. Rich baseline information recommends paired randomization \citep{kapelner2025pairwise}, the design with which Section~\ref{sec:inconsistency} demonstrates the inconsistency of stratum-averaging estimators.

\section{Background, notation, and framework}
\label{sec2}

\subsection{Related work}

Prior work has examined efficiency gains from blocking or matching in randomized experiments \citep{raudenbush1997statistical, hansen2008covariate, imbens2011experimental, wang2023rerandomization}. In cRCTs, the Horvitz-Thompson estimator \citep[HT,][]{horvitz1952} is unbiased for the sample average treatment effect (SATE) but can have high variance \citep{middleton2015unbiased, mann2024general}, while work on estimator choice has shown that unequal cluster sizes materially affect both estimands and estimator performance \citep{su2021model}. Our paper sharpens this picture: once fine stratification is combined with heterogeneous cluster sizes, stratum-averaging estimators can incur $O(1)$ asymptotic bias for the SATE.

A related strand studies inference for stratified experiments. \citet{fogarty2018finely} develops randomization-based inference for non-clustered finely stratified designs. \citet{pashley2021insights} study variance estimator for designs containing fine strata but require the number of fine strata to be large. \citet{Liu2020} develop covariate adjustment methods for the stratum-averaged differences-in-means estimator in stratified RCTs. More recently, \citet{baiLiuShaikhTabord-Meehan24} study SATE estimation in paired cRCTs under a super-population framework. \citet{de2024level} recommend clustering standard errors at the pair level for the H\'ajek estimator in paired cRCTs. 
Among the closest related papers, \citet{schochet2022design} analyze ratio and weighted least-squares estimators in blocked cRCTs and derive variance estimators under a fixed number of large strata, while \citet{mann2024general} develop improved HT-based estimators for paired cRCTs. We study the H\'ajek effect estimator under the finite-population framework and develop a design-based variance estimator that applies to any number of strata of any sizes.

\subsection{Notation and Assumptions}

Consider an RCT of $n$ allocation units, persons or clusters of people according as the trial has person- or cluster level random assignment. Let $\wps>0$ be a \textit{priori} weight associated with unit $\s$, such as the cluster size in a cRCT.
Suppose all units are arranged into $\B\geq 1$ strata. Stratum $\p$ contains $\np$ units, with a simple random sample of $\npt$ assigned to treatment and the remaining $\np-\npt=\npc$ to control. 
Let $\Zps$ denote the random treatment assignment of unit $\s$, with $\Zps=1$ indicating treatment and $\Zps=0$ indicating control. For $z\in\{0,1\}$, define $\psk=\npk/\np\in(0,1)$ as the probability assignment to treatment or control, where $\s\in\p$. (Section~\ref{sec:covadj} presents a method adjusting for cluster- and person-level covariates, with additional notation.)
The potential outcome of unit $\s$ under treatment $z$ is $\ypsk$ for $z=1,0$. Either $\ypst$ or $\ypsc$ is observed, depending on $\zps$, the realization of treatment assignment $\Zps$. In cRCTs, $\ypsk$ is typically the cluster mean. Let $\tau_i = \ypst - \ypsc$ be the treatment effect of unit $\s$.

We make the no interference assumption \citep{cox1958planning,rubin1990comment}: a unit's observed outcome depends only on its own treatment assignment. For a cRCT, this excludes interference across clusters. Under this assumption, each unit's observed outcome can be expressed as $y_{iz_i}=\zps \ypst + (1-\zps)\ypsc$, without dependence on other units' treatment assignments. We also assume complete randomization within strata: treatment groups are selected uniformly at random from the $\binom{\np}{\npt}$ possible allocations, with treatment assignments mutually independent across strata.
The parameter of main interest is the SATE:
\begin{equation} 
\label{eq:rho_k-def}
\tau  = \frac{\ssn\wps(\ypstt - \ypsc)}{\ssn\wps}= \frac{\ssn\wps \ypstt}{\ssn\wps} - \frac{\ssn\wps \ypsc}{\ssn\wps} = \rho_1 - \rho_0.
\end{equation} 
Here, $\rho_1$ and $\rho_0$ denote weighted means of potential outcomes and are also parameters of interest.
They are unknown as we observe only one of the potential outcomes of each unit. 

\section{Inconsistency of common stratum-averaging estimators}
\label{sec:inconsistency}
Following \citet[][Ch~5]{fleiss2013smr}, 
we use the term direct adjustment for estimators of the form:
\begin{equation}
\label{eq:sadm}
\sum_{b=1}^B \tilde w_b \hat\tau_b = \sum_{b=1}^B \tilde w_b \left\{\frac{\sum_{i\in b} Z_i w_i y_{iz_i}}{\sum_{i\in b} Z_i w_i} - \frac{\sum_{i\in b} (1-Z_i) w_i y_{iz_i}}{\sum_{i\in b} (1-Z_i) w_i} \right\},
\end{equation}
where $\tilde w_{\p}\ge0$ are stratum weights satisfying $\spb \tilde w_{\p} = 1$, and $\hat\tau_b $ is the difference between weighted mean outcomes under treatment and control in stratum $b$. When the weights $\tilde w_b$ are proportional to the total cluster size in stratum $b$, \eqref{eq:sadm} coincides with the estimator recommended by Imai, King, and Nall (2009). When the weights are proportional to the harmonic mean of the treatment and control group sizes within each stratum, \eqref{eq:sadm} coincides with the treatment coefficient from a linear regression of outcomes on a treatment indicator and stratum fixed effects. For brevity, we refer to these as the IKN and fixed effects (FE) estimators, respectively.

These estimators are natural and appealing. In non-clustered stratified experiments, stratum-level difference-in-means estimators are unbiased for stratumwise treatment effects, and their weighted average is unbiased for the SATE under appropriate weighting. Under complete randomization, these estimators coincide with the common difference-in-means and are unbiased from the design-based perspective. Basically, such estimators are well behaved in non-clustered or coarsely stratified experiments \citep{imbens2015causal, ding2024first}. It is therefore unsurprising that they have become default choices in practice.

Despite their familiarity and wide use, these estimators can be inconsistent for the SATE in cRCTs with many small strata and heterogeneous cluster sizes. 
The problem is structural: it is not caused by model misspecification, but by how these estimators aggregate information across strata, and it is not resolved by changing the estimand.
Even when the target is
\[
\sum_{b=1}^B \tilde w_b \, \tilde\tau_b,
\]
where $\tilde\tau_b$ denotes the stratum mean treatment effect, \eqref{eq:sadm} need not be consistent as the number of strata grows while stratum sizes remain small.
\begin{proposition}
\label{prop:wasdom}
    In a paired cRCT, denote $\bwp$ and $\bar\tau_\p$ as the simple averages of cluster weights and treatment effects within pair $\p$. The IKN effect estimator targeting the SATE $\spb \tilde{w}_{\p}^\text{(s)} \tilde\tau_\p$, where $\tilde{w}_{\p}^\text{(s)} \propto \sump\wps$ satisfy $\spb\tilde{w}_{\p}^\text{(s)} =1$, has bias
    \begin{equation}
        \frac{1}{\ssn \wps} \spb \sump (\wps-\bwp)(\tau_i-\bar\tau_\p).
        \label{eq:wasdom-bias}
    \end{equation}
    The FE estimator targeting $\spb \tilde{w}_{\p}^\text{(f)} \tilde\tau_\p$, where $\tilde{w}_{\p}^\text{(f)}$ is proportional to the harmonic mean of the two cluster sizes in pair $\p$ with $\spb \tilde{w}_{\p}^\text{(f)} =1$, has bias
    \begin{equation}
        \spb \tilde{w}_{\p}^\text{(f)} \cdot \frac{1}{\sump \wps} \sump (\wps-\bwp)(\tau_i-\bar\tau_\p).
        \label{eq:wasdom-bias2}
    \end{equation}
\end{proposition}

Biases \eqref{eq:wasdom-bias} and \eqref{eq:wasdom-bias2} are driven by the within-stratum covariance between cluster sizes $\wps$ and treatment effects $\tau_i$: when this covariance is nonzero, $\hat\tau_b$ systematically mis-weights clusters, producing a bias that does not vanish as the number of strata grows.
\begin{remark}
\label{rmk:inconsistency-cond}
    The biases \eqref{eq:wasdom-bias} and \eqref{eq:wasdom-bias2} remain bounded away from zero as the number of pairs tends to infinity if either of the following hold:
    \begin{enumerate}
        \item Cluster sizes $\wps$ are uniformly bounded while $\lim\inf_{\B\to\infty}\sum_{{\p=1,  i\in\p}}^\B \cov(\wps, \tau_i)/\B \ge c$ for some constant $c>0$.
        \item For $\p=1,\ldots, \B$ and $i \in \p$, the pairs $(\wps, \tau_i)$ are drawn i.i.d.\ from a distribution under which the correlation between $w$ and $\tau$ is nonzero, the coefficient of variation of $w$ is nonzero, and the standard deviation of $\tau$ is nonzero. 
    \end{enumerate}
\end{remark}

These conditions hold whenever cluster sizes vary substantially and larger clusters respond differently to treatment, which is a natural scenario in practice \citep[e.g.,][]{hayes2017cluster}, making the risk of bias diagnosable by comparing IKN and H\'ajek estimates.

These results accord with \citet{de2024level}, who note inflated test sizes for fixed effects (FE) estimators in paired cRCTs, a finding at odds with their claim that suitable standard errors alleviate the size problem. We offer a more fundamental diagnosis: whatever the challenges of variance estimation, the problem is with the point estimator itself. When cluster sizes vary and correlate with treatment effects, the FE estimator is inconsistent for the SATE regardless of how the standard error is computed; no improvement in variance estimation can correct an $O(1)$ asymptotic bias. Matching cluster sizes within pairs, as recommended by \citet{imai2009essential}, mitigates the problem when feasible, but does not resolve it when perfect size balance within strata is unavailable.

The H\'ajek estimator does not estimate $\tau_{\p}$ en route to estimating $\tau$. In sampling theory, H\'ajek estimators are known to have a bias decreasing at rate \(O(n^{-1})\) in simple random samples \citep{cochran1977, lohr2021sampling}. In cRCTs, the bias of the H\'ajek effect estimator decreases at \(O(n^{-1})\), as Section~\ref{sec3-1} shows, even under Remark \ref{rmk:inconsistency-cond}'s conditions, when stratum-averaging estimators \eqref{eq:sadm} can incur \(O(1)\) biases. The inconsistency of \eqref{eq:sadm} is therefore not a pathological edge case: it arises generically whenever cluster sizes vary and correlate with treatment effects within strata, a condition natural and empirically common in cRCTs. 

\section{Design-based inference for the H\'ajek effect estimator}
\label{sec3}

\subsection{Linearization and asymptotic normality}
\label{sec3-1}

The H\'ajek effect estimator estimates $\rho_1$ and $\rho_0$ separately, incorporating inverse assignment probabilities:
\begin{equation*}
  \hat\tau = \hat\rho_1 - \hat\rho_0; \quad\hat{\rho}_{z} = \left. \ssn\frac{\Indk}{\psk}\wps \ypsk \middle/ {\ssn \frac{\Indk}{\psk}\wps} \right.,\ z=1,0,
\end{equation*}
where treatment indicator $\Indk$ equals 1 if $\Zps=z$ and 0 otherwise. Seen from another perspective, $(\hat\rho_1,\hat\rho_0)$ is the minimizer of L2 loss as inflated by reciprocal probabilities of assignment, $L_{2(n)}(r_{1}, r_{2}) = \ssn \wps [ {\Indt} (\ypst-r_1)^2 /\pst + \Indc (\ypsc - r_0)^2 / \psc].$
We may obtain $\hat\tau$ from weighted least squares (WLS) regression of $\yps$ on treatment indicator $\zps$, including an intercept term and weights $\wps/\pi_{\s\zps}$.  

\label{sec:m-est}

We cast $(\hat{\rho}_{1}, \hat{\rho}_{0})$ as an M-estimator \citep{huber67,stefanski2002calculus} to obtain an asymptotic linearization that accommodates arbitrary stratum sizes and forms the basis for variance estimation in Section~\ref{sec3-var}. 
Fix $\s=1,\ldots,n$ and let $\p$ be the stratum containing unit $\s$.
For $z=1,0$, define a unit-level weighted and centered outcome $\gamsz (r_z) = \wps(\ypsk - r_z)$, functions of $r_z\in\mathbb{R}$, and
\begin{equation}
  \label{eq:1}
 \psi_i(r_1, r_0) = \left( \begin{array}{c}
     \Indt \gamst (r_1) /\ppt   \\
     \Indc \gamsc (r_0)/ \ppc  
\end{array} \right) =
    \frac{\gamma_{iZ_{i}}(r_{Z_{i}})}{\pi_{bZ_{i}}}
\left(
\begin{array}{c}
    \Indt\\ \Indc
\end{array}
\right),
\end{equation}
a random function of parameters $\theta = (r_1, r_0)^T$. Unlike the usual M-estimation framework, outcomes $\ypsk$ are fixed while assignments $Z_{i}$ are random.  For a finite population of size $n$, the parameter $\theta_n = (\rho_1,  \rho_0)^T$ defined in \eqref{eq:rho_k-def} solves estimating equation $\E [ \Phi_n(\rho_1,  \rho_0)] = 0$, where $\Phi_n(r_1,r_0)  = (\ssn\wps)^{-1}\ssn \psi_i(r_1,r_0)$; $\theta_{n}$ also minimizes L2 risk, $R_{2(n)}(r_{1}, r_{0})=\E[L_{2(n)}(r_{1}, r_{0})]=\ssn \wps (\ypst-r_1)^2 + \ssn\wps (\ypsc - r_0)^2$.
As an M-estimator, $(\hat\rho_1, \hat\rho_0)^T$ solves $\Phi_n(r_1,r_0) =  0$. 
Although it satisfies $\hat{\theta}_n = \theta_n + (-\nabla \Phi_n)^{-1} \Phi_n(\theta)$ exactly, $\Phi_n(r_1,r_0)$ being linear in $(r_1,r_0)$, its linearization \(\tilde\theta_n := \theta_n + ( -\E[\nabla \Phi_n] )^{-1} \Phi_n(\theta) \approx \hat\theta_n\) is an approximation only.

Consider RCT designs such as fine stratification, pair-matching, and others where many small strata are present. In the remainder of Section \ref{sec3-1}, assume the number of strata $\B$ grows to infinity. Now we derive the asymptotic covariance of $(\hat\rho_1, \hat\rho_0)$.

\begin{condition}
\label{cond:wlln-pi}
There exists a constant $c>0$ that is independent of the sample size such that for all $\p=1,\ldots,\B$ and $ z=1,0$, $\ppk\in (c,1-c)$ as $\B\to\infty$.
\end{condition}
\begin{condition}
\label{cond:wlln-w}
Kish's effective sample size $(\ssn \wps)^2 / (\ssn \wps^2) \to \infty$ as $\B\to\infty$. 
\end{condition}

Conditions \ref{cond:wlln-pi} and \ref{cond:wlln-w} guarantee that the gradient of $\Phi_n$ with respect to $r_z$ has a vanishing variance as $\B$ increases. Condition \ref{cond:wlln-pi} is also known as the strong overlap condition. Condition \ref{cond:wlln-w} rules out dominance by a few large clusters in the RCT. 
A sufficient condition of Condition \ref{cond:wlln-w} is: there exist constants $M\ge m> 0$ such that $m\le \wps \le M$, $\forall \s$. We can also permit that $\wps=0$ for some $\s$, which may arise in estimation of subgroup effects. This requires $\wps \le M$ for all $\s$ and a positive proportion of the weights exceeding $m>0$ as the sample size grows.

To characterize the asymptotic distribution of $\Phi_n(\rho_1,\rho_0)$, write $\gamsz= \gamsz(\rho_z)$ and $\bgampz=\bgampz(\rho_z)$ and define stratum-wise finite population variances and covariances of $\gamsz$'s as
\begin{equation}
    S_{\p,z}^2 = \frac{1}{\np-1} \sump \left( \gamsz - \bgampz \right)^2,\ z=1,0; \ \  S_{\p,01} = \frac{1}{\np-1} \sump  \left( \gamst - \bgampt \right) \left( \gamsc - \bgampc \right).
    \label{eq:var-gamma}
  \end{equation}
Analogous definitions give $S_{\p,1}^2(r_1)$, $S_{\p,0}^2(r_0)$ and $S_{\p,01}(r_1,r_0)$, with $\gamsz(r_z)$ and $\bgampz(r_z)$ in place of $\gamsz$ and $\bgampz$. 
Let $W=\ssn\wps$ be the total unit weight. Define 
\begin{equation}
\tsigma_z^2 = \frac{1}{W} \spb \np \left( \frac{\np}{\npk} - 1 \right) S_{\p,z}^2,\ z=1,0\ \ \text{and} \ \ \tsigma_{01} = -\frac{1}{W} \spb {\np} S_{\p,01}.
    \label{eq:sigmas}
\end{equation}
Then the covariance matrix of $\Phi_n(\rho_1,\rho_0)$ is
\begin{equation}
\cov\left(  \Phi_n(\rho_1,\rho_0) \right) = \frac{1}{W} \Sigma_{n},\  \text{ where }\  \Sigma_n= \left( \begin{array}{cc}
        \sigma_1^2 & \sigma_{01}  \\
        \sigma_{01} & \sigma_0^2 
        \end{array} \right).
    \label{eq:covXb}
\end{equation}
We assume standard regularity conditions ensuring that $\Sigma_n$ is finite and positive definite, and that a central limit theorem under Conditions \ref{cond:lf1}--\ref{cond:lf2} specified in the Supplementary Material (Section \ref{sec:sm-thm1-conditions}) applies to $\Phi_n(\rho_1,\rho_0)$. 
\begin{theorem}
\label{thm1}
Under Conditions \ref{cond:wlln-pi}--\ref{cond:wlln-w} and \ref{cond:lf1}--\ref{cond:lf2}, 
$ (\ssn \wps)^{1/2} (\hat\theta_n - \theta_n)  \simd \mathcal{N} (0, \Sigma_n)$ and $(\ssn \wps)^{1/2}(\hat\tau - \tau)\simd \mathcal{N}(0, \sigma_1^2-2\sigma_{01}+\sigma_0^2) $ as $\B\to\infty$. 
\end{theorem}

The notation $X_n\simd \mathcal{N}(0,V_n)$ as $n\to\infty$, where $X_n\in\mathbb{R}^d$ is a sequence of random vectors and $V_n\in\mathbb{R}^{d\times d}$ is a sequence of positive definite matrices, indicates that $V_n^{-1/2}X_n \xrightarrow{d}\mathcal{N}(0,I_d)$ where $I_d$ is a $d$ by $d$ identity matrix. 
Theorem \ref{thm1} establishes the consistency, asymptotic normality and large-sample variance of $\hat\tau$ as the number of strata grows.

\subsection{Variance estimation for arbitrary stratum sizes}
\label{sec3-var}

Consider $ \gamsz(r) = \wps(\ypsk - r) $ as an adjusted potential outcome function. The mean of $\gamsz(r)$'s for units \textit{assigned to} treatment $z$ in stratum $\p$ is
$\bgamobsz(r) = \npk^{-1}\sum\{ \gamsz(r): i \in \p, \Zps=z \} = \sump {\Indk}/{\psk} \cdot \wps (\ypsk - r) $.
By Theorem \ref{thm1}, a large-sample approximation of $ \var(\hat\tau) $
\begin{align}
    \frac{1}{W}(\sigma_1^2 - 2\sigma_{01}+\sigma_0^2) = \frac{1}{W^2}\spb \np^2 \var \left(\bgamobst(\rho_1) - \bgamobsc(\rho_0) \right) .
    \label{eq:varestsec3}
\end{align}
Thus, we can estimate $\var( \hat\tau)$ by aggregating estimators of $\var \left(\bgamobst(\rho_1) - \bgamobsc(\rho_0) \right)$ by stratum. We present two such estimators, applicable for strata of varying sizes. 

The first estimator, a well-established result dating back to \citet{splawa1990application}, requires at least two units per treatment arm:
\begin{equation}
    \nu_{\p}^{(l)} (r_1,r_0)  =  \frac{s_{\p 1}^2(r_1)}{\npt} + \frac{s_{\p 0}^2(r_0)}{\npc},
    \label{eq:est1}
\end{equation}
where $ s_{\p z}^2 (r_z)  = \sum_{i\in\p,\Zps=z} \{\gamsz(r_{z}) - \bgamobsz(r_{z})\}^2 /(\npk-1)$ is the sample variance of $\gamsz$'s for units assigned to treatment $z$ in stratum $\p$: $\{\gamsz: \s\in\p, \Zps=z \}$, and $(l)$ stands for \textit{large}-stratum variance estimator.

The second estimator is our novel estimator that applies to strata of any size. When strata are small, as in paired or finely stratified designs where $\npt=1$ or $\npc=1$, the stratum-wise within-arm sample variances required by \eqref{eq:est1} are not defined. We propose an estimator that remains valid for strata of any size, including singletons.
\begin{proposition}
\label{lem:ab2}
Consider a completely randomized trial with $n$ units, where $n_1$ units are randomly assigned to treatment and $n_0$ to control. For each subject $\s$, let $\ai$ and $\bi$ be the potential outcomes under treatment and control. Denote $\baraobs$ and $\barbobs$ as the observed mean outcomes for treated and control units, respectively. Let $\bara = \sum_\s \ai / n$, $\barb = \sum_\s \bi /n$, $\sigma_1^2 = \sum_\s (\ai -\bara)^2 / n$, $\sigma_0^2 = \sum_\s (\bi -\barb)^2 / n$, and $\sigma_{10} = \sum_\s (\ai-\bara)(\bi-\barb) / n$. Then the following is an upward-biased (conservative) estimator of $\var(\baraobs - \barbobs)$:
\begin{equation*}\label{eq:novel-var-estimator}
    \frac{1}{n_1 n_0} \sum_{i=1}^{n_1} \sum_{j=1}^{n_0} (y_{i|1}-   y_{i|0})^2 - \frac{1}{n_1} \sum_{i=1}^{n_1} (y_{i|1} - \baraobs)^2 - \frac{1}{n_0} \sum_{j=1}^{n_0} (y_{i|0} - \barbobs)^2 ,
\end{equation*}
which is nonnegative when $n_1=1$ or $n_0=1$, having a nonnegative bias of $(\bara - \barb)^2$.
\end{proposition}
The intuition behind this proposed estimator is straightforward. When within-arm sample variances cannot be computed, we instead use the average squared difference between all treatment-control outcome pairs, and correct by subtracting within-arm variations. The resulting estimator overestimates the variance by the squared treatment effect but remains well-defined even for singleton strata.
By Proposition \ref{lem:ab2}, a conservative and nonnegative estimator of $\var \left(\bgamobst(r_{1}) - \bgamobsc(r_{0}) \right)$ for small stratum with $\npt=1$ or $\npc=1$ is
\begin{equation}
    \nu_{\p}^{(s)} (r_1,r_0) = (\npt \npc)^{-1} \sum_{\substack{\s\in\p \\ \Zps=1}}\sum_{\substack{\s'\in\p \\ \Zpst=0}} \{\gamst(r_{1})-\gamssc(r_{0})\}^2 - \sum_{z=1,0}\npk^{-1} \sum_{\substack{\s\in\p \\\Zps=z}} \{\gamsz(r_{\zps}) - \bgamobsz(r_{\zps})\}^2 .
    \label{eq:est3}
\end{equation}

\label{sec:varlarge}
\label{sec:varsmall}

\begin{remark}\label{rmk:varbias}
Under constant treatment effects, $\nu_{\p}^{(e_\p)}(\rho_1,\rho_0)$, $e_\p\in\{l,s\}$, 
are unbiased for $\var\left(\bgamobst(\rho_1) - \bgamobsc(\rho_0)\right)$. In general, 
the bias of $\nu^{(l)}$ increases with within-stratum treatment effect heterogeneity, 
while the bias of $\nu^{(s)}$ increases with heterogeneity across strata. Consequently, 
$\nu^{(l)}$ is preferred when heterogeneity is mainly across strata, and $\nu^{(s)}$ 
when strata are small and within-stratum variances are not defined. Formal expressions 
for both biases are given in Section \ref{sec:sm-remark-varbias} of the Supplementary Material.
\end{remark}

Based on the decomposition of $\var(\hat\tau)$ in (\ref{eq:varestsec3}), the variance of $\hat{\tau}$ is estimated by
\begin{equation}
    \hat{v}(\hat\tau) = \frac{1}{W^2} \spb \np^2 \nu^{(e_{\p})}_{\p}(\hat\rho_1, \hat\rho_0),
    \label{eq:case1varest}
\end{equation}
where $e_{\p} \in \{l, s\}$ indicates the stratum-specific estimator, allowing $\nu^{(e_{\p})}_{\p}(\cdot, \cdot)$ to be either \eqref{eq:est1} or \eqref{eq:est3}, adapting to strata of varying sizes. H\'ajek estimators $\hat\rho_1$ and $ \hat\rho_0$ are substituted for unknown parameters $\rho_1$ and $\rho_0$. For a pair-matched, non-clustered RCT, this estimator simplifies to $\B^{-2}\spb (\hat\tau_\p - \hat\tau)^2 , $ where $\hat\tau_\p$ is the difference between treatment and control outcomes of pair $\p$. When $\B$ is large, this estimator approximates the classical variance estimate for \eqref{eq:sadm} in paired designs \citep{imbens2011experimental, fogarty2018finely}: $\{\B (\B-1)\}^{-1} \spb (\hat\tau_\p - \hat\tau)^2 $.

The following Theorem indicates that estimator \eqref{eq:case1varest} has a consistent relationship with the variance of the asymptotic linearization of $\hat\tau$, denoted $\tilde\tau=\tilde\rho_1-\tilde\rho_0$, where the linearization $\tilde\theta_n=(\tilde\rho_1,\tilde\rho_0)^\top$ is defined in Section \ref{sec:m-est}. It assumes mild regularity conditions ensuring that no single stratum dominates the variance sum, and that the small-stratum estimator $\nu^{(s)}$ is only applied for strata with bounded average outcomes. Without loss of generality, assume that $\rho_1$ and $\rho_0$ converge to finite values as $\B \to \infty$, which can be achieved by appropriately scaling the potential outcomes.

\begin{theorem}
\label{prop:cons1}
Suppose that $\nu^{(s)}_\p$ is only applied for small strata with bounded $\np$ and weighted outcome means $\sump \wps \ypsk / (\sump \wps),$ $ z=1,0$. Under regularity conditions specified in the Supplementary Materials (Section \ref{sec:sm-cons1-conditions}), the estimator is asymptotically conservative in the sense that there exists a sequence $\{\xi_\B, \ \B\ge 1\}\subseteq \mathbb{R}$ such that $(\ssn\wps) \hat{v}(\hat\tau) / \xi_\B \xrightarrow{p} 1$ as $\B\to\infty$ and $\liminf_{\B\to\infty}\big\{\xi_\B/\var\big( (\ssn\wps)^{-1/2}\tilde\tau \big) \big\} \ge 1,$ where equality holds when the unit treatment effect, $\ypst-\ypsc$, is constant for all units. 
\end{theorem}

\subsection{Score-based testing for improved coverage}
\label{sec:hypothesis}
Given a variance estimator, a natural approach is a Wald-type confidence interval, $\hat\tau \pm z_{1-\alpha/2} \sqrt{\hat\nu(\hat\tau)}$. However, our simulations show that this interval undercovers when the number of clusters is moderate. The issue arises because the variance estimator $\hat\nu(\hat\tau)$ is evaluated at the estimated values $(\hat\rho_1,\hat\rho_0)$, which may differ substantially from the values implied by the null. Even with an $n-2$ degrees-of-freedom adjustment---reflecting that the linear model producing $\hat\tau$ includes an intercept and a binary treatment indicator---the Wald interval continues to undercover in small samples.
 
To mitigate this problem, we propose a score-type test that evaluates the variance estimator under null hypotheses of the form $H$: $\big( \sum_{i:\zps=z} \wps \big)^{-1} \sum_{i:\zps=z} \wps \tau_i = \tau_0$ for $z = 1,0$.
This null fixes the SATE at $\tau_0$ while also restricting the weighted average of cluster-level effects within the realized treatment and control groups to $\tau_0$. In the spirit of Rosenbaum's \citeyearpar{rosenbaum2001effects} hypotheses on attributable effects, this \emph{equal attribution hypothesis} imposes no restrictions on means over groups that could have been selected for treatment but were not; it is strictly weaker than the constant-effect hypothesis $H:\tau_i=\tau_0$ for all $i$. It is the weakest hypothesis under which 
$\theta_{H}$ can be identified from observed data alone, allowing us to employ the variant of \eqref{eq:case1varest} replacing estimators $\hat\rho_1$ and $\hat\rho_0$ with fixed constants $\rho_{1(H)}$ and $\rho_{0(H)}$.
By Theorem~\ref{prop:cons1}'s proof,
this modified \eqref{eq:case1varest} is nonnegatively biased for $\var_{H}(\tilde\tau)$, at samples of any size.  

Define the test statistic: $t=(\ssn\wps)^{-1}\big| \spb [\bgamobst (\rho_{1(H)}) -  \bgamobsc (\rho_{0(H)})] \big| / \big\{ \hat{v}_{(H)}(\hat\tau) \big\}^{1/2}$, where $\rho_{1(H)}$ and $\rho_{0(H)}$ are null imputations of $\rho_1$ and $\rho_0$, obtained by shifting observed treatment condition means and taking weighted averages:
\begin{equation}
    \rho_{1(H)} = \frac{\sum_{i:\zps=0}\wps}{\ssn\wps} \cdot \left( \frac{\sum_{i:\zps=0}\wps\ypsc}{\sum_{i:\zps=0}\wps} + \tau_0 \right) + \frac{\sum_{i:\zps=1}\wps }{\ssn\wps}\cdot \frac{\sum_{i:\zps=1}\wps\ypst}{\sum_{i:\zps=1}\wps},
    \label{eq:rho1H}
\end{equation}
\begin{equation}
    \rho_{0(H)} = \frac{\sum_{i:\zps=0}\wps}{\ssn\wps} \cdot \frac{\sum_{i:\zps=0}\wps\ypsc}{\sum_{i:\zps=0}\wps} + \frac{\sum_{i:\zps=1}\wps }{\ssn\wps}\cdot \left(\frac{\sum_{i:\zps=1}\wps\ypst}{\sum_{i:\zps=1}\wps} - \tau_0 \right),
    \label{eq:rho0H}
\end{equation}
and $\hat{v}_{(H)}(\hat\tau)= (\ssn\wps)^{-2}\spb \nu_\p^{(e_\p)}(\rho_{1(H)},\rho_{0(H)})$ is the variance estimation of $t$'s numerator. 
Under $H$ and the conditions of Theorems \ref{thm1} and \ref{prop:cons1}, the statistic $t$ is approximately Normal with variance at most 1. The level-$\alpha$ test rejects $H$ when $\lvert t \rvert > z_{1-\alpha/2}$, where $z_{1-\alpha/2}$ is the upper $\alpha/2$ quantile of the standard Normal. A $(1-\alpha)$ confidence interval (CI) for the SATE includes all $\tau_0$ not rejected by this test. This procedure resembles a generalized score test \citep[e.g.,][]{boos2013essential}, differing only by using the ratios ${\sum_{i:\zps=z}\wps\ypsk}/{\sum_{i:\zps=z}\wps} , z=1,0$ instead of the corresponding H\'ajek ($\pi^{-1}$-weighted) estimators in \eqref{eq:rho1H} and \eqref{eq:rho0H}.

\subsection{Complementary regime: a few large strata}
\label{sec:inference-large-strata}
Sections \ref{sec3-1}--\ref{sec:hypothesis} consider cRCTs with many small strata. The Supplementary Material (Section \ref{sec4}) establishes analogous results for the complementary regime of a fixed number of strata with growing sizes: $(\ssn\wps)^{-1/2}(\hat\tau-\tau)$ converges to the same limiting normal distribution as in Theorem~\ref{thm1}, and the variance estimator formed by aggregating \eqref{eq:est1} remains asymptotically conservative, as formalized in a theorem parallel to Theorem~\ref{prop:cons1}.  

\subsection{Covariate adjustment}
\label{sec:covadj}
\label{sec:cov-adj-theory}
Covariate adjustment can improve efficiency without sacrificing the design-based properties established in Sections~\ref{sec3-1}--\ref{sec:inference-large-strata}. We show that, with centered covariates, the adjusted estimator continues to target the SATE, and the variance estimators of Section~\ref{sec3-var} apply with only one modification: replacing each $\gamsz$ with a covariate-residualized counterpart $\tgamsz$.

Let $\xsj\in\mathbb{R}^p$ be individual- or cluster-level covariates for individual $j$ in cluster $i$, centered so that $\ssn\sumij \wsj\xsj = 0$. Each individual carries weight $\wsj\ge0$ with $\sumij\wsj = \wps$, where $\mi$ is the number of individuals in cluster $i$, so that individual weights aggregated to cluster weights. Let individual-level outcomes be $\ysjz$ with $\sumij \wsj \ysjz / \wps = \ypsk$. 
The covariate-adjusted H\'ajek effect estimator $\hat\tau_{\mathrm{adj}} = \hat{\rho}_{1,\mathrm{adj}} - \hat{\rho}_{0,\mathrm{adj}}$ is obtained from WLS regression of $y_{\s j\zps}$ on a treatment indicator and $\xsj$, with weights $\wsj/\pi_{b\zps}$:
\begin{equation}
    y_{\s j\zps} \approx {\rho}_{\zps,\mathrm{adj}}  + \xsj^T {\beta}.
    \label{eq:covadj-wls}
\end{equation}

The analysis of $\hat\tau_{\mathrm{adj}}$ mirrors Section~\ref{sec:m-est}. Define residualized weighted outcomes $\tgamsz (\beta,r_z) = \wsj(\ysjz - \xsj^T \beta - r_z) $ in parallel with $\gamsz (r_z)=\wps(\ypsk-r_z)$, and define estimating functions
\begin{equation*}
  \tilde{\psi}_{\s j}(\beta,r_1,r_0)    =
   \frac{\tilde{\gamma}_{ijZ_{i}}(\beta, r_{Z_{i}})}{\pi_{bZ_{i}}}
 \left(
   \begin{array}{c}
     \xsj \\ \Indt\\ \Indc
   \end{array}
\right) ,\quad
   \tilde \Phi_n(\beta, r_1,r_0) =\ssn\sumij \tilde{\psi}_{\s j},
\end{equation*}
so that $(\hat\beta,\hrhotadj, \hrhocadj)$ solves $\tilde{\Phi}_n(\beta, r_1,r_0) =0$.
Barring collinearities, there exist unique parameters $(\tilde\beta, \rho_{1,\mathrm{adj}}, \rho_{0,\mathrm{adj}})$ satisfying $\E\tilde{\Phi}_n(\tilde\beta, \rho_{1,\mathrm{adj}}, \rho_{0,\mathrm{adj}}) =0$. We first show that $\hat\tau_{\mathrm{adj}}$ estimates the same target as $\hat\tau$. Since $\sum_{j=1}^{m_{i}}\tilde{\gamma}_{ij z}(\beta, r_{z}) -  \gamsz(r_{z})=\sum_{j=1}^{m_{i}} w_{ij}x_{ij}^{T}\beta$ for each $i$ and $z=0,1$, covariate centering entails
$\ssn\sumij \E[\pi_{bz}^{-1}\{\tilde{\gamma}_{ijz}(\beta, r_{z}) -
 \gamsz(r_{z})\} \Indk] =0$ for any $\beta$ and $r_{z}$. Therefore, $\E\{\tilde \Phi_n(\beta, r_1,r_0)\}$ and $\E\{\Phi_n(r_1,r_0)\}$ agree in their first $r_1$ and $r_0$ components, which gives
$(\rho_{1,\mathrm{adj}}, \rho_{0,\mathrm{adj}}) = (\rho_{1}, \rho_{0})$: the estimator $\hat\tau_{\mathrm{adj}}$ targets the SATE.

Covariate centering also makes the ``bread'' matrix $\E\{\nabla_{\beta,r_{1}, r_{0}} \tilde \Phi_n\}$ block diagonal between $\beta$ and $(r_1,r_0)$, since the off-diagonal block equals $\ssn \sumij w_{ij}x_{ij}^\top =0$ by centering. Consequently, the asymptotic behavior of $(\hat\rho_{1,\mathrm{adj}}, \hat\rho_{0,\mathrm{adj}})$ is not influenced by the estimation of $\beta$: in the linearization of $\hat\theta_n$, the $(r_1,r_0)$ components depend only on the $(r_1,r_0)$ components of $\tilde\Phi_n$, not on its $\beta$ component.
Define $\tilde{S}_{\p, z}^{2}= \tilde S_{\p,z}^2(\tilde\beta,\rho_z) $ and $\tilde{S}_{\p,01} = \tilde S_{\p,01}(\tilde\beta,\rho_1,\rho_0)$, in parallel with \eqref{eq:var-gamma}, by substituting $\tgamsz=\tgamsz(\tilde\beta,\rho_z)$ for $\gamsz$.
Define $\tilde{\sigma}_{1}^{2}$, $\tilde{\sigma}_{0}^{2}$ and $\tilde{\sigma}_{01}$ from $\tilde{S}_{\p,1}^{2}$, $\tilde{S}_{\p,0}^{2}$ and $\tilde{S}_{\p,01}$ as \eqref{eq:sigmas} defined $\tsigma_1^2$, $\tsigma_0^2$ and $\tsigma_{01}$ in terms of ${S}_{\p,1}^{2}$, ${S}_{\p,0}^{2}$ and ${S}_{\p,01}$. 
We obtain under conditions similar to those required for Theorems~\ref{thm1} and \ref{thm2}, $(\ssn\wps)^{-1/2}(\hat\tau_{\mathrm{adj}} - \tau) \simd \mathcal{N}(0,\tilde \sigma_1^2 - 2\tilde \sigma_{01} + \tilde\sigma_0^2)$.
The variance estimation and hypothesis testing methods of Section~\ref{sec:hypothesis} apply directly to $\hat\tau_{\mathrm{adj}}$, replacing $\gamsz(\hat\rho_z)$ with $\sumij \tilde{\gamma} (\hat\beta,\hat\rho_{z,\mathrm{adj}})$ in \eqref{eq:est1} and \eqref{eq:est3}. This substitution is the only modification required.

The centering of covariates is optional if interest is only in $\hat{\tau}_{\mathrm{adj}}$ and its variance. By the Frisch-Waugh-Lovell theorem, $\hat\tau_{\mathrm{adj}}$, $\hat\beta$, and the weighted residuals are invariant to centering. However, centering is recommended when combining covariate adjustment with the method of Section~\ref{sec:hypothesis}, which fixes $\rho_{1(H)}$ and $\rho_{0(H)}$ rather than estimating them jointly with $\beta$.
With centered covariates, the $\mathrm{ATE}_{\mathrm{interact}}$ estimator of \citet{lin2013} adapts to the cRCT setting by adding $\Indt \cdot \xsj^T {\gamma}$ or $\Indc \cdot \xsj^T {\gamma}$ to the right-hand side of \eqref{eq:covadj-wls}, giving $\hat{\tau}_{\mathrm{interact}}=\hat\rho_{1, \mathrm{interact}}-\hat\rho_{0, \mathrm{interact}}$.  Again \eqref{eq:case1varest} provides variance estimates,
this time with substitution of
$\sumij\tgamsz\{(\hat\beta, \hat\gamma, \hat\rho_{1}, \hat\rho_{0})_{\mathrm{interact}}\}$  for $\gamsz(\hrhozadj)$ in \eqref{eq:est1} and \eqref{eq:est3}. 

\section{Simulation Studies} 
\label{secsimu}

\subsection{Performance of SATE estimators under size-effect correlation}
\label{sec:ATEsimu}

Now we compare the H\'ajek effect estimator (HA) with three alternatives: the IKN effect estimator (IKN), the fixed effects estimator (FE), and the Horvitz-Thompson estimator (HT). The HT is defined as 
$ (\ssn \wps)^{-1}\spb\left\{  \ppt^{-1} \sump \Indt \wps\ypst -  \ppc^{-1} \sump \Indc \wps\ypsc \right\}. $
We vary the sample size and generate cluster sizes from an i.i.d.\ Gamma distribution (following \cite{Tong2022accounting}). Consider two cases of stratification: clusters are matched or not matched on size within strata. Potential outcomes are generated from a multivariate normal distribution, referencing \cite{pashley2021insights}:
\begin{equation}
    \label{eq:potential_gen}
    \left(\begin{array}{c}
        \ypsc  \\
        y_{\s1}\\
    \end{array}\right) \sim MV \mathcal{N} \left( \left( \begin{array}{c}
        \Phi^{-1}(1-\frac{i}{1+n})  \\
        \Phi^{-1}(1-\frac{i}{1+n}) + \ti
    \end{array} \right),
    \left(\begin{array}{ccc}
        1 & 0  \\
        0 & 1  
    \end{array}\right) \right),\ i=1,\ldots,n,
\end{equation}
where $\Phi^{-1}$ is the quantile function of standard normal distribution. We set $\ti$ to either a constant or correlated with cluster size. The simulation employs a $2\times2$ factorial design, varying (1) cluster size variation within strata and (2) size-effect correlation. After generating cluster sizes and outcomes, we fix them across 10,000 treatment allocations, each randomizing half of each stratum's clusters to treatment. All four estimators are computed in each iteration. Figure~\ref{fig:hajek-ab} shows selected empirical distributions; see the Supplementary Materials (SM) for full details and results. 
\begin{figure}[!ht]
    \centering
    \includegraphics[width=\textwidth]{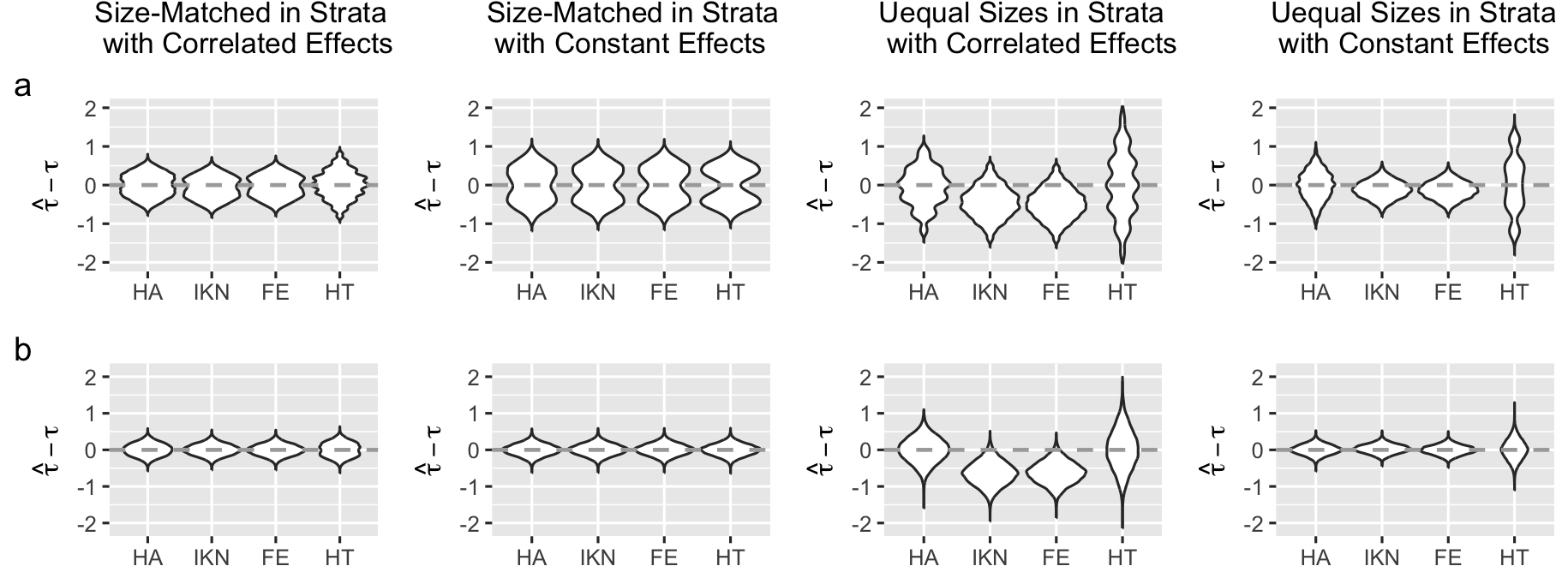}
    \caption{Empirical distributions of the H\'ajek effect estimator (HA), the IKN effect estimator (IKN), the fixed effects estimator (FE), and the Horvitz-Thompson estimator (HT) centered at the SATE. Each column represents a design, described at the top. Each row represents a sample size: a, 10 strata of 2 clusters; b, 50 strata of 2 clusters.}
    \label{fig:hajek-ab}
\end{figure}

No estimator dominates uniformly in MSE. HA has the clear advantage in paired designs with cluster size variation within pairs and correlation between treatment effects and cluster size. It performs well in other cases with a small stratum size. HT is unbiased but carries the largest MSE in most cases  \citep[as noted in][]{middleton2015unbiased, mann2024general}, improving only when strata are large and clusters are size-matched. IKN and FE behave similarly to each other and to HA when cluster sizes are matched or stratum size exceeds 10, but incur larger biases when sizes are unmatched, particularly with heterogeneous effects and small strata. The one scenario where alternatives to HA have an edge, constant effects with within-pair size variation, is also where that edge is smallest.

\subsection{Variance estimation and confidence interval coverage}
\label{secsimu-small}

We consider two settings without cluster size matching: many small strata ($\B = 10, 50$; $\np = 2, 4$) and two larger strata (size 10 or 50). Potential outcomes follow model \eqref{eq:potential_gen} with $\ti = 5 + f_i(\alpha,\beta)$, where $\alpha$ and $\beta$ control treatment effect heterogeneity across and within strata; when one varies, the other is fixed at 0. The equal attribution hypothesis fails whenever $\alpha \ne 0$ or $\beta \ne 0$.
We study both balanced and unbalanced treatment assignment. In balanced designs, half of the clusters in each stratum are assigned to treatment, and we evaluate $\hat{v}_{\text{large}}(\hat\tau) = \spb \np^2 \nu^{(l)}_{\p}(\hat\rho_1, \hat\rho_0)/\facB^2$ (when $\npt, \npc \geq 2$) and $\hat{v}_{\text{small}}(\hat\tau) = \spb \np^2 \nu^{(s)}_{\p}(\hat\rho_1, \hat\rho_0)/\facB^2$. In unbalanced designs with $\np = 4$, we assign one treated cluster in strata $\p = 1, \ldots, \B/2$ and two treated clusters in strata $\p = \B/2 + 1, \ldots, \B$, allowing evaluation of the ``mixed" estimator
\(
\hat{v}_{\text{mixed}}(\hat\tau) =
[
\sum_{\p\le \B/2} \np^2 \nu^{(s)}_{\p}(\hat\rho_1, \hat\rho_0)
+
\sum_{\p>\B/2} \np^2 \nu^{(l)}_{\p}(\hat\rho_1, \hat\rho_0)
]/\facB^2.
\)
When $\np \ge 10$, one-fifth of each stratum is treated and only $\hat{v}_{\text{large}}(\hat\tau)$ is used.
After randomization, compute the H\'ajek effect estimator, its variance estimators, and 95\% confidence intervals (CIs) by inverting the score-based and Wald-type tests of Section \ref{sec:hypothesis}. As a model-based benchmark, we also compute the HC2 heteroskedasticity-robust variance estimator and the corresponding Wald CI with $n-2$ degrees of freedom, exploiting the equivalence of $\hat\tau$ to the coefficient of $z_i$ in a weighted least squares regression of $y_{iz_i}$ on $z_i$ with weights $\wps/\pi_{[i]z_i}$. We repeat the simulations 10,000 times and evaluate relative bias ($ \{\mean(\hat{v}(\hat\tau)) - \var(\hat\tau)\} /  \var(\hat\tau)$, $\var(\hat\tau)$ being the empirical variance), SD, and empirical CI coverage; see the SM for details.

Figure \ref{fig:var_bias} shows the relative biases of variance estimators for small-strata designs. Overall, the proposed estimators track the variation of the H\'ajek effect estimator well. The estimator $\hat v_{\text{small}}(\hat\tau)$ is more sensitive to treatment effect heterogeneity across strata than $\hat v_{\text{large}}(\hat\tau)$, while both are largely insensitive to within-stratum heterogeneity. HC2 is more conservative, partly because it does not fully capture the precision gains from stratification \citep{bai2024primer}. Differences in the SDs of $\hat{v}_{\text{small}}(\hat\tau)$ and $\hat{v}_{\text{large}}(\hat\tau)$ are small (Figure \ref{fig:var_sd}, SM).
\begin{figure}[!htb]
    \centering
    \includegraphics[clip,width=\textwidth]{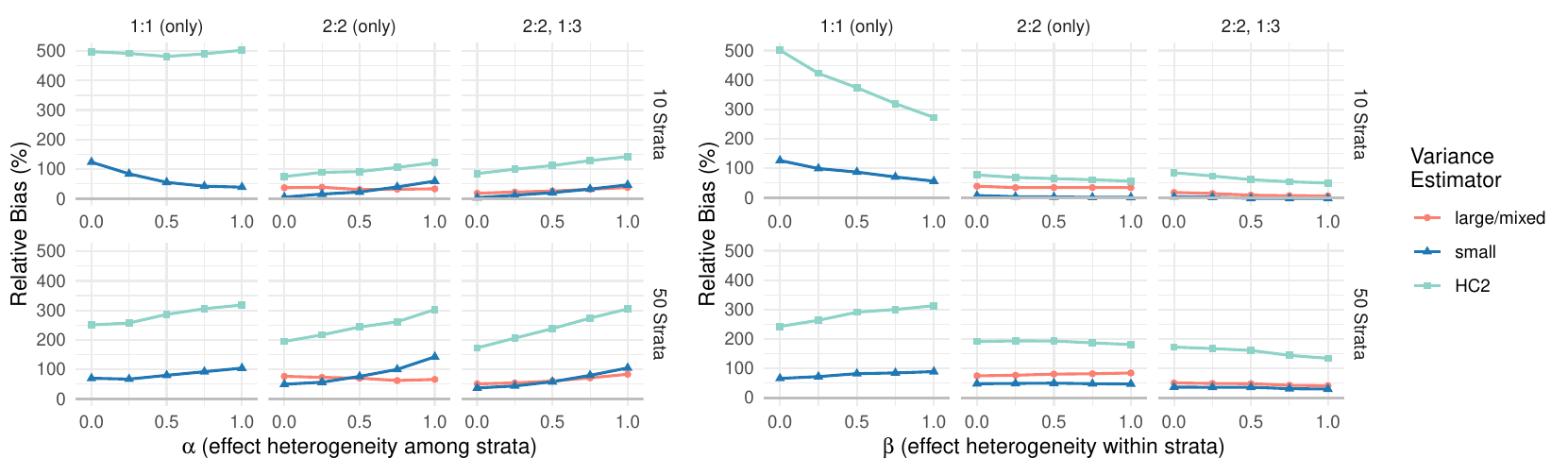}
    \caption{Relative biases of variance estimators plotted against $\alpha$ and $\beta$. Each column represents a stratum design and each row represents a number of strata. }
    \label{fig:var_bias}
\end{figure}

Figure \ref{fig:var_cvg} presents the empirical coverage probabilities of the CIs. For matched-pair designs, CIs based on $\hat v_{\text{small}}(\hat\tau)$ attain the nominal coverage. For balanced designs with $\np=4$, Wald CIs using $t_{n-2}$ quantiles and $\hat v_{\text{large}}(\hat\tau)$ are also close to or above 95\%. In unbalanced designs with a small sample, Wald-type CIs tend to under-cover, whereas score-based CIs maintain adequate coverage without excessive length (Figure \ref{fig:var_cil}, SM). With 50 strata, all methods perform well, consistent with Theorem \ref{prop:cons1}.
\begin{figure}[!htb]
    \centering
    \includegraphics[clip,width=\textwidth]{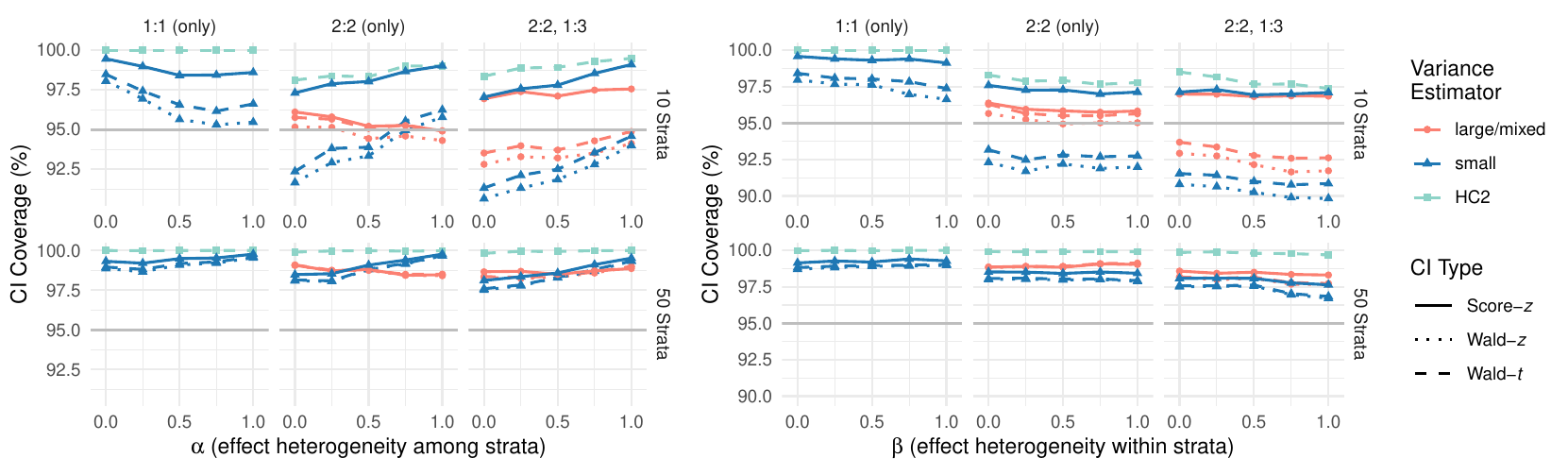}
    \caption{Empirical coverage probabilities of 95\% confidence intervals (CIs) plotted against $\alpha$ and $\beta$ under different designs and sample sizes. The solid lines represent score-based CIs, which consistently achieve empirical coverage near or above 95\%. }
    \label{fig:var_cvg}
\end{figure}

\label{sec:simu-large}

With moderate or large strata, the proposed variance estimator $\hat v_{\text{large}}(\hat\tau)$ remains valid and is less conservative than HC2, though neither uniformly dominates in SD (Figure \ref{fig:c2_sd}, SM). Figure \ref{fig:c2_cvg} shows that all CIs attain about 95\% coverage in balanced or large-stratum designs. In unbalanced, small-strata settings, coverage declines because both $\hat\tau$ and its variance estimates are more variable. Score-based CIs outperform Wald-type CIs based on $\hat v_{\text{large}}(\hat\tau)$: they attain 95\% coverage under homogeneous cluster treatment effects and degrade only mildly when the equal attribution hypothesis fails.
\begin{figure}[!ht]
    \centering
    \includegraphics[width=\textwidth]{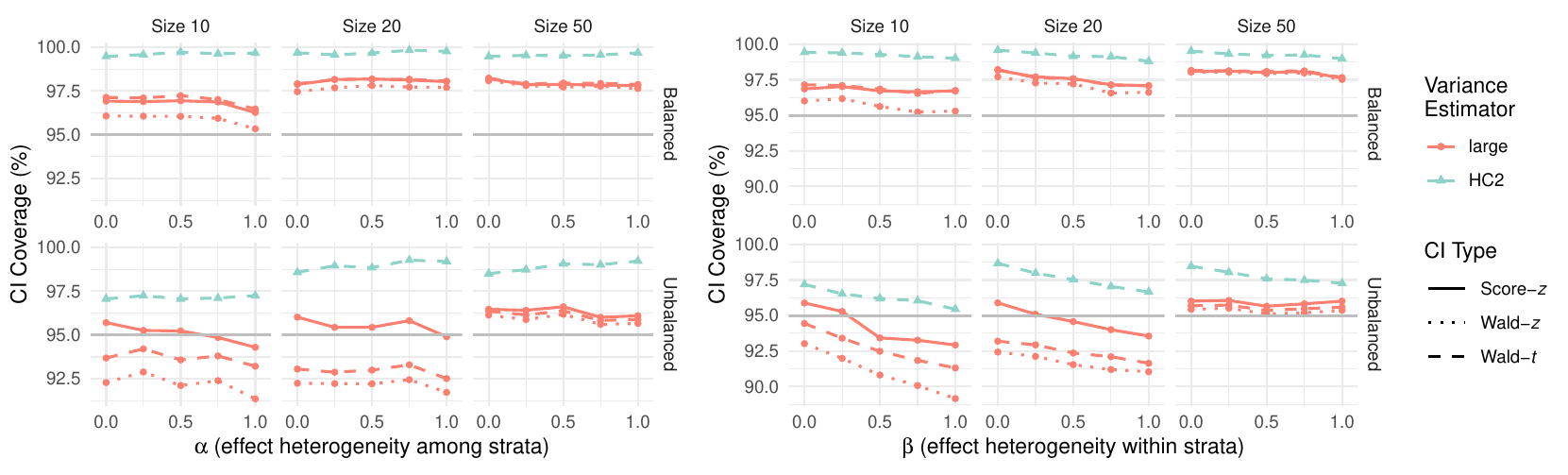}
    \caption{Empirical coverage probabilities of 95\% confidence intervals. Each column represents a stratum size and each row represents a treatment assignment design. The solid lines represent score-based CIs. }
    \label{fig:c2_cvg}
\end{figure}

Based on these results, we recommend using $\nu^{(l)}_\p$ when applicable and $\nu^{(s)}_\p$ otherwise to estimate the variance of $\hat\tau$, denoted $\hat v(\hat\tau)$ without the ``large," ``mixed," or ``small" distinctions (can be implemented using the \texttt{propertee} package \citep{propertee}). Varying the constant in $\ti$ of the outcome model \eqref{eq:potential_gen} does not significantly affect the results, so coverage above 95\% when the SATE is zero suggests approximate type I error control. The score-based CI has better coverage than Wald CIs based on $\hat v(\hat\tau)$, especially in small samples or under unbalanced treatment assignment. Similar behavior is expected for the covariate-adjusted H\'ajek estimator $\hat\tau_{\mathrm{adj}}$ in Section~\ref{sec:cov-adj-theory}; additional simulations appear in Section~\ref{sec:sm-minecraft-eg} of the SM.

\section{Empirical example: a study of child nutrition}
\label{secappl}

\subsection{Analysis of the OSNAP trial} 

We illustrate the proposed method using a pair-matched cRCT from a child nutrition study. Insufficient water consumption can impair cognitive function and overall health. In 2011, a cRCT conducted in Boston evaluated the Out of School Nutrition and Physical Activity (OSNAP) intervention, designed to encourage water intake and improve nutrition practices among children in after-school programs \citep{Giles2012, DVN_YEUCUP_2015}. Twenty after-school sites were paired based on site characteristics, with one site in each pair randomly assigned to treatment and the other to control. We refer to \cite{Giles2012} for more details.

Table~\ref{tab1:outcomes} summarizes the study data. For each site, the table reports the number of enrolled children (cluster size) and the outcome, defined as the change in daily ounces of water served per child from pre- to post-intervention. Cluster sizes vary substantially, ranging from 31 to 320 children. In forming matched pairs, the investigators appear to have prioritized other site characteristics over size, which motivates the issues discussed in Section \ref{sec:inconsistency}.

\begin{table}[t]
\caption{Site sizes and outcomes of the 10 matched pairs in the OSNAP study. \emph{Size} is the number of children enrolled at each site, and \emph{Outcome} is the change in daily ounces of water served per child from pre- to post-intervention. For each pair, the last two columns report the total size of the matched pair and the treated-minus-control outcome difference. }
\label{tab1:outcomes}
\centering
\begin{tabular}{cccccccccc}
\hline
\multirow{2}{*}{Pair} && \multicolumn{2}{c}{Treatment site} && \multicolumn{2}{c}{Control site} && Total & Paired \\
&& Size & Outcome && Size & Outcome && size & difference \\
\cline{1-1} \cline{3-4} \cline{6-7} \cline{9-10}
1  && 110 & 0.00 && 320 & 0.01  && 430 & -0.01 \\
2  && 142 & 0.05 && 68  & -0.02 && 210 & 0.07 \\
3  && 75  & 0.11 && 80  & 0     && 155 & 0.11 \\
4  && 43  & 0.05 && 95  & 0     && 138 & 0.05 \\
5  && 52  & 0.06 && 67  & 0.02  && 119 & 0.04 \\
6  && 55  & 0.08 && 38  & 0     && 93  & 0.08 \\
7  && 39  & 0.09 && 46  & 0.03  && 85  & 0.06 \\
8  && 38  & 0.13 && 39  & 0     && 77  & 0.13 \\
9  && 36  & 0.04 && 40  & -0.10 && 76  & 0.14 \\
10 && 31  & 0.09 && 34  & 0     && 65  & 0.09 \\
\cline{1-1} \cline{3-4} \cline{6-7} \cline{9-10}
\multicolumn{1}{l}{Size-weighted mean} &&& 0.06$^{1}$ &&& 0.00$^{1}$ &&& 0.05$^{2}$ \\
\hline
\end{tabular}
\vspace{2pt}
\raggedright
\footnotesize
$^{1}$ Cluster-size-weighted outcome means under treatment and control. \\
$^{2}$ A pair-total size-weighted mean of paired differences.
\end{table}
The estimand of interest is the SATE for children in the study, defined as the mean difference in potential outcomes across all 1,448 children. Several estimators introduced earlier can be applied to this setting. The Horvitz-Thompson (HT) estimator computes a difference of cluster-size-weighted outcome totals under treatment and control using design weights. The IKN estimator corresponds to the pair-size-weighted mean of within-pair differences (0.05 in Table~\ref{tab1:outcomes}). The H\'ajek estimator instead takes the difference of cluster-size-weighted means under treatment and control, which are 0.06 and 0.00 in the table.

Using the H\'ajek estimator, we estimate the SATE for the change in daily water consumption per child to be 0.058 ounces, with estimated standard error 0.015. A score-based 95\% confidence interval is $(0.032, 0.129)$, while the Wald $t$-interval is $(0.027, 0.090)$. Both intervals exclude zero, indicating a statistically significant positive effect of the intervention on daily water consumption per child at the 5\% level.
\subsection{SATE estimator comparison and variance estimation calibrated to the OSNAP study}
\label{sec:OSNAPsimuATE}

We demonstrate a scenario where the H\'ajek estimator outperforms alternative SATE estimators by simulating a design that mirrors the underlying pair-level fixed-effects assumptions of \citet{Giles2012}. In their study, Giles et al. estimated that the effect of treatment on total site-level water consumption (ignoring variation in the number of children per site) was 3.6 ounces. Following their model, we impute unobserved site total water consumptions assuming a constant effect of 3.6. Then we convert site totals to per-child outcomes by dividing by program size. This imputation induces a correlation (-0.78) between the 20 programs' sizes and their treatment effects. 

For the original 10 pairs, we compute SATE estimators (Section~\ref{sec:ATEsimu}) across all $2^{10}$ possible treatment assignments, yielding exact finite-sample biases, standard deviations, and root mean square errors (columns 1-3, Table~\ref{tab1:biases}). To evaluate performance at scale, we extend the experiment to 1,000 pairs by replicating each original pair 100 times. These synthetic pairs retain the original sizes and potential outcomes but receive independent treatment randomizations, preserving the correlation structure while reflecting a large-scale trial. We conduct 10,000 Monte Carlo simulations for the scaled experiment. In each iteration, treatments are randomized independently across pairs and SATE estimates are computed. Results are in Table~\ref{tab1:biases} (columns 4-6). HA achieves the lowest RMSE in both small and large samples. 
\begin{table}[ht]
\begin{center}
\begin{minipage}{\textwidth}
    \caption{Biases, standard deviations (SDs), and root mean squared errors (rMSEs) of the H\'ajek effect estimator (HA), Imai-King-Nall effect estimator (IKN), the fixed effects estimator (FE), and Horvitz-Thompson estimator (HT). Results for 10 pairs are exact over all $2^{10}$ treatment assignments; results for 1000 pairs are based on a synthetic experiment in which each pair is replicated 100 times under independent treatment assignment, with 10,000 Monte Carlo simulations. Each scenario's lowest rMSE is \underline{underlined}.
    }
    \label{tab1:biases}
    \centering
    \begin{tabular}{ccccccccc}
    \hline
     \multirow{2}{*}{Estimator} && \multicolumn{3}{c}{10 Pairs} && \multicolumn{3}{c}{1000 Pairs} \\
    && Bias & SD & rMSE && Bias & SD & rMSE \\ \cline{1-1} \cline{3-5} \cline{7-9}
     HA && 0.001 & 0.009& \underline{0.009} && 0.000 & 0.0008 & \underline{0.0008} \\ 
     IKN && 0.003& 0.012 & 0.013 && 0.003 & 0.0012 & 0.0036 \\
     FE && 0.006 & 0.012 &0.013 &&  0.006 & 0.0012 & 0.0060 \\
     HT && 0.000 & 0.013 & 0.013 && 0.000 & 0.0013 & 0.0013 \\
     \hline
\end{tabular}
\end{minipage}
\end{center}
\end{table}

The bias of HA becomes negligible at scale, consistent with previous conclusions. IKN and FE have larger biases that contribute to their MSEs. When scaled, their biases are retained but SDs decrease by a factor of 10. This causes bias to dominate RMSE. 
While HT is unbiased, HA's bias-variance trade-off is superior in practice. We therefore recommend the H\'ajek estimator for paired cRCTs with heterogeneous cluster sizes.

Besides SATE estimators, we also assess variance estimation using counterfactuals imputed from a linear model; $\hat{v}(\hat{\tau})$ outperforms HC2 in both bias and standard deviation, with all confidence intervals achieving above 95\% coverage. See Section~\ref{sec:SM-var-OSNAP} of the SM for details.

\section{Discussion}
\label{secdisc}

The inconsistency of stratum-averaging estimators in clustered trials stems from the within-stratum covariance between cluster sizes and treatment effects: when nonzero, stratumwise differences-in-means systematically mis-weight clusters, 
producing an $O(1)$ bias. Our simulations confirm that the H\'ajek estimator's advantage is most pronounced in finely stratified designs with size-heterogeneous clusters and heterogeneous treatment effects. The score-type test improves coverage over Wald-type intervals when cluster numbers are moderate, and covariate adjustment fits into the framework with one modification: centering induces block diagonality between the treatment and covariate components of the estimating equations, so the same variance estimator applies with residuals replacing raw outcomes.

We recognize several limitations. We have not studied the use of our covariance adjustment method in high-dimensional regimes. This would call for a different asymptotic analysis, and likely different techniques. 
Present methods may combine with techniques of \cite{Chung2013}, \cite{cohenFogarty22} and \cite{caugheyDafoeLiMiratrix23} to give finite-sample exact tests under constant effects while maintaining large-sample validity for inference about potentially heterogeneous effects, a topic to be explored in future work.

\if0\blind
{
\bigskip
\begin{center}
{\large\bf ACKNOWLEDGMENTS}
\end{center}
The authors thank Jake Bowers, Colin Fogarty, Johann Gagnon-Bartsch, and Joshua Wasserman for valuable feedback. 
} \fi

\bigskip
\begin{center}
{\large\bf SUPPLEMENTARY MATERIAL}
\end{center}

The supplementary materials contains additional regularity conditions, supplementary theoretical results, proofs, and supporting simulation and data example details.

\bibliographystyle{agsm}

\bibliography{causal}

@article{imai2009essential,
	Author = {Imai, Kosuke and King, Gary and Nall, Clayton},
	Date-Added = {2016-11-06 20:28:25 +0000},
	Date-Modified = {2016-11-06 20:28:25 +0000},
	Journal = {Statistical Science},
	Number = {1},
	Pages = {29--53},
	Publisher = {Institute of Mathematical Statistics},
	Title = {The essential role of pair matching in cluster-randomized experiments, with application to the Mexican universal health insurance evaluation},
	Volume = {24},
	Year = {2009}}

@book{cox1958planning,
	Author = {Cox, David Roxbee},
	Date-Added = {2016-08-17 17:36:16 +0000},
	Date-Modified = {2016-08-29 06:06:29 +0000},
	Publisher = {New York: Wiley},
	Title = {Planning of Experiments},
	Year = {1958}}

@book{imbens2015causal,
	Author = {Imbens, G. W. and Rubin, D. B.},
	Date-Added = {2016-07-20 02:43:56 +0000},
	Date-Modified = {2016-08-24 06:30:05 +0000},
	Publisher = {New York: Cambridge University Press},
	Title = {Causal {I}nference for {S}tatistics, {S}ocial, and {B}iomedical {S}ciences: {A}n {I}ntroduction},
	Year = {2015}}

@article{rubin1990comment,
	Author = {Rubin, D. B.},
	Date-Added = {2016-07-20 02:43:56 +0000},
	Date-Modified = {2016-07-20 02:43:56 +0000},
	Journal = {Statistical Science},
	Pages = {472--480},
	Publisher = {JSTOR},
	Title = {Comment: Neyman (1923) and causal inference in experiments and observational studies},
	Volume = {5},
	Year = {1990}}

@article{lin2013,
	Author = {Lin, W.},
	Date-Added = {2016-07-20 02:43:56 +0000},
	Date-Modified = {2016-07-20 02:43:56 +0000},
	Journal = {The Annals of Applied Statistics},
	Pages = {295--318},
	Publisher = {The Institute of Mathematical Statistics},
	Title = {{Agnostic notes on regression adjustments to experimental data: Reexamining Freedman's critique}},
	Volume = {7},
	Year = {2013}}

@book{cochran1977,
	Author = {W. G. Cochran},
	Date-Added = {2016-07-20 02:43:56 +0000},
	Date-Modified = {2016-07-20 02:43:56 +0000},
	Edition = {3rd},
	Publisher = {New York: Wiley},
	Title = {Sampling Techniques},
	Year = {1977}}

@article{hansen2008covariate,
	Author = {Hansen, B. B. and Bowers, J.},
	Date-Added = {2016-07-20 02:31:58 +0000},
	Date-Modified = {2016-07-20 02:31:58 +0000},
	Journal = {Statistical Science},
	Pages = {219--236},
	Publisher = {JSTOR},
	Title = {Covariate balance in simple, stratified and clustered comparative studies},
	Volume = {23},
	Year = {2008}}

@article{Li2017,
	title={General forms of finite population central limit theorems with applications to causal inference},
  author={Li, Xinran and Ding, Peng},
  journal={Journal of the American Statistical Association},
  volume={112},
  number={520},
  pages={1759--1769},
  year={2017},
  publisher={Taylor \& Francis}}

@article{fogarty2018finely,
	Author = {Fogarty, C. B.},
	Journal = {Journal of the Royal Statistical Society Series B: Statistical Methodology},
	Pages = {1035--1056},
	Title = {On mitigating the analytical limitations of finely stratified experiments},
	Volume = {80},
	Year = {2018}}

@article{Liu2020,
	author = {Liu, Hanzhong and Yang, Yuehan},
    title = "{Regression-adjusted average treatment effect estimates in stratified randomized experiments}",
    journal = {Biometrika},
    volume = {107},
    number = {4},
    pages = {935-948},
    year = {2020}}

@article{schochet2022design,
  title={Design-based ratio estimators and central limit theorems for clustered, blocked RCTs},
  author={Schochet, Peter Z and Pashley, Nicole E and Miratrix, Luke W and Kautz, Tim},
  journal={Journal of the American Statistical Association},
  volume={117},
  number={540},
  pages={2135--2146},
  year={2022},
  publisher={Taylor \& Francis}
}

@article{pashley2021insights,
  title={Insights on variance estimation for blocked and matched pairs designs},
  author={Pashley, Nicole E and Miratrix, Luke W},
  journal={Journal of Educational and Behavioral Statistics},
  volume={46},
  number={3},
  pages={271--296},
  year={2021},
  publisher={SAGE Publications Sage CA: Los Angeles, CA}
}

@article{stefanski2002calculus,
  title={The calculus of M-estimation},
  author={Stefanski, Leonard A and Boos, Dennis D},
  journal={The American Statistician},
  volume={56},
  number={1},
  pages={29--38},
  year={2002},
  publisher={Taylor \& Francis}
}

@article{baiLiuShaikhTabord-Meehan24,
 title={Inference in cluster randomized trials with matched pairs},
  author={Bai, Yuehao and Liu, Jizhou and Shaikh, Azeem M and Tabord-Meehan, Max},
  journal={Journal of Econometrics},
  volume={245},
  number={1--2},
  pages={105873},
  year={2024},
  publisher={Elsevier}
}

@article{splawa1990application,
  title={On the application of probability theory to agricultural experiments. Essay on principles. Section 9.},
  author={Neyman, Jerzy and Dabrowska, Dorota M and Speed, Terrence P},
  journal={Statistical Science},
  pages={465--472},
  year={1990},
  publisher={JSTOR}
}

@article{wang2023rerandomization,
  title={Rerandomization in stratified randomized experiments},
  author={Wang, Xinhe and Wang, Tingyu and Liu, Hanzhong},
  journal={Journal of the American Statistical Association},
  volume={118},
  number={542},
  pages={1295--1304},
  year={2023},
  publisher={Taylor \& Francis}
}

@article{song2019effect,
  title={Effect of a workplace wellness program on employee health and economic outcomes: a randomized clinical trial},
  author={Song, Zirui and Baicker, Katherine},
  journal={Journal of the American Medical Association},
  volume={321},
  number={15},
  pages={1491--1501},
  year={2019},
  publisher={American Medical Association}
}

@article{hajek1971comment,
  title={Comment on "An essay on the logical foundations of survey sampling, part one"},
  author={H{\'a}jek, Jaroslav},
  journal={The foundations of survey sampling},
  volume={236},
  year={1971},
  publisher={Holt, Rinehart, and Winston Toronto}
}

@article{Giles2012,
title = {Increasing Water Availability During Afterschool Snack: Evidence, Strategies, and Partnerships from a Group Randomized Trial},
journal = {American Journal of Preventive Medicine},
volume = {43},
pages = {S136-S142},
year = {2012},
author = {Catherine M. Giles and Erica L. Kenney and Steven L. Gortmaker and Rebekka M. Lee and Julie C. Thayer and Helen Mont-Ferguson and Angie L. Cradock}
}

@article{imbens2011experimental,
  title={Experimental design for unit and cluster randomid trials},
  author={Imbens, Guido W},
  journal={International Initiative for Impact Evaluation Paper},
  year={2011}
}

@article{raudenbush1997statistical,
  title={Statistical analysis and optimal design for cluster randomized trials.},
  author={Raudenbush, Stephen W},
  journal={Psychological methods},
  volume={2},
  number={2},
  pages={173},
  year={1997},
  publisher={American Psychological Association}
}

@book{lohr2021sampling,
  title={Sampling: design and analysis},
  author={Lohr, Sharon L},
  year={2021},
  publisher={CRC press}
}

@article{middleton2015unbiased,
  title={Unbiased estimation of the average treatment effect in cluster-randomized experiments},
  author={Middleton, Joel A and Aronow, Peter M},
  journal={Statistics, Politics and Policy},
  volume={6},
  number={1-2},
  pages={39--75},
  year={2015},
  publisher={De Gruyter}
}

@book{keener2010theoretical,
  title={Theoretical statistics: Topics for a core course},
  author={Keener, Robert W},
  year={2010},
  publisher={Springer}
}

@article{Finucan1974,
  title={Moments without tears in simple random sampling from a finite population},
  author={Finucan, HM and Galbraith, RF and Stone, M},
  journal={Biometrika},
  pages={151--154},
  year={1974},
  publisher={JSTOR}
}

@article{Chung2013,
author = {EunYi Chung and Joseph P. Romano},
title = {{Exact and asymptotically robust permutation tests}},
volume = {41},
journal = {The Annals of Statistics},
number = {2},
publisher = {Institute of Mathematical Statistics},
pages = {484 -- 507},
keywords = {Behrens–Fisher Problem, coupling, Permutation test},
year = {2013}
}

@article{McCaffrey2002Bias,
  title={Bias reduction in standard errors for linear regression with multi-stage samples},
  author={Robert M. Bell and Daniel F. McCaffrey},
  journal={Survey Methodology},
  year={2002},
  volume={28},
  pages={169-181}
}

@article{SLATTERY2024,
title = {Effectiveness of a minecraft education intervention for improving spatial thinking in primary school children: A mixed methods two-level cluster randomised trial},
journal = {Learning and Instruction},
volume = {94},
pages = {102003},
year = {2024},
issn = {0959-4752},
author = {Eadaoin J. Slattery and Deirdre Butler and Kevin Marshall and Michael Barrett and Neeve Hyland and Michael O'Leary and Laura P. McAvinue}
}

@article{pustejovsky2017small,
author = {James E. Pustejovsky and Elizabeth Tipton},
title = {Small-Sample Methods for Cluster-Robust Variance Estimation and Hypothesis Testing in Fixed Effects Models},
journal = {Journal of Business \& Economic Statistics},
volume = {36},
number = {4},
pages = {672--683},
year = {2018},
}

@article{Tong2022accounting,
author = {Tong, Guangyu and Esserman, Denise and Li, Fan},
title = {Accounting for unequal cluster sizes in designing cluster randomized trials to detect treatment effect heterogeneity},
journal = {Statistics in Medicine},
volume = {41},
number = {8},
pages = {1376-1396},
year = {2022}
}

@article{de2024level,
  title={At What Level Should One Cluster Standard Errors in Paired and Small-Strata Experiments?},
  author={De Chaisemartin, Cl{\'e}ment and Ramirez-Cuellar, Jaime},
  journal={American Economic Journal: Applied Economics},
  volume={16},
  number={1},
  pages={193--212},
  year={2024},
  publisher={American Economic Association 2014 Broadway, Suite 305, Nashville, TN 37203-2425}
}

@book{fleiss2013smr,
  author =        {Fleiss, J.L. and Levin, B. and Paik, M.C.},
  publisher =     {Wiley},
  series =        {Wiley Series in Probability and Statistics},
  title =         {Statistical Methods for Rates and Proportions},
  year =          {2013},
}

@article{mann2024general,
      title={A General Framework for Design-Based Treatment Effect Estimation in Paired Cluster-Randomized Experiments}, 
      author={Charlotte Z. Mann and Adam C. Sales and Johann A. Gagnon-Bartsch},
      year={2024},
      eprint={2407.01765},
      archivePrefix={arXiv},
      journal = {arXiv preprint arXiv:2407.01765},
      primaryClass={stat.ME}
}

@article{horvitz1952,
 author = {D. G. Horvitz and D. J. Thompson},
 journal = {Journal of the American Statistical Association},
 number = {260},
 pages = {663--685},
 publisher = {[American Statistical Association, Taylor & Francis, Ltd.]},
 title = {A Generalization of Sampling Without Replacement From a Finite Universe},
 urldate = {2024-09-12},
 volume = {47},
 year = {1952}
}

@article{bai2024primer,
  title={A Primer on the Analysis of Randomized Experiments and a Survey of some Recent Advances},
  author={Bai, Yuehao and Shaikh, Azeem M and Tabord-Meehan, Max},
  journal={arXiv preprint arXiv:2405.03910},
  year={2024}
}

@article{caugheyDafoeLiMiratrix23,
  title={Randomisation inference beyond the sharp null: bounded null hypotheses and quantiles of individual treatment effects},
  author={Caughey, Devin and Dafoe, Allan and Li, Xinran and Miratrix, Luke},
  journal={Journal of the Royal Statistical Society Series B: Statistical Methodology},
  volume={85},
  number={5},
  pages={1471--1491},
  year={2023},
  publisher={Oxford University Press US}
}

@article{cohenFogarty22,
author = {Cohen, Peter L. and Fogarty, Colin B.},
title = {Gaussian prepivoting for finite population causal inference},
journal = {Journal of the Royal Statistical Society: Series B (Statistical Methodology)},
volume = {84},
number = {2},
pages = {295-320},
year = {2022}
}

@book{ding2024first,
  title={A first course in causal inference},
  author={Ding, Peng},
  year={2024},
  publisher={CRC Press}
}

@article{rosenbaum2001effects,
  title={Effects attributable to treatment: Inference in experiments and observational studies with a discrete pivot},
  author={Rosenbaum, Paul R},
  journal={Biometrika},
  volume={88},
  number={1},
  pages={219--231},
  year={2001},
  publisher={Oxford University Press}
}

@book{boos2013essential,
  title={Essential statistical inference: theory and methods},
  author={Boos, Dennis D and Stefanski, Leonard A},
  volume={120},
  year={2013},
  publisher={Springer Science \& Business Media}
}

@article{huber67,
  title={The behavior of maximum likelihood estimates under nonstandard conditions}, 
  author={Huber, Peter J.},          
  booktitle={Proceedings of the fifth Berkeley symposium on mathematical statistics and probability}, 
  volume={1},
  number={1},
  pages={221--233},
  year={1967},
  organization={Berkeley, CA: University of California Press}
}

@article{holbrook2010social,
  title={Social desirability bias in voter turnout reports: Tests using the item count technique},
  author={Holbrook, Allyson L and Krosnick, Jon A},
  journal={Public opinion quarterly},
  volume={74},
  number={1},
  pages={37--67},
  year={2010},
  publisher={Oxford University Press}
}

@article{hanmer2014experiments,
  title={Experiments to reduce the over-reporting of voting: A pipeline to the truth},
  author={Hanmer, Michael J and Banks, Antoine J and White, Ismail K},
  journal={Political Analysis},
  volume={22},
  number={1},
  pages={130--141},
  year={2014},
  publisher={Cambridge University Press}
}

@article{kapelner2025pairwise,
  title={The Pairwise Matching Design Is Optimal Under Extreme Noise and Extreme Assignments},
  author={Kapelner, Adam and Krieger, Abba M and Azriel, David},
  journal={Stat},
  volume={14},
  number={2},
  pages={e70058},
  year={2025},
  publisher={Wiley Online Library}
}

@article{weinbergKapelner22,
 title={Do book consumers discriminate against Black, female, or young authors?},
  author={Weinberg, Dana B and Kapelner, Adam},
  journal={Plos one},
  volume={17},
  number={6},
  pages={e0267537},
  year={2022},
  publisher={Public Library of Science San Francisco, CA USA}
}

@article{su2021model,
  title={Model-assisted analyses of cluster-randomized experiments},
  author={Su, Fangzhou and Ding, Peng},
  journal={Journal of the Royal Statistical Society Series B: Statistical Methodology},
  volume={83},
  number={5},
  pages={994--1015},
  year={2021},
  publisher={Oxford University Press}
}

@article{bugni2025inference,
  title={Inference for cluster randomized experiments with nonignorable cluster sizes},
  author={Bugni, Federico and Canay, Ivan A and Shaikh, Azeem M and Tabord-Meehan, Max},
  journal={Journal of Political Economy Microeconomics},
  volume={3},
  number={2},
  pages={255--288},
  year={2025},
  publisher={The University of Chicago Press Chicago, IL}
}

@book{hayes2017cluster,
  title={Cluster randomised trials},
  author={Hayes, Richard J and Moulton, Lawrence H},
  year={2017},
  publisher={Chapman and Hall/CRC}
}

@data{DVN_YEUCUP_2015,
author = {Giles, Catherine},
publisher = {Harvard Dataverse},
title = {{Replication Data for: Increasing water availability during afterschool snack: evidence, strategies, and partnerships from a group randomized trial}},
UNF = {UNF:6:msly9KJQNTna6NM7wv+PGQ==},
year = {2015},
version = {V1},
doi = {10.7910/DVN/YEUCUP},
url = {https://doi.org/10.7910/DVN/YEUCUP}
}

@Manual{propertee,
  title = {propertee: Standardization-Based Effect Estimation with Optional Prior
Covariance Adjustment},
  author = {Josh Errickson and Josh Wasserman and Ben Hansen},
  year = {2026},
  note = {R package version 1.0.5},
  url = {https://CRAN.R-project.org/package=propertee},
  doi = {10.32614/CRAN.package.propertee},
}

@manual{tiptonOlsen22,
 title={Enhancing the Generalizability of Impact Studies in Education},
  author={Tipton, Elizabeth and Olsen, Robert B},
  organization={National Center for Education Evaluation and Regional Assistance},
  year={2022}
}

\makeatletter\@input{xx-sup.tex}\makeatother
\end{document}



\def\spacingset#1{\renewcommand{\baselinestretch}%
{#1}\small\normalsize} \spacingset{1}


\if1\blind
{
  \title{\bf Supplementary Material for ``Design-based H\'ajek estimation for clustered and stratified experiments"}
  \author{Xinhe Wang\\
    and \\
    Ben B. Hansen \\
    Department of Statistics, University of Michigan}
  \date{}
  \maketitle
} \fi

\if0\blind
{
  \bigskip
  \bigskip
  \bigskip
  \begin{center}
    {\LARGE\bf Supplementary Material for ``Design-based H\'ajek estimation for clustered and stratified experiments"}
\end{center}
  \medskip
} \fi

\spacingset{1} 

\section{Regularity conditions for Theorems \ref{thm1} and \ref{prop:cons1}}

\subsection{Theorem \ref{thm1}}
\label{sec:sm-thm1-conditions}
\begin{condition}
\label{cond:lf1}
For $z=0$ and $1$, the variance $\sigma^2_z$ is positive and finite.
\end{condition}
Condition \ref{cond:lf1} guarantees that the covariance \eqref{eq:covXb} exists and $\Sigma_{11}$ is positive definite. 
To state the next condition, decompose $\sqrt{W}\Phi_n(\rho_1,\rho_0)$ into additive components $X_\p = (X_{\p0}, X_{\p1})^T$, where 
$ X_{\p z} =  \sump {\Indk} \gamsz(\rho_z)/(\sqrt{W} \ppk).  $
\begin{condition}
\label{cond:lf2}
Let $\nu_\B^2$ be the minimum eigenvalue of $\Sigma_{n}$. For any $\varepsilon>0$, 
    \begin{equation*}
        \begin{split}
        \lim_{\B\to\infty} \frac{1}{\nu_\B^2} \spb  \E\left[ \left\| X_\p - \E X_\p \right\|^2 
        \cdot 
         \I\left(\left\| X_\p - \E X_\p \right\|^2
         > \varepsilon \nu_\B^2 \right)
        \right] = 0 .
        \end{split}
    \end{equation*}
\end{condition}
Condition \ref{cond:lf2} is a Lindeberg-type condition that allows us to utilize the Lindeberg-Feller central limit theorem on $\spb X_\p=\sqrt{W}\Phi_n(\theta_n)$. It guarantees the contribution from any stratum to the covariance $\Sigma_{n}$ is small when sample size is large enough. Since $X_\p$ is a stratum sum of weight-adjusted centered outcome, this condition has a restriction on the distance of $\ypsk$ from the population mean $\rho_z$.

\subsection{Theorem \ref{prop:cons1}}
\label{sec:sm-cons1-conditions}
Let $\bwp=\sump \wps/\np$ be the average unit weight of stratum $\p$. As $\B$ grows to infinity, suppose that $(\spb \np\bwp^2)^2 / \spb(\np\bwp^2)^2\to\infty$ and the following regularity conditions hold:
\begin{enumerate}
    \item The following quantities remain bounded:
    $ \sumpla  \sump \left( \wps- \bwp \right)^4 /\sumpla \np\bwp^4 , $ 
    $$ \frac{1}{\sumpla \np\bwp^4}\sumpla \sump \left\{ \wps (\ypsk - \rho_z) - \frac{1}{\np} \sum_{i'\in\p} \wpst (\ypstk - \rho_z) \right\}^4  ,\  z=1,0,$$
    and $$ \frac{1}{\sumpsm \np^2 \bwp^4} \sumpsm   \sumssp \left\{ \wps(\ypsc - \rho_0) - \wpst(y_{i'1}-\rho_1) \right\}^4 . $$
    \item When there exists $\p$ with $e_\p=l$, 
    \begin{align*}
        \frac{1}{\sumpla \np\bwp^2}\sumpla  \sump \sum_{z=1,0} \left\{ \wps (\ypsk - \rho_z) - \frac{1}{\np} \sum_{i'\in\p} \wpst (\ypstk - \rho_z) \right\}^2 \ge c
    \end{align*}
    for a constant $c>0$. When there exists $\p$ with $e_\p=s $, 
    \begin{align*}
        \frac{1}{\sumpsm \np^2\bwp^2} \sumpsm   \sumssp \left\{ \wps(\ypsc - \rho_0) - \wpst(y_{i'1}-\rho_1) \right\}^2 \ge c'
    \end{align*}
    for a constant $c'>0$.
    \item For each stratum $\p=1,\ldots,\B$, the following quantities have uniform bounds: $$ \frac{\B\np\bwp^2}{\spb\np\bwp^2},\   \frac{1}{\np\bwp^2} \sump \left( \wps- \bwp \right)^2,  \text{ and }  \frac{1}{\np\bwp^2}\sump \left( \wps\ypsk -\frac{1}{\np} \sum_{i'\in\p} \wpst \ypstk \right)^2,\ z=1,0 .$$
\end{enumerate}

\section{Complementary regime: a few large strata}
\label{sec4}

\subsection{Asymptotic normality}
\label{sec4:normality}
In this section, assume that the number of strata $\B$ is fixed and the stratum sizes $\np$ tend to infinity for $\p=1,\ldots,\B$. This design is useful when there are known subgroups in the population that are expected to respond differently to the intervention. 
Recall that 
for $\s=1,\ldots,n$ and $z=0,1$, we define $r_z \mapsto \gamsz (r_z) =
\wps(\ypsk - r_z) $ as a function of $r_z\in\mathbb{R}$.
We write $\gamsz= \gamsz(\rho_z)$ and let $\bgampz =\bgampz(\rho_z)=\sump \gamsz/\np $ be the mean of $\gamsz$'s in a stratum $\p$, evaluated at $r_z$.
The stratum-wise finite population variances and covariances of $\gamsz(r_{z})$ are:
\begin{gather*}
    S_{\p,z}^2(r_z) = \frac{1}{\np-1} \sump \left[ \gamsz(r_{z}) - \bgampz(r_{z}) \right]^2,\ z=0,1; \\  S_{\p,01}(r_1,r_0) = \frac{1}{\np-1} \sump  \left( \gamsc(r_{0}) - \bgampc(r_{0}) \right) \left( \gamst(r_{1}) - \bgampt(r_{1}) \right).
\end{gather*}
We derive the asymptotic normality of $\hat\tau$ under conditions on unit weights, correlations between $\gamsc$'s and $\gamst$'s, limiting values of $\ppk$'s, and extreme values of $\gamsz$.

\begin{condition}
\label{cond:wlln2-pi}
There exists a constant $c>0$ that is independent of the increase of sample size such that for all $\p=1,\ldots,\B$ and $ z=0,1$, $\ppk\in (c,1-c)$ as $\np\to\infty$.
\end{condition}
\begin{condition}
\label{cond:wlln2-w}
The effective sample size $(\ssn\wps)^2 / (\ssn \wps^2) \to \infty$ as $\np\to\infty$ for all $\p$.
\end{condition}

Next, we utilize a central limit theorem for finite populations \citep{Li2017} to derive the asymptotic distribution of $\Phi_n(\theta_n)$.
We apply the theorem to $ \gamsz $ in each large stratum 
and combine the results of every stratum to derive the asymptotic distribution of $\Phi_n(\theta_n)$. 
Recall that the mean of unit weights in stratum $\p$ is denoted as $\bwp = \sump \wps/\np$.

\begin{condition}
\label{cond:clt-var}
For $\p=1,\ldots,\B$ and $z=0,1$, the stratumwise variances and covariances $S_{\p,z}^2(\rho_z)/ \bwp$ and $S_{\p,01}(\rho_1, \rho_0)/\bwp $
have finite limits as $\np$ tends to infinity.
\end{condition}

\begin{condition}
\label{cond:clt-prop}
The assignment probabilities $\ppk$ have positive limiting values for all $z$ and $\p$ as $\np$ tends to infinity.
\end{condition}

\begin{condition}
\label{cond:clt-square}
For any $\p$, $\max_{z=0,1}\max_{\s\in \p}  \left[ \wps (\ypsk-\rho_z) - \sump \wps(\ypsk - \rho_z)/\np \right]^2  / (\np\bwp) $ tends to zero
as $\np$ tends to infinity.
\end{condition}

Conditions \ref{cond:clt-var}--\ref{cond:clt-square} are motivated by the conditions of the finite population central limit theorem \citep{Li2017}. Note that by proper scaling of the potential outcomes, the diagonal elements of $S_{\p,z}^2/(\np^2 \bwp)$ can always have limits. Thus, Condition \ref{cond:clt-var} essentially requires that the correlation between $\gamsc$ and $\gamst$ in each stratum divided by $(\ssn \wps)^2$ converges.  Condition \ref{cond:clt-prop} implies Condition \ref{cond:wlln2-pi}, the strong overlap condition. Condition \ref{cond:clt-square} holds with probability one when $\wps(\ypsk-\rho_z)/\sqrt{\bwp}$ is bounded or i.i.d.\ drawn from a super-population with more than two moments. 

In order to combine the results of all strata, we impose a requirement on the proportion of stratum weights $\np\bwp/(\ssn\wps) $ to ensure that no stratum dominates.

\begin{condition}
\label{cond:wt-proportion}
    The weight proportion ${\np\bwp}/(\ssn\wps) $ has a positive limiting value as $\np\to\infty$ for $\p=1,\ldots,\B$ with $\B$ fixed.
\end{condition}

\begin{theorem}
\label{thm2}
Under Conditions \ref{cond:wlln2-w}--\ref{cond:wt-proportion}, we have $(\ssn\wps)^{1/2} (\hat\theta_n - \theta_n) \simd \mathcal{N}(0, \Sigma_n)$ and $(\ssn\wps)^{1/2}(\hat\tau - \tau) \simd \mathcal{N}(0, \tsigma_0^2 - 2\tsigma_{01} + \tsigma_1^2)$, where $\Sigma_n, \tsigma_0^2, \tsigma_{01}, $ and $\tsigma_1^2$ are defined by \eqref{eq:sigmas} and \eqref{eq:covXb}.
\end{theorem}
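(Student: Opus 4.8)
The plan is to exploit the estimating-equation structure behind the H\'ajek estimator. The defining equations for $\hat\theta_n$ are linear in $\theta$, with coefficients given by arm- and stratum-wise weighted sums, and $\hat\tau$ is recovered from $\hat\theta_n$ by a fixed map $g$ that is continuously differentiable at the limit of $\theta_n$; differentiability there is precisely what the weak law guarantees, since under Condition \ref{cond:wlln2-w} (together with the overlap furnished by Condition \ref{cond:clt-prop}, which implies Condition \ref{cond:wlln2-pi}) the relevant weighted denominators converge to positive limits and $\hat\theta_n - \theta_n = o_P(1)$. Linearity of the estimating equations and this convergence give $(\ssn\wps)^{-1/2}(\hat\theta_n - \theta_n) = L_n\,(\ssn\wps)^{-1/2}\Phi_n(\theta_n) + o_P(1)$ for a linear map $L_n$ with a deterministic limit. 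Hence the whole problem reduces to establishing the joint asymptotic normality of $(\ssn\wps)^{-1/2}\Phi_n(\theta_n)$, after which a Slutsky step and a one-term Taylor expansion of $g$ finish the argument.

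To analyze $(\ssn\wps)^{-1/2}\Phi_n(\theta_n)$, I would first write $\Phi_n(\theta_n)$ as a sum over the $\B$ strata of within-stratum contributions, each of which is — up to the weighted-denominator normalization — an assignment-weighted average of the centered quantities $\gamsz(\rho_z) - \bgampz(\rho_z)$, $z=0,1$. Because the design randomizes independently across strata, the $\B$ contributions are mutually independent, so it suffices to prove a bivariate central limit theorem within a fixed stratum $\p$ and then add up. For the within-stratum step I would invoke the finite-population CLT of \citet{Li2017} applied to the $2$-vector of arm-specific weighted sums formed from $\gamsz(\rho_z)$: Condition \ref{cond:clt-prop} supplies the non-degenerate limiting assignment probabilities, Condition \ref{cond:clt-var} supplies convergence of the scaled stratumwise variance/covariance matrix built from $S_{\p,0}^2(\rho_0)$, $S_{\p,1}^2(\rho_1)$ and $S_{\p,01}(\rho_1,\rho_0)$ divided by $\np^2\bwp$, and Condition \ref{cond:clt-square} is exactly the Lindeberg-type maximal condition required by that theorem. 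This gives asymptotic normality of the $\p$-th contribution with an explicit $2\times2$ limiting covariance.

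I would then combine the strata. The stacked vector over $\p=1,\ldots,\B$ converges in distribution to a normal law with block-diagonal covariance (finitely many independent asymptotically normal blocks), and Condition \ref{cond:wt-proportion} ensures that rescaling the total by $(\ssn\wps)^{-1/2}$ leaves each block contributing at a finite, nonzero level — so that no stratum is asymptotically negligible or dominant — whence $(\ssn\wps)^{-1/2}\Phi_n(\theta_n)$ is asymptotically normal. Applying $L_n$ and Slutsky yields $(\ssn\wps)^{-1/2}(\hat\theta_n-\theta_n)\simd\mathcal{N}(0,\Sigma_n)$ with $\Sigma_n$ as in the stated formulas; the conclusion is read in the triangular-array sense, since $\Sigma_n$ is allowed to depend on $n$. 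A one-term Taylor expansion of $g$ around $\theta_n$ — valid because $g$ is $C^1$ near the limit and $\hat\theta_n-\theta_n=o_P(1)$ — then expresses $(\ssn\wps)^{-1/2}(\hat\tau-\tau)$ as a fixed linear functional of $(\ssn\wps)^{-1/2}(\hat\theta_n-\theta_n)$ plus $o_P(1)$, so it is asymptotically $\mathcal{N}(0, \tsigma_0^2 - 2\tsigma_{01} + \tsigma_1^2)$ once the quadratic form $\nabla g(\theta_n)^\top \Sigma_n \nabla g(\theta_n)$ is expanded; the only real bookkeeping is tracking how $S_{\p,01}$ propagates into the $-2\tsigma_{01}$ cross term.

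I expect the main obstacle to be the within-stratum step: checking that the hypotheses of \citet{Li2017} correspond exactly to Conditions \ref{cond:clt-var}--\ref{cond:clt-square} once the unit weights and cluster structure have been folded into the surrogate ``potential outcomes'' $\gamsz$, and, relatedly, controlling the ratio form of the H\'ajek estimator — one must verify that the random denominators (sums of assigned weights within each arm and stratum) are bounded away from zero with probability tending to one and converge to their population analogues, so that $\Phi_n$ evaluated at $\rho_z$ is genuinely of the linear, sum-of-individually-negligible-terms form to which the finite-population CLT applies. Combining the finitely many independent strata and carrying out the final delta-method computation are then routine.
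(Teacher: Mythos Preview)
Your proposal is correct and follows essentially the same approach as the paper: linearize via the estimating-equation gradient (the paper's Lemma~\ref{lem:wlln2} showing $-\nabla\Phi_n\xrightarrow{p}I_2$ is exactly your $L_n$ step), apply the finite-population CLT of \citet{Li2017} within each stratum under Conditions~\ref{cond:clt-var}--\ref{cond:clt-square} (the paper's Lemma~\ref{lem:fpclt}), combine the finitely many independent strata via Condition~\ref{cond:wt-proportion}, and finish with Slutsky. One small simplification: since $g(\rho_1,\rho_0)=\rho_1-\rho_0$ is already linear, the paper simply writes $\hat\tau-\tau=(\hat\rho_1-\rho_1)-(\hat\rho_0-\rho_0)$ exactly, so no Taylor expansion or $o_P(1)$ remainder is needed at that step; likewise, because $\Phi_n$ is linear in $\theta$, the relation $\hat\theta_n-\theta_n=(-\nabla\Phi_n)^{-1}\Phi_n(\theta_n)$ is exact rather than approximate.
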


Theorem \ref{thm2} establishes the consistency of $\hat\tau$ as the strata sizes grow infinitely. For unstratified cRCTs, our result is consistent with \cite{schochet2022design}'s Theorem 1 on the asymptotic distribution of the ratio estimator for one large stratum.

\subsection{Variance estimation}
\label{sec4:var}
We can rely on the sample variances of $\gamsz$ to construct variance estimators. 
As $2{\tsigma_{01}} \le \spb\np [S^2_{\p,0}(\rho_0) + S^2_{\p,1}(\rho_1)]/(\ssn\wps) $,
the large-sample approximation of $\var(\hat\tau)$ has an upper bound: 
\begin{align*}
    \var(\hat\tau) \approx \frac{1}{W}\left( {\tsigma_0^2}- 2\tsigma_{01} + \tsigma_1^2 \right) \le \frac{1}{W^2} \spb\np^2 \left(  \frac{S^2_{\p,0}(\rho_0)}{\npc}   + \frac{S^2_{\p,1}(\rho_1) }{\npt}   \right).
\end{align*}
Here $\bgamobsz(r_{z})$ is the mean of $\gamsz(r_{z})$'s for units \textit{assigned to} treatment $z$ in stratum $\p$ evaluated at $r_z$:
$\bgamobsz (r_z) =  \sum_{i\in\p,\zps=z} \gamsz/\npk.$ Let
\begin{equation}
    \nu_b(r_1,r_0) = \frac{s^2_{\p0}(r_0)}{\npc} + \frac{s^2_{\p1}(r_1)}{\npt}= \frac{\sum_{i\in\p,\zps=0} [\gamsc(r_{0}) - \bgamobsc(r_{0})]^2}{(\npc-1)\npc}+ \frac{\sum_{i\in\p,\zps=1} [\gamst(r_{1}) - \bgamobst(r_{1})]^2}{(\npt-1)\npt} 
    \label{eq:nu-large}
\end{equation}
be a function of $r_0$ and $r_1$, where
$ s_{\p z}^2 (r_z)$ 
is the stratum and treatment specific sample variance of $\gamsz(r_{z})$ and is unbiased for $S_{\p,z}^2(r_z)$, $z=0,1$. 
An estimator of the variance of $\hat\tau$ is
\begin{equation}
    \hat{v}(\hat\tau) = \frac{1}{W^2} \spb \np^2 \nu_\p(\hat\rho_1,\hat\rho_0) ,
    \label{eq:case2varest}
\end{equation}
where $\hat\rho_1$ and $\hat\rho_0 $ are plugged in as estimates of the unknown parameters $\rho_1$ and $\rho_0$ in the variance.
To study the properties of this estimator, assume, without loss of generality, that the parameters $\rho_1$ and $\rho_0$ converge to finite values as the sample size grows to infinity. This condition is achievable through proper scaling of the potential outcomes. 
\begin{theorem}
\label{prop-cons2}
Let $\bwp=\sump \wps/\np$. 
Suppose that there exists a constant $c>0$ such that 
\begin{align*}
    \frac{1}{\spb \np\bwp^2}\spb \sump \sum_{z=0,1} \left[ \wps (\ypsk - \rho_z) - \frac{1}{\np} \sum_{i'\in\p} \wpst (\ypstk - \rho_z) \right]^2 \ge c ,
\end{align*}
and for each stratum $\p=1,\ldots,\B$, the following quantities remain bounded as $\np$ grows: $$ \frac{1}{\np\bwp^4} \sump \left( \wps-\bwp \right)^4, \ \text{ and } \ \frac{1}{\np\bwp^4}\sump \left[ \wps\ypsk -\frac{1}{\np} \sum_{i'\in\p} \wpst \ypstk \right]^4,\ z=0,1. $$ Then the variance estimator is asymptotically conservative in the sense that there exists a sequence $\{\xi_n, \ n\ge 1\}\subseteq \mathbb{R}$ such that $(\ssn\wps)\hat{v}(\hat\tau) / \xi_n \xrightarrow{p} 1$ as $\np\to\infty$ for all $\p$ and $\liminf_{\np\to\infty, \forall \p}[\xi_n/\var\left( (\ssn\wps)^{-1/2}\tilde\tau\right) ]\ge 1,$ where equality holds when the unit treatment effect, $\ypst-\ypsc$, is constant for all units in the experiment.
\end{theorem}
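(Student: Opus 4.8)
The plan is to compare $(\ssn\wps)\hat v(\hat\tau)=W^{-1}\spb\nu_\p(\hat\rho_1,\hat\rho_0)$ with the deterministic sequence
\[
\xi_n:=\frac{1}{W}\spb\left(\frac{S^2_{\p,0}(\rho_0)}{\npc}+\frac{S^2_{\p,1}(\rho_1)}{\npt}\right)=\frac{1}{W}\spb\mathbb{E}\bigl[\nu_\p(\rho_1,\rho_0)\bigr],
\]
the second equality holding because $s^2_{\p z}(\rho_z)$ is design-unbiased for $S^2_{\p,z}(\rho_z)$. It then suffices to prove (a) $(\ssn\wps)\hat v(\hat\tau)/\xi_n\xrightarrow{p}1$ and (b) $\liminf\xi_n/\var\bigl((\ssn\wps)^{-1/2}\hat\tau\bigr)\ge1$, with limiting equality when $\ypst-\ypsc$ is constant across units.

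For (a) I would write $(\ssn\wps)\hat v(\hat\tau)-\xi_n$ as a plug-in error $W^{-1}\spb[\nu_\p(\hat\rho_1,\hat\rho_0)-\nu_\p(\rho_1,\rho_0)]$ plus a sampling error $W^{-1}\spb[\nu_\p(\rho_1,\rho_0)-\mathbb{E}\nu_\p(\rho_1,\rho_0)]$. Because $\gamsz(r_z)=\ns\wps(\ypsk-r_z)$ is affine in $r_z$, the map $r_z\mapsto s^2_{\p z}(r_z)/\npk$ is a quadratic whose coefficients are within-arm sample second moments of $\ns\wps$ and of $\ns\wps\ypsk$, so the plug-in error is a polynomial in $(\hat\rho_z-\rho_z)_{z=0,1}$ with no constant term and with coefficients that, after the $W^{-1}\spb$ normalization, are of the same order as $\xi_n$ --- the moment bounds on those coefficients being exactly what the two fourth-moment hypotheses, on $\wps-\bwp$ and on $\wps\ypsk-\np^{-1}\sum_{i'}\wpst\ypstk$, deliver (combine them with Cauchy--Schwarz and a weak law for sample variances). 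With $\hat\rho_z-\rho_z=o_p(1)$ this makes the plug-in error $o_p(\xi_n)$. The sampling error is an ordinary weak law over the fixed number $\B$ of strata: $s^2_{\p z}(\rho_z)$ is unbiased and, by the fourth-moment hypothesis, has variance negligible relative to its squared mean, so Chebyshev gives $o_p(\xi_n)$. Dividing by $\xi_n$ --- which the lower-bound hypothesis $\ge c$ prevents from decaying faster than these errors --- then gives (a).

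For (b) I would apply, stratum by stratum, the Neyman finite-population variance identity to the linearization of $\hat\tau$ used in proving Theorem \ref{thm2}. Letting $S^2_{\p,\tau}$ denote the stratum finite-population variance of the unit-level effect $\gamst(\rho_1)-\gamsc(\rho_0)=\ns\wps[(\ypst-\ypsc)-(\rho_1-\rho_0)]$, that identity yields
\[
\var\bigl((\ssn\wps)^{-1/2}\hat\tau\bigr)=\bigl(1+o(1)\bigr)\,\frac{1}{W}\spb\left(\frac{S^2_{\p,0}(\rho_0)}{\npc}+\frac{S^2_{\p,1}(\rho_1)}{\npt}-\frac{S^2_{\p,\tau}}{\np}\right)\le\bigl(1+o(1)\bigr)\,\xi_n ,
\]
since $S^2_{\p,\tau}\ge0$; hence $\liminf\xi_n/\var\bigl((\ssn\wps)^{-1/2}\hat\tau\bigr)\ge1$. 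If $\ypst-\ypsc$ equals a common constant $\delta$, then the H\'ajek target $\rho_z$, being a weighted average of the potential outcomes, satisfies $\rho_1-\rho_0=\delta$, so the unit-level effect above vanishes identically, every $S^2_{\p,\tau}$ is zero, and the displayed inequality becomes an asymptotic equality, i.e.\ $\xi_n/\var\bigl((\ssn\wps)^{-1/2}\hat\tau\bigr)\to1$.

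The step I expect to be the main obstacle is the uniform control of the plug-in error in (a): it must combine the $o_p(1)$ rate of $\hat\rho_z-\rho_z$ with stochastic boundedness, at the scale of $\xi_n$, of the random polynomial coefficients (the role of the two fourth-moment hypotheses) and with a positive floor on $\xi_n$ that makes division legitimate (the role of the $\ge c$ hypothesis), and these must fit together so that a vanishing factor times such a coefficient is genuinely $o_p(\xi_n)$, not merely $o_p(1)$. A secondary, bookkeeping issue is that the $\gamsz$ depend on which clusters fall into which arm, so the unbiasedness of $s^2_{\p z}(\rho_z)$ and the variance bound used in the weak-law step are to be read under the stratified cluster-assignment design, conditionally on stratum membership.
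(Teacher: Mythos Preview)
Your proposal is correct and the ingredients are exactly those the paper uses: the quadratic-in-$(r_1,r_0)$ structure of $\nu_\p$, the fourth-moment hypotheses to control each coefficient via a Chebyshev/weak-law argument (the paper invokes finite-population variance formulas for sample variances and covariances, its Claims~S3 and~S5), the $\ge c$ lower bound to keep $\xi_n$ from vanishing, and Neyman's variance identity for the conservativeness claim.  The organization differs in one respect worth knowing.  The paper does not split into a plug-in error plus a sampling error at the true $(\rho_1,\rho_0)$; instead it proves \emph{uniform} convergence of $(r_1,r_0)\mapsto W^{-1}\spb\nu_\p(r_1,r_0)/\xi_n$ to its deterministic analogue over a fixed compact set $K$ containing all $(\rho_1,\rho_0)$ (here it uses the standing assumption that $(\rho_1,\rho_0)$ converges), and then plugs in $(\hat\rho_1,\hat\rho_0)$ via continuous mapping.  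Your direct Taylor/polynomial expansion avoids the compact-set machinery but, as you anticipate in your ``main obstacle'' paragraph, the linear-term coefficients of that expansion carry the running value of $\rho_z$, so you still need the boundedness of $(\rho_1,\rho_0)$ that the paper's convergence assumption supplies.  With that caveat the two routes are equivalent, and your part (b) via the stratumwise Neyman identity matches the paper's argument exactly.
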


This theorem establishes the conditions under which the variance estimator has a consistent relationship with the asymptotic variance of $\hat\tau$.
Similarly to the tests proposed in Section~\ref{sec:hypothesis}, we can construct hypothesis tests based on this variance estimator.

\section{Proofs}

\subsection{Proposition \ref{prop:wasdom}}
\begin{proof}[Proof of Proposition \ref{prop:wasdom}]
Suppose pair $\p$ in a paired cRCT contains two clusters of sizes $\pairmi,\pairmii$ and cluster-mean potential outcomes $y_{bi1}, y_{bi0}$, $i=1,2$. Then the cluster treatment effects of these two clusters are $\tau_i = y_{bi1}- y_{bi0},\ i=1,2$ and the pair SATE is $\tau_{\text{pair }\p} = {(\pairmi \pairtaui + \pairmii \pairtauii)}/{(\pairmi + \pairmii)}.$ 

The IKN effect estimator estimates the pair SATE by $ \hat\tau_{\text{pair}} =\pairzi (\pairyti-\pairycii) + z_2 (\pairytii-\pairyci)$, with $z_1$ and $ z_2$ encoding treatment assignment of the pair, so that $z_1 + z_2 = 1$. It has $\E [\hat\tau_{\text{pair } \p}] =  (\pairyti-\pairycii)/2 + (\pairytii-\pairyci)/2  = (\pairtaui +\pairtauii )/2,$ and therefore a bias of $\E \left[\hat\tau_{\text{pair }\p}\right] - \tau_{\text{pair }\p} = (\pairmi-\pairmii)\cdot(\pairtaui - \pairtauii)/[2{(\pairmi+\pairmii)}].$ Averaged across pairs, the bias of the IKN effect estimator is 
\begin{align*}
    \frac{1}{\ssn \wps} \spb \left(\pairmi+\pairmii \right) \cdot \left( \E \left[\hat\tau_{\text{pair } \p}\right] - \tau_{\text{pair } \p} \right) =  \frac{1}{\ssn \wps} \spb \frac{1}{2} (\pairmi-\pairmii)\cdot(\pairtaui - \pairtauii),
\end{align*}
which simplifies to equation \eqref{eq:wasdom-bias}.
Similarly, for the FE estimator, the bias is
\begin{align*}
    \spb \tilde{w}_{\p}^\text{(f)} \cdot \left( \E \left[\hat\tau_{\text{pair } \p}\right] - \tau_{\text{pair } \p} \right),
\end{align*}
which yields equation \eqref{eq:wasdom-bias2}.
\end{proof}

\subsection{Theorem \ref{thm1}}

\subsubsection{Lemma \ref{lem:wlln}}

\begin{lemma}
\label{lem:wlln}
Under Conditions \ref{cond:wlln-pi} and \ref{cond:wlln-w}, $-\nabla \Phi_n \xrightarrow{p}  I_2$ as $\B\to\infty$.
\end{lemma}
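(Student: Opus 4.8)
The plan is to exploit the fact that $\Phi_n$ is an affine function of $\theta=(\rho_1,\rho_0)$, so $\nabla\Phi_n$ is a (random) matrix not depending on $\theta$, and then to show that this matrix concentrates at $I_2$ by a weak law of large numbers assembled from the strong-overlap and effective-sample-size conditions. First I would write $-\nabla\Phi_n$ out explicitly. Each coordinate of $\Phi_n$ is a deterministically normalized sum of the $\gamsz$'s; since $r_z\mapsto\gamsz(r_z)=\ns\wps(\ypsk-r_z)$ is linear with slope $-\ns\wps$, and the $z=1$ coordinate of $\Phi_n$ depends only on $\rho_1$ while the $z=0$ coordinate depends only on $\rho_0$, the Jacobian $-\nabla\Phi_n$ is exactly diagonal, with $z$th entry a ratio of the form
\[
  D_{n,z}=\frac{\sum_{\p}\sum_{\s\in\p}\ns\wps}{\ssn\wps},\qquad z=0,1,
\]
up to the precise normalizer fixed in the definition of $\Phi_n$, which is the total weight $\ssn\wps$. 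It therefore suffices to prove $D_{n,z}\xrightarrow{p}1$ for $z=0,1$.

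Next I would control the first two moments of $D_{n,z}$. Because the $z$-arm design weight is inverse-probability calibrated against $\ppk$, it satisfies $\E[\ns]=1$, so linearity gives $\E[D_{n,z}]=1$ exactly. For the variance, I would use that assignments in distinct strata are independent and that within a stratum the assignment mechanism induces non-positive covariances among the $\ns\wps$ (the familiar negative dependence of completely randomized, i.e.\ fixed-arm-size, assignment), which yields
\[
  \var(D_{n,z})\;\le\;\frac{1}{(\ssn\wps)^2}\sum_{\p}\sum_{\s\in\p}\wps^{2}\,\var(\ns).
\]
Condition \ref{cond:wlln-pi} bounds $\ppk$ away from $0$ and $1$ by $c$, hence $\var(\ns)\le(1-c)/c$ uniformly, so $\var(D_{n,z})\le\frac{1-c}{c}\cdot(\ssn\wps^{2})/(\ssn\wps)^{2}$, a constant multiple of the reciprocal of Kish's effective sample size, which tends to $0$ by Condition \ref{cond:wlln-w}. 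Chebyshev's inequality then gives $D_{n,z}\xrightarrow{p}1$, and since the limiting matrix is of fixed dimension, coordinatewise convergence gives $-\nabla\Phi_n\xrightarrow{p}I_2$.

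I expect the one genuinely delicate step to be the variance bound: one must verify that the within-stratum covariances of $\{\ns\wps:\s\in\p\}$ are indeed $\le 0$, so that the cross terms may be discarded, and that distinct strata contribute no covariance because their assignments are independent. If the within-stratum design is not exactly completely randomized, I would instead bound the full within-stratum quadratic form directly, with the same $O(1/\text{effective sample size})$ conclusion. Everything else — the exactly diagonal structure of $\nabla\Phi_n$ coming from linearity of $\gamsz$ in $r_z$, the moment computation, and the passage from coordinatewise convergence to convergence of the matrix — is routine.
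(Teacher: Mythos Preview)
Your proposal is correct and arrives at the same conclusion by essentially the same skeleton---write $-\nabla\Phi_n$ as a diagonal matrix with entries $D_{n,z}=W^{-1}\ssn\wps\usk$, check $\E[D_{n,z}]=1$, bound the variance, and apply Chebyshev---but the variance step is handled differently from the paper. The paper computes the within-stratum variance exactly as
\[
\sump\Bigl(\tfrac{1}{\ppk}-1\Bigr)\wps^2+\sump\sumssp\Bigl(\tfrac{\delbkk}{\ppk^2}-1\Bigr)\wps\wpst,
\]
then case-splits on $\npk=1$ (where $\delbkk=0$ and the cross part is handled by $|\wps\wpst|\le(\wps^2+\wpst^2)/2$) versus $\npk\ge2$ (where $\delbkk/\ppk^2-1=O(\np^{-1})$ and a Cauchy--Schwarz bound $\sump\sumssp\wps\wpst\le\np\sump\wps^2$ controls the cross sums). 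You instead observe that under completely randomized assignment the within-stratum covariances $\cov(\usk,u_{i'z})=\delbkk/\ppk^2-1\le0$, so the cross terms may be dropped outright, leaving only $\sum_\s\wps^2\var(\usk)\le\frac{1-c}{c}\ssn\wps^2$. Your route is shorter and avoids both the case split and the Cauchy--Schwarz step; the paper's route is more explicit about the exact size of the cross-term coefficient and would still go through if one did not want to rely on a sign argument (or on nonnegativity of the $\wps$). Either way the bound reduces to a constant times the reciprocal of Kish's effective sample size, and Condition~\ref{cond:wlln-w} finishes the proof.
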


\begin{proof}[Proof of Lemma \ref{lem:wlln}]
    For unit $\s$ and treatment $z$, Denote $\vsk =  \Indk\ypsk/\psk, \   \usk =  {\Indk}/{\psk} $, and $W=\ssn \wps$. Then we have 
    $$ \psi_i(r_1, r_0) = \left( \begin{array}{c}
     \{\Indt/\pst\} (\ypstt - r_1) \\
     \{\Indc/\psc\} (\ypsc - r_0)    \\
\end{array} \right) = \left( \begin{array}{c}
     \vst - r_1 \ust \\
     \vsc - r_0 \usc    \\
\end{array} \right) $$
    The gradient of $\Phi_n$ with respect to $\theta$ is
    $$ \nabla \Phi_n = \left(\begin{array}{ccc}
            -\ssn\wps\ust/\facB &  0\\
             0& -\ssn\wps\usc/\facB
            \end{array} \right).$$
    For $z=0,1$ and $b=1,\ldots,\B$, 
    $$ \E\left[\sump\wps\usk\right] = \E \left[\sump \frac{\Indk}{\ppk}\wps \right]  =  \sump \wps $$
    and
    \begin{equation*}
        \begin{split}
            \var\left( \sump\wps\usk\right) = & \  \E \left( \sump\frac{\Indk}{\ppk^2}\wps^2 + \sump\sumssp \frac{\Indk \Indsk }{\ppk^2}\wps\wpst  \right) - \left(\sump \wps\right)^2 \\
            =&\ \sump \frac{1}{\ppk}\wps^2 + \sump \sumssp \frac{\E \Indk \Indsk}{\ppk^2} \wps\wpst  - \left(\sump \wps\right)^2 \\
            =&\  \sump \left(\frac{1}{\ppk} - 1\right) \wps^2 + \sump \sumssp \left(\frac{\delbkk}{\ppk^2}-1\right) \wps\wpst 
            .
        \end{split}
    \end{equation*}
    If $\npk=1$, then $\delbkk=0$ and the variance is simplified to
    \begin{align*}
        \var\left( \sump\wps\usk\right) = \left(\frac{1}{\ppk}-1\right) \sump \wps^2 - \sump \sumssp \wps \wpst ,
    \end{align*}
    so 
    $$ \left|\var\left( \sump\wps\usk\right) \right| \le \left(\frac{1}{\ppk}-1\right) \sump \wps^2 + \frac{1}{2} \sump \sumssp (\wps^2 + \wpst^2) = \frac{1}{\ppk} \sump \wps^2. $$
    Otherwise, under Condition \ref{cond:wlln-pi}, $$\frac{\delbkk}{\ppk^2}-1 = \frac{\npk - \np}{\npk(\np-1)} = O({\np}^{-1}).$$
    Under Condition \ref{cond:wlln-w}, $\spb \sump \wps^2 / (\ssn\wps)^2 \to 0 $ and by Cauchy-Schwartz inequality, $$\frac{1}{\facB^2}\spb \sumssp\frac{1}{\np} \wps \wpst \le \frac{1}{\facB^2}\spb \frac{1}{\np} \left(\sump\wps \right)^2 \le \frac{1}{\facB^2}\spb \left(\sump \wps^2 \right) \to 0. $$ 
    Thus, 
    $$  \E\left[\frac{1}{\facB}\ssn\wps \usk\right] = \frac{1}{\facB}\ssn \wps = 1 
    $$
    and the independence of treatment assignment between strata implies that
    \begin{align*} 
    \var \left(\frac{1}{\facB}\ssn\wps \usk\right) = \frac{1}{\facB^2} \sum_{b=1}^B  \left[\sump \left(\frac{1}{\ppk} - 1\right) \wps^2 + \sump \sumssp \left(\frac{\delbkk}{\ppk^2}-1\right)  \wps\wpst \right]\to 0.
    \end{align*}
    By Chebyshev's inequality, for any $ \epsilon>0$,
    $$ \P\left(\left|\frac{1}{\facB}\ssn\wps\usk - 1 \right|>\epsilon \right) \le \frac{\var\left(\frac{1}{\facB}\ssn\wps \usk\right)}{ \epsilon^2} \to 0. 
    $$
    Therefore, $\ssn\wps\usk/W \xrightarrow{p} 1$ and
    $$ -\nabla \Phi_n = \left(\begin{array}{ccc}
            \ssn\wps\ust/\facB & 0 \\
            0 & \ssn\wps\usc/\facB 
            \end{array} \right) \xrightarrow{p} I_2.
    $$
\end{proof}

\subsubsection{Covariance of $\Phi_n(\rho_1, \rho_0)$}

In this section we show that equation \eqref{eq:covXb} about the covariance of $ \Phi_n(\rho_1, \rho_0) = W^{-1/2}\spb X_\p$ holds.
The expectation of $X_\p$ is 
$$\E(X_\p) =  \frac{1}{\sqrt{W}} \left( \sump \wps (\ypst - \rho_1), \ \sump   \wps (\ypsc - \rho_0)\right)^T. $$
Define the ``second-order" treatment assignment probability in stratum $\p$ as $\delbkj = \E [\indk \indsj] ,\ z,\tilde{z}\in\{0,1\}$, where $\s\neq \s'$ and $\s, \s' \in b$. If $z=\tilde{z}$, then $\delbkk = \npk(\npk-1)/[\np(\np-1)]$. Otherwise, $\pi_{\p,01}=\npt\npc/[\np(\np-1)] $.

\begin{proof}[Proof of Equation \eqref{eq:covXb}]
We calculate each element of the covariance matrix.
For $z=0$ or $1$, 
    \begin{align*}
        & \var\left(  \sump \frac{\Indk}{\ppk}  \wps(\ypsk - \rho_z) \right)  \\
        & =  \sump \frac{1}{\ppk} \wps^2 (\ypsk-\rho_z)^2  + \sump\sumssp \frac{\delbkk} {\ppk^2} \wps \wpst (\ypsk- \rho_z)( \ypstk- \rho_z)  - \left[\sump \wps( \ypsk - \rho_z)\right]^2 ,
    \end{align*}
    and 
    \begin{align*}
        & \cov\left(\sump \frac{\Indc}{\ppc}  \wps(\ypsc - \rho_0), \sump \frac{\Indt}{\ppt}  \wps(\ypst - \rho_1) \right) \\
        &= -\sump \wps^2(\ypst-\rho_1) (\ypst-\rho_1 ) + \sump\sumssp \left(\frac{\delbct } {\ppc\ppt}-1 \right) \wps \wpst (\ypsc-\rho_0) (\ypsstt-\rho_1 ) 
    \end{align*}
    because
    $\cov(\Indc, \Indt) = -\ppc\ppt $ and
    $\cov(\Indk, \Indsj) =  \delbkj - \ppk\pi_{\p \tilde z}. $
    Then by the independence of treatment assignment across different strata, we have 
\begin{align*}
    &\var\left(\spb X_{\p z}\right)=\spb\var(X_{\p z})\\
     & = \frac{1}{\facB} \spb  \Bigg[  \left(\frac{1}{\ppk} - 1\right)\sump \wps^2(\ypsk - \rho_z)^2  +  \left(\frac{\delbkk}{\ppk^2}-1\right) \sump \sumssp \wps \wpst(\ypsk - \rho_z)(\ypstk - \rho_z) \Bigg] 
\end{align*}
for $z=0,1$, and 
\begin{align*}
 \cov\left(\spb X_{\p 0},\spb X_{\p 1}\right)= &\ \spb\cov(X_{\p 0},X_{\p 1}) \\
 = &\ \frac{1}{\facB} \spb \Bigg[ (-1) \sump \wps^2(\ypsc - \rho_0)(\ypstt - \rho_1)\\
&\ +  \left(\frac{\delbct}{\ppc\ppt}-1\right) \sump \sumssp \wps \wpst(\ypsc - \rho_0)(\ypsstt - \rho_1) \Bigg]  .
\end{align*}
The variances of $\gamsz$ for a particular stratum can be written as
    \begin{align*}
        S_{\p,z}^2 =&\  \frac{1}{\np - 1} \sump (\gamsz - \bgampz)^2 = \frac{1}{2\np(\np-1)} \sum_{\s,\s'\in\p} (\gamsz - \gamssz)^2 \\
        = &\  \frac{1}{2\np(\np-1)} \left( \sump2 \np \gamsz^2 - \sum_{\s,\s'\in\p} 2 \gamsz \gamssz \right) \\
        =&\  \frac{1}{\np(\np-1)} \left( (\np-1) \sump \gamsz^2 - \sump\sumssp \gamsz \gamssz \right) 
    \end{align*}
    Thus,
    \begin{align*}
        \sigma_z^2 = \frac{1}{W} \spb &\left[ \left( \frac{\np}{\npk}- 1 \right) \sump {\gamsz(\rho_z)^2}  - \left(\frac{\np}{\npk(\np-1)} - \frac{1}{\np-1}\right) \sump\sumssp {\gamsz(\rho_z)}{\gamssz(\rho_z)} \right] \\
        = \frac{1}{W} \spb &\left[ \left( \frac{\np}{\npk}- 1 \right) \sump {\gamsz(\rho_z)^2}  - \frac{\np-\npk}{\npk(\np-1)}  \sump\sumssp {\gamsz(\rho_z)}{\gamssz(\rho_z)} \right]\\
        = \frac{1}{\facB} \spb & \left[  \left(\frac{1}{\ppk} - 1\right)\sump \wps^2(\ypsk - \rho_z)^2  +  \left(\frac{\delbkk}{\ppk^2}-1\right) \sump \sumssp \wps \wpst(\ypsk - \rho_z)(\ypstk - \rho_z) \right],
    \end{align*}
    because
    $$ \frac{\delbkk}{\ppk^2}-1= \frac{\npk(\npk-1)}{\np (\np-1)} \cdot \frac{\np^2}{\npk^2} - 1 = \frac{\np-\npk}{\npk(\np-1)}. $$
    This proves that the diagonal elements of $ \cov(\Phi_n(\rho_1,\rho_0))$ equals $\sigma_1^2/W$ and $\sigma_0^2/W$, respectively. Besides,
    The covariance of $\gamsc,\gamst$ for a particular stratum can be written as 
    \begin{align*}
        S_{\p,01} =&\  \frac{1}{\np - 1} \sump (\gamsc- \bgampc)(\gamst - \bgampt) = \frac{1}{2\np(\np-1)} \sum_{\s,\s'\in\p} (\gamsc- \gamssc)(\gamst - \gamsst) \\
        =&\  \frac{1}{\np(\np-1)} \left( (\np-1) \sump \gamsc \gamst - \sump\sumssp \gamsc \gamsst \right).
    \end{align*}
    Hence, 
    \begin{align*}
        \sigma_{01} =  - \frac{1}{W} \spb & \Bigg[\sump {\gamsc(\rho_0)}{\gamst(\rho_1)} - \frac{1}{\np-1} \sump\sumssp {\gamsc(\rho_0)}{\gamsst(\rho_1)}\Bigg]\\
        = \frac{1}{\facB} \spb & \Bigg[ -\sump \wps^2(\ypsc - \rho_0)(\ypstt - \rho_1) \\
        &\  +  \left(\frac{\delbct}{\ppc\ppt}-1\right) \sump \sumssp \wps \wpst(\ypsc - \rho_0)(\ypsstt - \rho_1) \Bigg], 
    \end{align*}
    which is because 
    $$ \frac{\delbct}{\ppc\ppt}-1 = \frac{\npt\npc}{\np(\np-1)}\cdot \frac{\np}{\npc}\frac{\np}{\npt} - 1 = \frac{\np}{\np-1}-1 = \frac{1}{\np-1} .$$ 
    Therefore, the off-diagonal element of $ \cov(\Phi_n(\rho_1,\rho_0))$ equals $\sigma_{01}/W$.
    
\end{proof}

\subsubsection{The proof of Lemma \ref{lem:lfclt}}

\noindent First, we state the multivariate Lindeberg-Feller central limit theorem (CLT).
\begin{lemma}
\label{lem:clt}
For each $n\ge1$, let $\{X_{n,m}:1\le m\le n\}$ be independent random vectors with $\E(X_{n,m})=\mu_{n,m}$. Let 
    \begin{equation*}
    V_n = \sum_{m=1}^n \cov (X_{n,m}) 
    \label{condclt1}
    \end{equation*}
    and $\nu_n^2 = \lambda_{min}(V_n)$ be the minimum eigenvalue of $V_n$. Suppose $\nu_n^2 > 0$ and
    \begin{equation}
        \lim_{n\to\infty} \frac{1}{\nu_n^2}\sum_{m=1}^n \E \left[\|X_{n,m}- \mu_{m,n}\|^2 \I \left(\|X_{n,m}-\mu_{m,n}\|^2 > \varepsilon \nu_n^2 \right) \right] = 0, \  \forall \varepsilon>0. 
        \label{eq:condclt}
    \end{equation} Then as $n\to\infty$,
    $$ V_n^{-\frac{1}{2}}  \sum_{m=1}^n \left( X_{n,m} -\mu_{n,m}  \right) \xrightarrow{d} \mathcal{N}(0,I).$$
\end{lemma}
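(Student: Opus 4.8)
The plan is to deduce this multivariate statement from the classical one-dimensional Lindeberg--Feller CLT by whitening and then invoking the Cram\'er--Wold device. Since $\nu_n^2 = \lambda_{min}(V_n) > 0$ by hypothesis, $V_n$ is positive definite and $V_n^{-1/2}$ is well defined, so I would set $Y_{n,m} = V_n^{-1/2}(X_{n,m} - \mu_{n,m})$: the $Y_{n,m}$ are rowwise independent, mean zero, and satisfy $\sum_{m=1}^n \cov(Y_{n,m}) = V_n^{-1/2}V_n V_n^{-1/2} = I$. It then suffices to prove $S_n := \sum_{m=1}^n Y_{n,m} \xrightarrow{d} \mathcal{N}(0,I)$.

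Next I would fix a unit vector $t\in\mathbb{R}^d$ and study the real triangular array $\{t^{T} Y_{n,m}\}$: these are rowwise independent, mean zero, and $\sum_{m=1}^n \var(t^{T} Y_{n,m}) = t^{T} I t = 1$ for every $n$. The key step is to verify the scalar Lindeberg condition for this array. Using $|t^{T} Y_{n,m}| \le \|Y_{n,m}\|$ (Cauchy--Schwarz) together with $\|Y_{n,m}\| \le \|V_n^{-1/2}\|_{\mathrm{op}}\,\|X_{n,m}-\mu_{n,m}\| = \nu_n^{-1}\|X_{n,m}-\mu_{n,m}\|$ --- since the largest eigenvalue of $V_n^{-1/2}$ equals $\lambda_{min}(V_n)^{-1/2}$ --- one sees that $\{|t^{T} Y_{n,m}| > \varepsilon\} \subseteq \{\|X_{n,m}-\mu_{n,m}\|^2 > \varepsilon^2\nu_n^2\}$ and $(t^{T} Y_{n,m})^2 \le \nu_n^{-2}\|X_{n,m}-\mu_{n,m}\|^2$, so that
\begin{equation*}
\sum_{m=1}^n \E\!\left[(t^{T} Y_{n,m})^2\,\I\!\left(|t^{T} Y_{n,m}| > \varepsilon\right)\right] \le \frac{1}{\nu_n^2}\sum_{m=1}^n \E\!\left[\|X_{n,m}-\mu_{n,m}\|^2\,\I\!\left(\|X_{n,m}-\mu_{n,m}\|^2 > \varepsilon^2\nu_n^2\right)\right] \to 0
\end{equation*}
by \eqref{eq:condclt} (applied with $\varepsilon^2$ in place of $\varepsilon$). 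The classical scalar Lindeberg--Feller theorem then gives $t^{T} S_n = \sum_{m=1}^n t^{T} Y_{n,m} \xrightarrow{d} \mathcal{N}(0,1)$.

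Finally, for an arbitrary $t\neq 0$ I would write $t = \|t\|u$ with $\|u\|=1$ and conclude $t^{T} S_n \xrightarrow{d} \mathcal{N}(0,\|t\|^2)$, which is the law of $t^{T} Z$ for $Z\sim\mathcal{N}(0,I)$; the case $t=0$ is trivial. Since this holds for every $t$, the Cram\'er--Wold device upgrades it to $S_n \xrightarrow{d}\mathcal{N}(0,I)$, which unwinds to the claimed statement. I do not expect a serious obstacle: the only delicate points are justifying the operator-norm identity $\|V_n^{-1/2}\|_{\mathrm{op}} = \nu_n^{-1}$ so that the Lindeberg condition genuinely transfers from the vector array to every scalar projection, and noting that $\sum_m\var(t^{T} Y_{n,m})$ equals $\|t\|^2$ identically in $n$, so the normalization is automatic and no separate uniform-negligibility verification is needed beyond the Lindeberg condition itself. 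A direct characteristic-function argument in $\mathbb{R}^d$ mimicking the scalar proof would work equally well, but it repeats essentially the same computation, so I would prefer the Cram\'er--Wold reduction.
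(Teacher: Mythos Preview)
The paper does not actually prove Lemma~\ref{lem:clt}; it is introduced with ``First, we state the multivariate Lindeberg--Feller central limit theorem (CLT)'' and then immediately used as a black box in the proof of Lemma~\ref{lem:lfclt}. So there is no paper proof to compare against.

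That said, your argument is correct and is the standard way to establish this result: whiten by $V_n^{-1/2}$, reduce to one dimension via Cram\'er--Wold, and check that the vector Lindeberg condition \eqref{eq:condclt} dominates the scalar Lindeberg condition for every projection. The two inequalities you use --- $|t^{T}Y_{n,m}|\le\|Y_{n,m}\|$ for unit $t$ and $\|V_n^{-1/2}\|_{\mathrm{op}}=\lambda_{\min}(V_n)^{-1/2}=\nu_n^{-1}$ --- are exactly what is needed, and applying \eqref{eq:condclt} with $\varepsilon^2$ in place of $\varepsilon$ is legitimate since the hypothesis holds for all $\varepsilon>0$. Nothing is missing.
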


\begin{lemma}
\label{lem:lfclt}
Under Conditions \ref{cond:lf1} and \ref{cond:lf2}, $\sqrt{W} \Phi_n(\rho_1,\rho_0) \simd\mathcal{N}(0, \Sigma_n)$ as $\B\to\infty$. 
\end{lemma}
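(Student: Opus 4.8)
The plan is to view $\sqrt{W}\,\Phi_n(\rho_1,\rho_0)=\spb X_\p$ as a sum of $\B$ independent random vectors and apply the multivariate Lindeberg--Feller CLT, Lemma~\ref{lem:clt}, with the array index taken to be $\B$. The summands $X_\p$ are independent across $\p$ because treatment is assigned independently across strata. Since $\rho_1,\rho_0$ are the full-sample weighted means of the potential outcomes, $\sum_i\wps(\ypsk-\rho_z)=0$, hence $\spb\mu_\p:=\spb\E(X_\p)=\sqrt{W}\,\E[\Phi_n(\rho_1,\rho_0)]=0$; so $\sqrt{W}\,\Phi_n(\rho_1,\rho_0)=\spb(X_\p-\mu_\p)$ is already centered, with $z$-component of $X_\p-\mu_\p$ equal to $W^{-1/2}\sum_{i\in\p}(\Indk/\ppk-1)\wps(\ypsk-\rho_z)$.

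Next I would identify the normalizer: by the across-strata independence together with the covariance computation of the preceding subsection, $V_\B:=\spb\cov(X_\p)=\cov(\sqrt{W}\,\Phi_n(\rho_1,\rho_0))$ has entries $\sigma_1^2,\sigma_{01},\sigma_0^2$, i.e.\ $V_\B=\Sigma_n$. I would then record that $\nu_\B^2=\lambda_{\min}(\Sigma_n)$ is positive and bounded away from $0$ along the sequence, which is the non-degeneracy content of Condition~\ref{cond:lf1}; under the overlap condition ($\ppk$ bounded away from $0$ and $1$), $1/\ppk$ — and hence $\Sigma_n$ itself — also stays bounded.

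The substantive step, and the main obstacle, is the Lindeberg condition \eqref{eq:condclt}, $\nu_\B^{-2}\spb\E[\|X_\p-\mu_\p\|^2\,\I(\|X_\p-\mu_\p\|^2>\varepsilon\nu_\B^2)]\to0$ for every $\varepsilon>0$. The difficulty is that $X_\p$ is not itself a sum of independent within-stratum pieces (assignment within a stratum is without replacement), so one cannot reduce to unit-level summands but must bound the whole vector $X_\p-\mu_\p$. Using the overlap bound to control $1/\ppk$, each component is deterministically at most $W^{-1/2}$ times $\np$ times the largest centered weighted outcome in the stratum, so $\max_\p\|X_\p-\mu_\p\|^2$ is governed by per-stratum size/weight/outcome magnitude relative to $W=\ssn\wps$; Condition~\ref{cond:lf2} is precisely the Lindeberg/Feller-type negligibility hypothesis that forces this maximum to vanish. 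Once $\max_\p\|X_\p-\mu_\p\|^2<\varepsilon\nu_\B^2$ — true for all large $\B$ since $\nu_\B^2$ is bounded below — the indicator in \eqref{eq:condclt} is identically zero and the condition holds trivially; alternatively, if Condition~\ref{cond:lf2} is phrased as a Lyapunov-type bound one checks $\nu_\B^{-(2+\delta)}\spb\E\|X_\p-\mu_\p\|^{2+\delta}\to0$ and deduces \eqref{eq:condclt} by the usual truncation. Lemma~\ref{lem:clt} then yields $\Sigma_n^{-1/2}\sqrt{W}\,\Phi_n(\rho_1,\rho_0)\xrightarrow{d}\mathcal N(0,I_2)$, i.e.\ $\sqrt{W}\,\Phi_n(\rho_1,\rho_0)\simd\mathcal N(0,\Sigma_n)$ (and to the limiting matrix if $\Sigma_n$ converges). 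The point requiring care is to state Condition~\ref{cond:lf2} so that the $1/\ppk$ factors and the centering are already absorbed, leaving nothing beyond the overlap bound to handle separately.
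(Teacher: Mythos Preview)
Your proposal follows essentially the same route as the paper: apply the multivariate Lindeberg--Feller CLT (Lemma~\ref{lem:clt}) to the independent stratum summands $X_\p$, identify $V_\B=\Sigma_n$ from the preceding covariance calculation, invoke Condition~\ref{cond:lf2} for the Lindeberg condition and Condition~\ref{cond:lf1} for non-degeneracy, and use the definition of $\rho_z$ to show $\spb\E X_\p=0$. The paper's own proof is terser---it simply asserts that Condition~\ref{cond:lf2} implies \eqref{eq:condclt}---whereas you spell out possible mechanisms (max negligibility or Lyapunov), but the argument is the same.
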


\begin{proof}[Proof of Lemma \ref{lem:lfclt}]
    In this proof, we apply the Lindeberg-Feller CLT to $X_\p$ defined in Section \ref{sec3-1}. We have derived the covariance matrix of $\spb X_\p$. Condition \ref{cond:lf2} implies the Lindeberg condition (\ref{eq:condclt}), so by Lemma \ref{lem:clt}, 
    $$\Sigma_{n}^{-\frac{1}{2}}\cdot  \spb \left(X_\p - \E X_\p\right) \ \xrightarrow{d} \  \mathcal{N}(0, I_2)\ \ \text{as}\ \B\to\infty.$$
    By the definition of $\rho_{z}$, the $(z+1)$th entry of $\spb  \E X_\p $ is
    \begin{equation*}
        \begin{split}
         \spb\sump  \wps(\ypsk-\rho_{z}) = \left(\frac{\spb\sump  \wps\ypsk}{\spb\sump  \wps}-\rho_{z} \right)
         \cdot \left(\spb\sump  \wps\right) = 0,\ z=0,1,
        \end{split}
    \end{equation*}
    therefore,
    $$\Sigma_{n}^{-\frac{1}{2}}\cdot\Phi_n(\rho_1,\rho_0) =\Sigma_{n}^{-\frac{1}{2}}\cdot \spb X_\p \ \xrightarrow{d} \  \mathcal{N}(0, I_2) .$$
    
\end{proof}

\subsubsection{The proof of Theorem \ref{thm1}}
\begin{proof}[Proof of Theorem \ref{thm1}]
By Lemmas \ref{lem:wlln} and \ref{lem:lfclt} and Slutsky's Theorem,
\begin{align*}
    \Sigma_{n}^{-\frac{1}{2}} \cdot
        \sqrt{\facB} (\hat\theta_n - \theta_n) &= \ \Sigma_{n}^{-\frac{1}{2}}\cdot I_2 \cdot \left(-\nabla \Phi_n\right)^{-1} \sqrt{\facB}\Phi_n(\theta_n) \\
    &= \  \Sigma_{n}^{-\frac{1}{2}} \cdot I_2 \cdot \left(-\nabla \Phi_n\right)^{-1} \cdot  \Sigma_{n}^{\frac{1}{2}} \cdot  \Sigma_{n}^{-\frac{1}{2}} \cdot \sqrt{\facB} \Phi_n(\rho_1,\rho_0)  \\
    &\xrightarrow{d}  \ \mathcal{N}(0, I_2).
\end{align*}
Then, note that
    \begin{align*}
          \left( \begin{array}{ccc}
       1 & -1 
    \end{array} \right) \left( \begin{array}{ccc}
        \sigma_0^2 & \sigma_{01}  \\
        \sigma_{01} & \sigma_1^2   
        \end{array} \right)\left( \begin{array}{ccc}
       1  \\
       -1
    \end{array} \right) 
     = \sigma_0^2 - 2\sigma_{01} + \sigma_1^2,
    \end{align*}
    thus, $(\sigma_0^2 - 2\sigma_{01} + \sigma_1^2)^{-1/2} (\hat\tau-\tau) \xrightarrow{d} \mathcal{N}(0,1).$
\end{proof}

\subsection{Proposition \ref{lem:ab2}}

We first state a more complete version of Proposition \ref{lem:ab2}.
\begin{proposition}
\label{lemS:ab}
Consider an RCT with $n$ units, where $n_1$ units are randomly assigned to treatment and $n_0$ to control. For each subject $\s$, let $\ai$ and $\bi$ be its potential outcomes to treatment and control. Denote $\baraobs$ and $\barbobs$ as the observed mean responses for units assigned to treatment and control, respectively. Let $\bara = \sum_\s \ai / n$, $\barb = \sum_\s \bi /n$, $\sigma_1^2 = \sum_\s (\ai -\bara)^2 / n$, $\sigma_0^2 = \sum_\s (\bi -\barb)^2 / n$, and $\sigma_{10} = \sum_\s (\ai-\bara)(\bi-\barb) / n$. Then
$$ \E (\baraobs) = \bara,\ \ \var(\baraobs) = \frac{n-n_1}{n-1}\frac{\sigma_1^2}{n_1}, \ \ \cov(\baraobs, \barbobs) = - \frac{1}{n-1} \sigma_{10},\ \  \text{and} $$
$$ \var(\baraobs - \barbobs) = \frac{n-n_1}{n-1}\frac{\sigma_1^2}{n_1} + \frac{n-n_1}{n-1}\frac{\sigma_0^2}{n_1} + \frac{2}{n-1} \sigma_{10} \le \frac{n}{n-1} \left(\frac{\sigma_1^2}{n_1} + \frac{\sigma_0^2}{n_1} \right). $$
Let $y_{1|1},y_{2|1},\ldots,y_{n_1|1}$ and $y_{1|0},y_{2|0},\ldots, y_{n_1|0}$ denote the $n_1$ and $n_0$ observed responses for treatments A and B, respectively. Let $s_1^2$ and $s_0^2$ be the sample variances of observed responses $y_{i|1} $'s and $y_{i|0}$'s, respectively. The following are two upward-biased or conservative estimators of $\var(\baraobs - \barbobs)$:
$$ \frac{s_1^2}{n_1} + \frac{s_0^2}{n_0} , $$
and it has a nonnegative bias of $ ( \sigma_1^2 + \sigma_0^2- 2\sigma_{10} )/(n-1)$;
$$ \frac{1}{n_1 n_0} \sum_{i=1}^{n_1} \sum_{j=1}^{n_1} (y_{i|1}-   y_{j|0})^2 - \frac{1}{n_1} \sum_{i=1}^{n_1} (y_{i|1} - \baraobs)^2 - \frac{1}{n_0} \sum_{j=1}^{n_0} (y_{j|0} - \barbobs)^2 
, $$ which is nonnegative when $n_1=1$ or $n_0=1$, having a nonnegative bias of $(\bara - \barb)^2$.
\end{proposition}

\begin{proof}[Proof of Proposition \ref{lemS:ab}]
By simple algebra,
$$ \ssn \sum_{j\neq \s}(\ai-\bara)(y_{j0}-\barb) + \ssn(\ai-\bara)(\bi-\barb)  =  \ssn \sum_{j=1}^n(\ai-\bara)(y_{j0}-\barb) = 0,$$
so we have
\begin{align*}
    &\E \left[ \frac{1}{n_1 n_0} \sum_{i=1}^{n_1} \sum_{j=1}^{n_0} (y_{i|1}-   y_{j|0})^2 \right] = 
    \frac{1}{n_1 n_0} \sum_{i=1}^n \sum_{\substack{j=1 \\ j\neq i}}^n \frac{n_1}{n} \frac{n_0}{n-1} (\ai - y_{j0})^2 \\
    &=  \frac{1}{n(n-1)} \sum_\s \sum_{j\neq i} [\ai - \bara -(y_{j0} - \barb) + \bara - \barb]^2 \\
    &=   \frac{1}{n(n-1)}\sum_\s \sum_{j\neq i} [(\ai - \bara)^2 + (y_{j0} - \barb)^2 - 2(\ai-\bara)(y_{j0}-\barb)] + (\bara - \barb)^2 \\
    &=  \sigma_1^2 + \sigma_0^2 + \frac{2}{n-1}\sigma_{10} + (\bara - \barb)^2.
\end{align*}
Combined with the fact that $s_1^2 $ is unbiased for $ n/(n-1)\cdot \sigma_1^2$, the expectation of the proposed estimator is
\begin{align*}
    &E \left[ \frac{1}{n_1 n_1} \sum_{i=1}^{n_1} \sum_{j=1}^{n_1} (y_{i|1}-   y_{j|0})^2  \right] - \frac{n(n_1-1)}{n-1}\frac{\sigma_1^2}{n_1} - \frac{n(n_0-1)}{n-1}\frac{\sigma_0^2}{n_0} \\
    &= \frac{n-n_1}{n-1}\frac{\sigma_1^2}{n_1} + \frac{n-n_0}{n-1}\frac{\sigma_0^2}{n_0} + \frac{2}{n-1}\sigma_{10} + (\bara - \barb)^2 \\
    &\ge   \frac{n-n_1}{n-1}\frac{\sigma_1^2}{n_1} + \frac{n-n_0}{n-1}\frac{\sigma_0^2}{n_0} + \frac{2}{n-1}\sigma_{10} = \var(\baraobs - \barbobs).
\end{align*}
Thus,
$$ \frac{1}{n_1 n_0} \sum_{i=1}^{n_1} \sum_{j=1}^{n_0} (y_{i|1}-   y_{j|0})^2 - \frac{n_1-1}{n_1} s_1^2- \frac{n_0-1}{n_0} s_0^2$$
is a conservative estimator of $\var(\baraobs - \barbobs) $, and it has a bias of $ (\bara - \barb)^2 $.
When $n_1=1$, the above estimator becomes
$$\frac{1}{n_0} \sum_{j=1}^{n_1} (y_{1|1}-   y_{j|0})^2- \frac{n_0-1}{n_0} s_0^2 = \frac{1}{n_0} \sum_{j=1}^{n_0} (y_{1|1}-y_{j|0})^2 - \frac{1}{n_0} \sum_{j=1}^{n_0} (y_{j|0} - \barbobs)^2, $$
which is nonnegative because $\barbobs$ minimizes $ \sum_{j=1}^{n_1} (y_{j|0} - u)^2$ as a function of $u.$  
\end{proof}

\subsection{Remark \ref{rmk:alterv}}

\begin{remark}
\label{rmk:alterv}
    A computationally expedient equivalent to the small stratum variance estimator $ \nu_{\p}^{(s)} (r_1,r_0)$ (defined in \eqref{eq:est3}) is
    \begin{align*}
      &\frac{2}{\npt \npc}  \binom{\np }{2} s_{\p(1,0)}^2(r_1,r_0)\\
      &- 
      \left( \frac{1}{\npt} + \frac{1}{\npc} \right)\left[ \sum_{\substack{i\in\p, \\ \Zps=1}} \{\gamst(r_{1})- \bgamobst(r_{1})\}^2 + \sum_{\substack{i\in\p, \\ \Zps=0}} \{\gamsc(r_{0}) - \bgamobsc(r_{0})\}^2 \right],
    \end{align*}
    where $ s_{\p(1,0)}^2(r_1,r_0) $ is the sample variance of observed $\gamst(r_{1})$'s and $\gamsc(r_{0})$'s in stratum $\p$: $\{\gamst(r_{1}): \s\in\p, \Zps = 1\} \cup \{\gamsc(r_{0}): \s\in\p, \Zps = 0\} $.
\end{remark}

\begin{proof}[Proof of Remark \ref{rmk:alterv}]
    Follow the notation of Proposition \ref{lem:ab2}.
    A known result of the sample variance is
    $$ \frac{1}{n-1}\sum_{i=1}^n (x_i-\bar{x}) = \frac{1}{\binom{n}{2}} \sum_{i=1}^n \sum_{j\neq i} \frac{1}{2} (x_i - x_j)^2, $$
    thus,
    $$ \binom{n_1+n_1}{2} s_{(1,0)}^2 = \sum_{i=1}^{n_1} \sum_{j \neq i} \frac{1}{2}(y_{i|1}-y_{j|1})^2 + \sum_{i=1}^{n_1} \sum_{j \neq i} \frac{1}{2}(y_{i|0}-y_{j|0})^2 + \sum_{i=1}^{n_1}\sum_{j=1}^{n_1}\frac{1}{2} (y_{i|1}-y_{j|0})^2, $$
    where $s_{(1,0)}^2$ is the sample variance of $\{y_{1|1},\ldots y_{n_1|1}, y_{1|0},\ldots, y_{n_1|0} \}$. Combined with the fact that
    $$\binom{n_1}{2} s_1^2 = \sum_{i=1}^{n_1} \sum_{j \neq i} \frac{1}{2}(y_{i|1}-y_{j|1})^2 \  \text{ and }  \ \binom{n_1}{2} s_0^2 = \sum_{i=1}^{n_1} \sum_{j \neq i} \frac{1}{2}(y_{i|0}-y_{j|0})^2, $$
    we have
    $$\frac{1}{2} \sum_{i=1}^{n_1}\sum_{j=1}^{n_1}(y_{i|1}-y_{j|0})^2 =\binom{n_1+n_1}{2} s_{(1,0)}^2 - \binom{n_1}{2} s_1^2 - \binom{n_1}{2} s_0^2. $$
    Therefore, the small strata variance estimator can be written as
    \begin{align*}
        \nu_{\p}^{(s)} (\rho_1,\rho_0) = &\ \frac{2}{\npc \npt} \left[ \binom{\np }{2} s_{\p(0,1)}^2 - \binom{\npc}{2}s_{\p 0}^2 - \binom{\npt}{2}s_{\p1}^2 \right]- \frac{\npc-1}{\npc} s_{\p 0}^2 - \frac{\npt-1}{\npt} s_{\p 1}^2, \\ 
        =&\  \frac{2}{\npc \npt}  \binom{\np }{2} s_{\p(0,1)}^2- \left( \frac{1}{\npc} + \frac{1}{\npt} \right)\left[ (\npc-1) s_{\p 0}^2 + (\npt-1) s_{\p 1}^2 \right].
    \end{align*}
\end{proof}

\subsection{Remark \ref{rmk:varbias}}
\label{sec:sm-remark-varbias}
\begin{remark}\label{rmk:varbias-full}
Under constant treatment effect for all units in all strata, $ \nu_{\p}^{(e_\p)} (\rho_1,\rho_0)$, $e_\p\in\{l,s\}$ are unbiased for $\var \left( \bgamobst(\rho_1) - \bgamobsc(\rho_0) \right)$. In general, when treatment effects mainly exhibit heterogeneity across strata, $\nu^{(l)}$ tends to have smaller bias than $\nu^{(s)}$.
Specifically, the bias of $ \nu_{\p}^{(l)} (r_1,r_0)$ as an estimator of $\var \left( \bgamobst(r_1) - \bgamobsc(r_0) \right)$ is 
\begin{align*}
&\frac{1}{\np-1} 
\var\{\gamst-\gamsc: \s\in \p\} \\ &= \frac{1}{(\np-1)} \sump \left[\np\wps \{(\ypst-\ypsc) - (r_1-r_0)\} - \sump\wps \{(\ypst-\ypsc) - (r_1-r_0) \} \right]^2,
\end{align*}
which increases with within-stratum heterogeneity of treatment effects.
The bias of $ \nu_{\p}^{(s)} (r_1,r_0)$ is $$(\bgampt-\bgampc)^2 = \left[ \sump\wps \{(\ypst-\ypsc) - (r_1-r_0) \}\right]^2,$$ which increases with heterogeneity of treatment effects across strata. 
\end{remark}

Besides the results stated in Remark \ref{rmk:varbias-full}, in this proof we also show that with constant treatment effects and $\npc=\npt$, $\var[\nu_b^{(s)}(\rho_1,\rho_0)] = [2(\npt-1)]^2  \cdot \var[\nu_b^{(l)}(\rho_1,\rho_0)]. $
\begin{proof}[Proof of Remark \ref{rmk:varbias-full}]
    By Lemma \ref{lemS:ab}, the bias of $ \nu_{\p}^{(l)} (\rho_1,\rho_0)$ is
    $$\frac{1}{\np-1} \left[ {\np} \sump (\gamsc - \bgampc)^2 + {\np} \sump (\gamst - \bgampt)^2 - {2}{\np} \sump (\gamsc - \bgampc)(\gamst - \bgampt) \right], $$
    which can be simplified to 
    \begin{align*}
        &\ \frac{1}{\np-1} \left[ {\np} \sump \gamsc (\gamsc - \bgampc) + {\np} \sump \gamst (\gamst - \bgampt) - {2}{\np} \sump (\gamsc - \bgampc)(\gamst - \bgampt) \right] \\
        =&\ \frac{1}{\np-1} \left[ {\np} \sump (\gamsc - \gamst) (\gamsc - \bgampc) + {\np} \sump (\gamst - \gamsc) (\gamst - \bgampt) \right]\\
        =&\ \frac{\np}{\np-1} \left[ \sump (\gamsc - \gamst) \gamsc - \np \bgampc (\bgampc - \bgampt) + \sump (\gamst - \gamsc) \gamst - \np\bgampt (\bgampt - \bgampc) \right]\\
        =&\ \frac{\np}{\np-1} \left[ \sump (\gamst - \gamsc)^2 -\np (\bgampt - \bgampc)^2 \right]\\
        =&\ \frac{\np^2}{\np-1}\cdot \var\{\gamsc-\gamst: \s\in \p\}.
    \end{align*}
    Since $\gamst-\gamsc = \wps [\ypst-\rho_1 - (\ypsc-\rho_0)]$, we have
    $$ \bgampt-\bgampc = \np^{-1}\sump \wps[(\ypst-\ypsc) - (\rho_1-\rho_0) ] , $$
    thus
    \begin{align*}
    &\frac{\np^2}{\np-1} 
    \var\{\gamst-\gamsc: \s\in \p\} \\ &= \frac{\np}{\np-1} \sump \left\{ \wps [(\ypst-\ypsc) - (\rho_1-\rho_0)] - \np^{-1}\sump\wps [(\ypst-\ypsc) - (\rho_1-\rho_0) ] \right\}^2.
    \end{align*}
    By Lemma \ref{lem:ab2}, the bias of $ \nu_{\p}^{(s)} (\rho_1,\rho_0)$ is 
    $$\np^2(\bgampt-\bgampc)^2 = \left\{ \sump\wps [(\ypst-\ypsc) - (\rho_1-\rho_0) ]\right\}^2.$$
    When the treatment effect is constant across all units in the experiment, $\ypst - \ypsc = \rho_1-\rho_0$ for all $i=1,\ldots,n$, so the biases of $ \nu_{\p}^{(e_\p)} (\rho_1,\rho_0)$, $e_\p=l,s$ are both zero. 
    Next, we compare the variances of $\nu_{\p}^{(l)} (\rho_1,\rho_0)$ and $\nu_{\p}^{(s)} (\rho_1,\rho_0)$ under constant treatment effect.
    We have $\gamst(\rho_1)= \wps(\ypst - \rho_1) = \wps(\ypsc - \rho_0) = \gamsc(\rho_0)$. Then the estimators are simplified to
    $$\nu_b^{(s)}(\rho_1,\rho_0)= \np^2(\bgamobst - \bgamobsc)^2 = \np^2\left[\frac{\np}{\npc}(\bgamobst - \bgampt )\right]^2,  $$
    and
    \begin{align*}
        \nu_b^{(l)}(\rho_1,\rho_0) =&\ \np^2 \frac{1}{\npc} s_0^2 + \np^2 \frac{1}{\npt} s_1^2 \\
        =&\ \np^2 \frac{1}{\npc(\npc-1)}\sum_{\substack{i: i\in\p, \\ \zps=0}} \gamst^2 + \np^2 \frac{1}{\npt(\npt-1)}\sum_{\substack{i: i\in\p, \\ \zps=1}} \gamst^2  \\
        &\ +\frac{2\np^3 \npt \bgampt }{\npc^2(\npc-1)}\bgamobst - \np^2\frac{\npt^2(\npt-1)+\npc^2(\npc-1)}{\npc^2(\npc-1)(\npt-1)}\bgamobst^2 + \text{constant},
    \end{align*}
    where the constant term does not depend on the random treatment assignment.
    When $\npc=\npt$, $\nu_b^{(s)}(\rho_1,\rho_0)=4 \np^2 (\bgamobst(\rho_1) - \bgampt(\rho_1) )^2$ and $$\nu_b^{(l)}(\rho_1,\rho_0) = -\frac{2\np^2 }{\npt -1} \bgamobst(\rho_1) ^2 + \frac{4 \np^2 }{\npt - 1}\bgampt(\rho_1) \bgamobst(\rho_1) + \text{constant}. $$
    Thus, with constant treatment effects and $\npc=\npt$, $$\var[\nu_b^{(s)}(\rho_1,\rho_0)] = [2(\npt-1)]^2  \cdot \var[\nu_b^{(l)}(\rho_1,\rho_0)]. $$
\end{proof}

\subsection{Theorem \ref{prop:cons1}}
\subsubsection{A useful lemma and claims}

To prove the theorem, we first develop a weak law of large numbers for averages of independent but not necessarily identically distributed random functions following Chapter 9 of \cite{keener2010theoretical}. Let $\{X_{n,i}, 1\le i\le n\}$, $n=1,2,\ldots$ be a triangular array of independent random variables, $K$ be a compact set in $\mathbb{R}^p$, and $C(K)$ denote the set of continuous functions defined on $K$. For $n=1,2,\ldots$ and $i=1,\ldots, n$, define $W_{n,i}(t) = h_{n,i}(t, X_{n,i}),\ t\in K$, where $h_{n,i}(t,x)$ is a continuous function of $t$ for all $x$. Then $W_{n,1},W_{n,2},\ldots,W_{n,n}$ are independent random functions taking values in $C(K)$. For $\omega \in C(K)$, define the supremum norm of $\omega$ as $\|\omega\|_\infty = \sup_{t\in K} |\omega(t)|.$

\begin{lemma}
\label{lem:wllnrandfunc}
    Let $\{W_{n,i}(t) = h_{n,i}(t,X_{n,i}), 1\le i\le n\},\ n=1,2,\ldots$ be a triangular array of independent random functions in $C(K)$, with $\E\|W_{n,i}\|_\infty < \infty,\ \forall \ n,i$. Let 
    $\bar{W}_n = \sum_{i=1}^n W_{n,i}/n$ and $\mu_n = \E\bar{W}_n$. Suppose that for any $t\in K$, $\sum_{i=1}^n\var(W_{n,i}(t))/n^2\to 0 $ as $n\to \infty$, and there exists a constant $L>0$ such that for any $n$, $i$, and $x$, 
    $$|h_{n,i}(t,x) - h_{n,i}(s,x) | \le L \|t-s \|,\ \ \forall\  t,s\in K.$$ 
    Then $$\|\bar{W}_n-\mu_n\|_\infty \xrightarrow{p}0\  \text{  as  }\  n\to\infty.$$
\end{lemma}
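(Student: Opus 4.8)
The plan is to combine a pointwise weak law of large numbers with the equicontinuity supplied by the uniform Lipschitz hypothesis, and then upgrade pointwise control to uniform control via compactness of $K$. I will proceed in three steps: (i) pointwise convergence at a fixed $t$; (ii) a deterministic equi-Lipschitz bound for the centered averages $\bar W_n - \mu_n$; (iii) a finite-net chaining argument.

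\emph{Step 1 (pointwise WLLN).} Fix $t\in K$. Since $\E\|W_{n,i}\|_\infty<\infty$, each $W_{n,i}(t)$ is integrable, so $\mu_n(t)=\E\bar W_n(t)$ is well defined and finite. The $W_{n,1}(t),\ldots,W_{n,n}(t)$ are independent, hence $\var(\bar W_n(t)) = n^{-2}\sum_{i=1}^n \var(W_{n,i}(t))\to 0$ by hypothesis, and Chebyshev's inequality gives $\bar W_n(t)-\mu_n(t)\xrightarrow{p}0$ for each fixed $t$.

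\emph{Step 2 (equi-Lipschitz centered averages).} I will observe that the Lipschitz bound transfers through both averaging and centering. For any realization of $X_{n,1},\ldots,X_{n,n}$ and any $t,s\in K$,
$$|\bar W_n(t)-\bar W_n(s)| \le \frac1n\sum_{i=1}^n |h_{n,i}(t,X_{n,i})-h_{n,i}(s,X_{n,i})| \le L\|t-s\|.$$
Taking expectations yields $|\mu_n(t)-\mu_n(s)|\le L\|t-s\|$, so $|(\bar W_n-\mu_n)(t)-(\bar W_n-\mu_n)(s)|\le 2L\|t-s\|$ uniformly in $n$ and in the realization; that is, $\{\bar W_n-\mu_n\}$ is deterministically equi-Lipschitz with constant $2L$.

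\emph{Step 3 (chaining).} Fix $\varepsilon>0$ and set $\delta=\varepsilon/(4L)$. By compactness, choose a finite $\delta$-net $\{t_1,\ldots,t_N\}\subseteq K$. For any $t\in K$, picking $t_j$ with $\|t-t_j\|\le\delta$ gives $|(\bar W_n-\mu_n)(t)| \le \max_{1\le j\le N}|(\bar W_n-\mu_n)(t_j)| + \varepsilon/2$, hence $\|\bar W_n-\mu_n\|_\infty \le \max_j|(\bar W_n-\mu_n)(t_j)| + \varepsilon/2$. Step 1 applied at each of the finitely many $t_j$, together with a union bound, gives $\P\big(\max_j|(\bar W_n-\mu_n)(t_j)|>\varepsilon/2\big)\to0$, so $\P(\|\bar W_n-\mu_n\|_\infty>\varepsilon)\to0$; since $\varepsilon>0$ was arbitrary, $\|\bar W_n-\mu_n\|_\infty\xrightarrow{p}0$.

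The argument is essentially routine once assembled; the step I regard as the crux is the second one — the fact that the deterministic Lipschitz modulus survives both the $n^{-1}\sum_i$ averaging and the subtraction of $\mu_n$. This is exactly what lets pointwise convergence on a fixed finite net propagate to uniform convergence over $K$, with no need for any uniform-variance or uniform-integrability strengthening of the hypotheses beyond what is stated.
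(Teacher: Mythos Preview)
Your proof is correct and is in fact cleaner than the paper's. Both arguments reduce uniform convergence to pointwise convergence on a finite $\delta$-net via compactness and a Chebyshev-based pointwise WLLN, but they handle the oscillation term differently. The paper follows the template of Keener's textbook: it decomposes $\|\bar W_n-\mu_n\|_\infty$ into a pointwise term at net centers plus oscillation terms $\sup_{t\in O_j}|\bar W_n(t)-\bar W_n(t_j)|$, and then controls the latter \emph{probabilistically}, bounding it by the average $n^{-1}\sum_i\sup_{t\in O_j}|W_{n,i}(t)-W_{n,i}(t_j)|$ and using a second Chebyshev argument (with variance bounded by $L^2\delta^2$) to show this average concentrates near its mean, which in turn is at most $\epsilon$. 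Your Step~2 observes instead that the Lipschitz hypothesis makes $\bar W_n-\mu_n$ \emph{deterministically} $2L$-Lipschitz on every realization, so the oscillation over a $\delta$-ball is bounded by $2L\delta$ outright and no probabilistic control is needed. This shortcut is available precisely because the hypothesis is a uniform Lipschitz bound rather than a weaker stochastic equicontinuity condition; the paper's longer route would survive under the latter, but under the stated assumptions your argument is the more economical one.
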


\begin{proof}[Proof of Lemma \ref{lem:wllnrandfunc}]
    Let $\bar{h}_n(t,(X_{n,1},\ldots,X_{n,n})) = \sum_{i=1}^n h_{n,i}(t,X_{n,i}) /n. $ Then for any $n\ge 1$ and $(x_1,\ldots,x_n)$,
    \begin{align*}
        |\bar{h}_n(t,(x_1,\ldots,x_n)) - \bar{h}_n(s,(x_1,\ldots,x_n)) | = & \left| \frac{1}{n} \sum_{i=1}^n  h_{n,i}(t,x_i) - \frac{1}{n} \sum_{i=1}^n  h_{n,i}(s,x_i) \right| \\
        \le &\  \frac{1}{n} \sum_{i=1}^n | h_{n,i}(t,x_i) - h_{n,i}(s,x_i) | \\
        \le &\ \frac{1}{n} \sum_{i=1}^n L \|t-s\| = L \|t-s\|,\ \ \forall\ t,s\in K.
    \end{align*}
    For any $\epsilon>0$, there exists $\delta>0$ such that for any $n\ge 1$ and $(x_1,\ldots,x_n)$,
    $$|h_{n,i}(t,x_i) - h_{n,i}(s,x_i) | < \epsilon \ \ \text{and}\ \ |\bar{h}_n(t,(x_1,\ldots,x_n)) - \bar{h}_n(s,(x_1,\ldots,x_n)) | < \epsilon ,$$
    for all $t,s\in K$ with $\|t-s\|< \delta $.
    By the definition of $W_{n,i}$ and $\bar{W}_n$, we have
    \begin{equation}
        \label{eq:sup-randfunc}
        \E|W_{n,i}(t)-W_{n,i}(s)|< \epsilon,\ \ \text{and}\ \  \E |\bar{W}_n(t) - \bar{W}_n(s)| < \epsilon,
    \end{equation}
    for all $t,s\in K$ with $\|t-s\|< \delta $. Thus, 
    $$ |\mu_n(t)-\mu_n(s)| = |\E [\bar{W}_n(t) - \bar{W}_n(s)]| \le \E |\bar{W}_n(t) - \bar{W}_n(s)| < \epsilon. $$
    Let $B_\delta(t) = \{ s: \|s-t\|< \delta\}$ denote the open ball with radius $\delta$ about $t$. Since $K$ is compact, the open sets $B_\delta(t),\  t\in K$, covering $K$ have a finite subcover $O_j=B_\delta(t_j)$, $j=1,\ldots,m$. Then
    \begin{align*}
        \| \bar{W}_n - \mu_n \|_\infty = &\ \max_{j=1,\ldots,m} \sup_{t\in O_j} |\bar{W}_n(t) - \mu_n(t)|\\
        \le & \  \max_{j=1,\ldots,m} \sup_{t\in O_j} \left[ |\bar{W}_n(t) - \bar{W}_n(t_j)| + |\bar{W}_n(t_j) - \mu_n(t_j)| + | \mu_n(t_j) -\mu_n(t)|   \right] \\
        < &\  \max_{j=1,\ldots,m} \sup_{t\in O_j}|\bar{W}_n(t) - \bar{W}_n(t_j)| + \max_{j=1,\ldots,m}|\bar{W}_n(t_j) - \mu_n(t_j)| + \epsilon.
    \end{align*}
    Now
    \begin{align*}
        \sup_{t\in O_j}|\bar{W}_n(t) - \bar{W}_n(t_j)| = \frac{1}{n} \sup_{t\in O_j} \left| \sum_{i=1}^n [W_{n,i}(t) - W_{n,i}(t_j)] \right| 
        \le \frac{1}{n} \sum_{i=1}^n\sup_{t\in O_j} \left| W_{n,i}(t) - W_{n,i}(t_j) \right|
    \end{align*}
    is an average of independent random variables. Let $\E \sup_{t\in O_j} \left| W_{n,i}(t) - W_{n,i}(t_j) \right| = \lambda_{n,i}^\delta(t_j) $, then by Chebyshev's inequality and independence,
    \begin{align*}
        &\P \left( \left| \frac{1}{n} \sum_{i=1}^n\sup_{t\in O_j} \left| W_{n,i}(t) - W_{n,i}(t_j) \right| - \frac{1}{n} \sum_{i=1}^n \lambda_{n,i}^\delta(t_j) \right|  > \epsilon \right) \\
        &\quad  \le \frac{1}{\epsilon^2 n^2} \sum_{i=1}^n \var\left(\sup_{t\in O_j} \left| W_{n,i}(t) - W_{n,i}(t_j) \right| \right)
        \to  0 , \ \ \text{as}\ n \to \infty,
    \end{align*}
    where the last step is because
    \begin{align*}
        &\var\left(\sup_{t\in O_j} \left| W_{n,i}(t) - W_{n,i}(t_j) \right| \right) \le \E\left[\left(\sup_{t\in O_j} \left| W_{n,i}(t) - W_{n,i}(t_j) \right| \right)^2 \right] \\
        &\le\E  \left[\sup_{t\in O_j} \left| W_{n,i}(t) - W_{n,i}(t_j) \right|^2 \right] = \E \left[\sup_{t\in O_j} \left| h_{n,i}(t, X_{n,i}) - h_{n,i}(t_j, X_{n,i}) \right|^2 \right] \\
        &\le  \E\left[ L^2 \|t-t_j\|^2 \right] \le L^2\delta^2 .
    \end{align*}
    Thus, $$\frac{1}{n} \sum_{i=1}^n\sup_{t\in O_j} \left| W_{n,i}(t) - W_{n,i}(t_j) \right| - \frac{1}{n} \sum_{i=1}^n \lambda_{n,i}^\delta(t_j) \xrightarrow{p} 0, $$ and $ \sum_{i=1}^n \lambda_{n,i}^\delta(t_j) /n < \epsilon $ follows from the first part of (\ref{eq:sup-randfunc}). Besides, for any $t\in K$,
    \begin{align*}
        \P \left( \left| \frac{1}{n} \sum_{i=1}^n  W_{n,i}(t)  - \mu_n(t) \right| > \epsilon \right) \le \frac{1}{\epsilon^2 n^2} \sum_{i=1}^n \var \left(  W_{n,i}(t)\right) \to  \ 0 , \ \ \text{as}\ n \to \infty.
    \end{align*}
    Hence, $\bar{W}_n(t) - \mu_n(t) \xrightarrow{p} 0 $ for any $t\in K$.  Therefore,
    \begin{align*}
        \| \bar{W}_n - \mu_n \|_\infty <&\  \max_{j=1,\ldots,m}\left[ \frac{1}{n} \sum_{i=1}^n\sup_{t\in O_j} \left| W_{n,i}(t) - W_{n,i}(t_j) \right| - \frac{1}{n} \sum_{i=1}^n \lambda_{n,i}^\delta(t_j)  \right]  + \epsilon \\
        & + \max_{j=1,\ldots,m}|\bar{W}_n(t_j) - \mu_n(t_j)| + \epsilon ,
    \end{align*}
    where the two maximums both converge to zero in probability. Thus,
    $$\P\left(\| \bar{W}_n - \mu_n \|_\infty  > 4\epsilon \right) \to 0 \ \ \text{as}\ n\to\infty. $$ 
\end{proof}

\noindent Now we state some useful claims without proof. 

\begin{claim}
    \label{clm:sumto0}
    Let $a_1,\ldots, a_m\ge0$ and $n_1,\ldots,n_m\ge1$. Denote $N = \sum_{i=1}^m n_i^2$. Then $$ \left[\sum_{i=1}^m \frac{n_i}{N} a_i\right]^2 \le \frac{1}{N}\left[\sum_{i=1}^m \frac{n_i^2}{N} a_i\right] \left( \sum_{i=1}^m a_i\right). $$ 
\end{claim}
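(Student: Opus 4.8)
The plan is to clear the common factor from both sides and then recognize the resulting inequality as a one-line consequence of the Cauchy--Schwarz inequality. Concretely, since every $n_i\ge1$ we have $N=\sum_{i=1}^m n_i^2\ge m\ge1>0$, so it is legitimate to multiply the claimed inequality through by $N^2$; doing so, both sides lose their $1/N$ and $1/N^2$ factors and the claim becomes equivalent to
\[
\Bigl(\sum_{i=1}^m n_i a_i\Bigr)^2 \;\le\; \Bigl(\sum_{i=1}^m n_i^2 a_i\Bigr)\Bigl(\sum_{i=1}^m a_i\Bigr).
\]

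Next I would use the hypothesis $a_i\ge0$ to split $n_i a_i=\bigl(n_i\sqrt{a_i}\,\bigr)\cdot\sqrt{a_i}$ and apply Cauchy--Schwarz with $u_i=n_i\sqrt{a_i}$ and $v_i=\sqrt{a_i}$. Since $\sum_i u_iv_i=\sum_i n_i a_i$, $\sum_i u_i^2=\sum_i n_i^2 a_i$, and $\sum_i v_i^2=\sum_i a_i$, the bound $\bigl(\sum_i u_iv_i\bigr)^2\le\bigl(\sum_i u_i^2\bigr)\bigl(\sum_i v_i^2\bigr)$ is exactly the displayed inequality, and dividing back by $N^2$ returns the original statement.

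I do not expect any real obstacle: the only things that need checking are that $N>0$ (so the rescaling is valid), which follows from $n_i\ge1$, and that $\sqrt{a_i}$ is real, which follows from $a_i\ge0$; apart from guaranteeing $N\neq0$, the assumption $n_i\ge1$ is not otherwise used. An alternative would be to invoke the weighted Cauchy--Schwarz inequality directly, with weights $a_i$, applied to the constant vector and the vector $(n_i)_{i=1}^m$, but the factorization above is the cleanest route.
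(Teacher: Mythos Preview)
Your argument is correct. The paper in fact states this claim \emph{without proof} (it appears in a list prefaced by ``Now we state some useful claims without proof''), so there is nothing to compare against; your Cauchy--Schwarz factorization $n_i a_i=(n_i\sqrt{a_i})\cdot\sqrt{a_i}$ after clearing the common $1/N^2$ is exactly the natural one-line verification and would serve as a clean supplied proof.
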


\begin{claim}
    \label{clm:lip-const}
    Let $f(x) = x^T A x + B^Tx + c$ be a quadratic function on $\bar{B}_r(0)$, where $A$ is an $m\times m$ symmetric matrix, $B$ is an $m$-dimensional constant vector, $c$ is a constant, and $\bar{B}_r(0) $ is a compact ball of radius $r$ in $\mathbb{R}^m$. Then $f$ is Lipschitz continuous with constant 
    $$ \max_{x\in \bar{B}_r(0)} \|2 Ax + B\| \le 2r \cdot \max_{\|x\|\le1} \|Ax\| + \|B\| = 2r \cdot \max \{ |\lambda|: \lambda \text{ is an eigenvalue of } A \} + \|B\| .$$
\end{claim}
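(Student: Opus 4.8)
The plan is to establish the Lipschitz bound by controlling the gradient of $f$ uniformly on $\bar{B}_r(0)$, and then to estimate that gradient using the symmetry of $A$ together with the spectral theorem. First I would observe that $f$ is a polynomial, hence smooth on all of $\mathbb{R}^m$, and that because $A$ is symmetric its gradient is $\nabla f(x) = (A+A^T)x + B = 2Ax + B$. A direct computation for quadratics gives, for any $x,y$,
$$ f(y) - f(x) = (y-x)^T\bigl[A(x+y) + B\bigr] = (y-x)^T \nabla f\!\left(\tfrac{x+y}{2}\right). $$
Since $\bar{B}_r(0)$ is convex, $\tfrac{x+y}{2}\in\bar{B}_r(0)$ whenever $x,y\in\bar{B}_r(0)$, so Cauchy--Schwarz yields $|f(y)-f(x)| \le \|y-x\|\cdot\|\nabla f(\tfrac{x+y}{2})\| \le \bigl(\max_{z\in\bar{B}_r(0)}\|2Az+B\|\bigr)\|y-x\|$; the maximum is attained because $z\mapsto\|2Az+B\|$ is continuous and $\bar{B}_r(0)$ is compact. (Equivalently one may integrate $\nabla f$ along the segment from $x$ to $y$, which stays in $\bar{B}_r(0)$ by convexity.) This proves the first inequality, with the stated Lipschitz constant.

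Second, I would bound $\max_{z\in\bar{B}_r(0)}\|2Az+B\|$. For $z\in\bar{B}_r(0)$ with $z\neq0$, setting $u=z/\|z\|$ gives $\|Az\| = \|z\|\,\|Au\| \le r\max_{\|u\|\le1}\|Au\|$, and $\|Az\|\le r\max_{\|u\|\le1}\|Au\|$ holds trivially at $z=0$. Combining this with the triangle inequality $\|2Az+B\| \le 2\|Az\| + \|B\|$ and taking the supremum over $z\in\bar{B}_r(0)$ gives the middle inequality $\max_{z\in\bar{B}_r(0)}\|2Az+B\| \le 2r\max_{\|u\|\le1}\|Au\| + \|B\|$.

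Finally, the identity $\max_{\|u\|\le1}\|Au\| = \max\{|\lambda| : \lambda\text{ an eigenvalue of }A\}$ is the statement that the operator $2$-norm of a symmetric matrix equals its spectral radius: writing $A = Q\Lambda Q^T$ with $Q$ orthogonal and $\Lambda=\mathrm{diag}(\lambda_1,\dots,\lambda_m)$ by the spectral theorem, $\|Au\|^2 = \|\Lambda Q^Tu\|^2 = \sum_i \lambda_i^2 (Q^Tu)_i^2 \le (\max_i\lambda_i^2)\|u\|^2$, with equality attained at a unit eigenvector for the eigenvalue of largest modulus. Assembling the three steps proves the claim. There is no substantive obstacle here; the only points needing care are the convexity of $\bar{B}_r(0)$ (so that the segment — or midpoint — between two points of the ball stays in the ball) and the use of symmetry of $A$, both in the formula $\nabla f = 2Ax+B$ and in identifying the operator norm with the spectral radius.
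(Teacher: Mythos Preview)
Your proof is correct. The paper itself states this claim \emph{without proof} (it appears in a list introduced by ``Now we state some useful claims without proof''), so there is no paper argument to compare against; your derivation --- computing $\nabla f = 2Ax+B$, using the exact quadratic identity $f(y)-f(x)=(y-x)^T\nabla f\bigl(\tfrac{x+y}{2}\bigr)$ together with convexity of the ball, and then invoking the spectral theorem for the operator norm --- is the standard and most direct route.
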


\begin{claim}
    \label{clm:eigenvalue}
    Suppose $$A_i = \left( \begin{array}{cc}
        a_i & b_i \\
        b_i & c_i
    \end{array} \right),\ \ i=1,2,\ldots $$ are a sequence of $2$ by $ 2$ symmetric matrices. If $$ M^2 \pm (a_i+c_i)M + (a_ic_i-b_i^2) \ge 0 \ \ \text{ and }\ \ \frac{a_i+c_i}{2} \in (-M,M) ,\ \ i=1,2,\ldots  $$
    for some constant $M>0$, then eigenvalues of $A_i$'s are all within the range of $[-M,M]$. Such $M$ exists if $a_i$, $b_i$, and $c_i$ have finite bounds that are uniform for all $i\ge1$.
\end{claim}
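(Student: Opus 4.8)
The plan is to read the two displayed hypotheses as a classical root-localization criterion for the characteristic polynomial of each $A_i$. First I would write the characteristic polynomial $p_i(\lambda) = \det(A_i - \lambda I) = \lambda^2 - (a_i+c_i)\lambda + (a_ic_i - b_i^2)$, whose roots are the two eigenvalues $\lambda_1^{(i)} \le \lambda_2^{(i)}$, both real since $A_i$ is symmetric. The key observation is that evaluating $p_i$ at $\pm M$ reproduces exactly the quantities in the hypothesis: $p_i(M) = M^2 - (a_i+c_i)M + (a_ic_i-b_i^2)$ and $p_i(-M) = M^2 + (a_i+c_i)M + (a_ic_i-b_i^2)$, so the stated inequality $M^2 \pm (a_i+c_i)M + (a_ic_i-b_i^2) \ge 0$ is precisely $p_i(M) \ge 0$ together with $p_i(-M) \ge 0$. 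The remaining hypothesis $(a_i+c_i)/2 \in (-M,M)$ says that the axis of symmetry of the upward parabola $p_i$, located at $\lambda^\ast = (a_i+c_i)/2 = (\lambda_1^{(i)} + \lambda_2^{(i)})/2$, lies strictly inside $(-M,M)$.

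Next I would run the standard monotonicity argument. Since $p_i$ opens upward with vertex at $\lambda^\ast$, it is strictly decreasing on $(-\infty,\lambda^\ast]$ and strictly increasing on $[\lambda^\ast,\infty)$. Because $\lambda^\ast < M$, the point $M$ lies on the increasing branch; combining $p_i(M) \ge 0 = p_i(\lambda_2^{(i)})$ with monotonicity on $[\lambda^\ast,\infty)$ forces $\lambda_2^{(i)} \le M$. Symmetrically, $\lambda^\ast > -M$ puts $-M$ on the decreasing branch, and $p_i(-M) \ge 0 = p_i(\lambda_1^{(i)})$ gives $\lambda_1^{(i)} \ge -M$. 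Hence both eigenvalues lie in $[-M,M]$ for every $i$, which is the first assertion.

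For the existence claim, suppose $|a_i|, |b_i|, |c_i| \le B$ for all $i$. I would bound the spectrum by Gershgorin's theorem (every eigenvalue satisfies $|\lambda - a_i| \le |b_i|$ or $|\lambda - c_i| \le |b_i|$, hence $|\lambda| \le 2B$), or via the explicit formula $\lambda_{\pm}^{(i)} = (a_i+c_i)/2 \pm \sqrt{((a_i-c_i)/2)^2 + b_i^2}$. Taking $M = 2B+1$, which exceeds this uniform spectral bound, the vertex satisfies $|\lambda^\ast| \le B < M$ and so lies strictly inside $(-M,M)$; and since both roots lie in $[-2B,2B]\subset(-M,M)$, writing $p_i$ in factored form $p_i(\lambda) = (\lambda - \lambda_1^{(i)})(\lambda - \lambda_2^{(i)})$ shows $p_i(\pm M) > 0$ because $\pm M$ sits outside the root interval. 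Thus all three hypotheses hold with this single $M$.

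I do not expect a serious obstacle, as the content is elementary. The one place to be careful is that the two endpoint inequalities alone are insufficient: two roots bunched far to one side of the origin can still yield $p_i(\pm M) \ge 0$, so the vertex condition $(a_i+c_i)/2 \in (-M,M)$ is genuinely needed. I would therefore make sure the write-up invokes it precisely where the branch argument requires $\lambda^\ast$ to separate $M$ from $-M$.
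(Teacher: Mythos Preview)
Your argument is correct. Note, however, that the paper explicitly introduces this claim among ``some useful claims without proof'' and does not supply its own demonstration, so there is nothing to compare against. Your approach---recognizing the hypothesis as the sign of the characteristic polynomial $p_i(\lambda)=\lambda^2-(a_i+c_i)\lambda+(a_ic_i-b_i^2)$ at $\pm M$ together with a vertex-location condition, then using monotonicity of the upward parabola on each side of the vertex---is the natural one and goes through cleanly. Your remark that the endpoint inequalities alone do not suffice (both roots could lie to one side of $[-M,M]$) is exactly the right diagnosis of why the vertex condition is needed. The existence portion via the Gershgorin bound (or the explicit eigenvalue formula) and the choice $M=2B+1$ is also fine.
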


\subsubsection{The proof of Theorem \ref{prop:cons1}}

\begin{proof}[Proof of Theorem \ref{prop:cons1}]
Denote the limit of $(\rho_1,\rho_0) $ as $(\rho_{1,\infty},\rho_{0,\infty}). $ There exists a compact set $K\subseteq \mathbb{R}^2$ such that $(\rho_1,\rho_0) \in K $ for all $\B\ge1$. Thus, we can consider
\begin{equation}
    \left\{ \frac{B}{\E  \spb \np^2 \nu_\p^{(e_\p)} (\rho_1,\rho_0)} \np^2 \nu^{(e_\p)}_\p(r_1,r_0), \p=1,\ldots \B \right\},\ B=1,2,\ldots
    \label{eq:tri-randfunc}
\end{equation}
as a triangular array of random functions in $C(K)$ with
$$ \E \left\|\frac{B}{\E  \spb \np^2 \nu_\p^{(e_\p)} (\rho_1, \rho_0)} \np^2 \nu^{(e_\p)}_\p(r_1,r_0)\right\|_\infty< \infty, $$
because $ \nu^{(e_\p)}_\p(r_1,r_0)$ is a quadratic function of $(r_1,r_0)$ on a compact set for any realization of treatment assignments $\zps, i=1,\ldots,n$. 
To apply Lemma \ref{lem:wllnrandfunc}, we need to check its remaining conditions. 
For any $(r_1,r_0)\in K$, we need to show that
\begin{align*}
    & \frac{1}{\B^2} \spb \var\left( \frac{\B}{\E  \spb \np^2 \nu_\p^{(e_\p)} (\rho_1, \rho_0)} \np^2 \nu^{(e_\p)}_\p(r_1,r_0) \right) \\
    & = \frac{1}{\left[ \E  \spb \np^2\nu_\p^{(e_\p)} (\rho_1, \rho_0)\right]^2} \spb \var\left( \np^2 \nu^{(e_\p)}_\p(r_1,r_0) \right) \to 0 \ \ \text{as}\ \B\to\infty.
\end{align*}
To simplify notation, we write $\gamsz$ for $\wps(\ypsk - r_z)$
and $s_{\p z}^2$ for $s_{\p z}^2(r_z)$ in the remainder of this proof. First,
\begin{align*}
    \var \left(\nu^{(l)}_\p(r_1,r_0) \right) = \var\left(  \frac{1}{\npt} s_{\p1}^2 +\frac{1}{\npc} s_{\p0}^2  \right) \le   2\var\left(\frac{1}{\npt} s_{\p1}^2 \right) + 2 \var\left( \frac{1}{\npc} s_{\p0}^2\right),
\end{align*}
and for fixed $(r_1,r_0)$, $s_{\p z}^2 $ is the sample variance of a simple random sample $\{ \gamsz: i\in \p, \zps=z\}$. Since $\npk\ge2$, the stratum size is at least 4, so we can use claim \ref{clm:sample-var} to find an upper bound of the variance
\begin{align*}
    \var\left(  s_{\p z}^2\right) \le &\  \frac{\np(\np-\npk)}{\npk(\npk-1)(\np-1)(\np-2)(\np-3)} (\npk \np - \np - \npk - 1) \cdot \\
    &\ \frac{1}{\np}  \sump \left( \wps (\ypsk - r_z) - \frac{1}{\np} \sump \wps (\ypsk - r_z) \right)^4\\
    =&\ \frac{1}{\npk}  \frac{\np}{\np-3} \cdot \frac{\np-\npk}{\np-2} \cdot \frac{\npk \np - \np - \npk - 1}{(\npk-1)(\np-1)} \cdot \\
    &\ \frac{1}{\np} \sump \left( \wps (\ypsk - r_z) - \frac{1}{\np} \sump \wps (\ypsk - r_z) \right)^4.
\end{align*}
Thus,
\begin{equation}
    \var\left(\np^2 \nu^{(l)}_\p(r_1,r_0) \right) \le c_1 \sump \sum_{z=1,0} \left( \wps (\ypsk - r_z) - \frac{1}{\np} \sump \wps (\ypsk - r_z) \right)^4.
    \label{eq:varnul-upbd}
\end{equation}
for some constant $c_1$. Next, 
\begin{align*}
    &\ \var\left(\nu^{(s)}_\p(r_1,r_0) \right) \\
    = &\ \var\left( \frac{1}{\npt \npc} \sum_{\substack{\s\in\p\\ \zps=1}} \sum_{\substack{\s\neq\s'\in\p\\ z_{\s'}=0}} (\gamst-\gamssc)^2 - \frac{\npt-1}{\npt} s_{\p 1}^2 - \frac{\npc-1}{\npc} s_{\p 0}^2 \right) \\
    \le &\  2 \var\left(  \frac{1}{\npt \npc} \sum_{\substack{\s\in\p\\ \zps=1}} \sum_{\substack{\s\neq\s'\in\p\\ z_{\s'}=0}}(\gamst-\gamssc)^2 \right) + 2\var\left(\frac{\npt-1}{\npt} s_{\p 1}^2 \right) + 2\var\left(\frac{\npc-1}{\npc} s_{\p 0}^2\right).
\end{align*}
For any stratum, the probability of getting every plausible treatment assignment is equal, so 
\begin{align*}
    &\E \left[ \left(\frac{1}{\npt \npc} \sum_{\substack{\s\in\p\\ \zps=1}} \sum_{\substack{\s\neq\s'\in\p\\ z_{\s'}=0}}(\gamst-\gamssc)^2\right)^2 \right] \\
    =&\ \frac{1}{\npc^2 \npt^2} \biggr[  \sump \sum_{j\in \p} \P(z_i=1,z_j=0) (\gamst-\gamma_{j0})^4 \\
    &\ \qquad  + \sump \sum_{j,j'\in\p}\P(z_i=1,z_j=z_{j'}=0) (\gamst-\gamma_{j0})^2(\gamst-\gamma_{j'0})^2 \\
    &\ \qquad + \sum_{i,i'\in\p} \sum_{j\in\p}\P(z_i=z_{i'}=1,z_j=0) (\gamst-\gamma_{j0})^2(\gamma_{i'1}-\gamma_{j0})^2\\
    &\ \qquad + \sum_{i,i'\in\p} \sum_{j,j'\in\p}\P(z_i=z_{i'}=1,z_j=z_{j'}=0) (\gamst-\gamma_{j0})^2(\gamma_{i'1}-\gamma_{j'0})^2 \biggr] \\
    =&\  \frac{1}{\npc^2 \npt^2} \biggr[  \sump \sum_{j\in \p} \frac{\npt\npc}{\np(\np-1)} (\gamst-\gamma_{j0})^4 \\
    &\ \qquad  + \sump \sum_{j,j'\in\p}\frac{\npt\npc(\npt-1)}{\np(\np-1)(\np-2)} (\gamst-\gamma_{j0})^2(\gamst-\gamma_{j'0})^2 \\
    &\ \qquad + \sum_{i,i'\in\p} \sum_{j\in\p}\frac{\npt\npc(\npc-1)}{\np(\np-1)(\np-2)} (\gamst-\gamma_{j0})^2(\gamma_{i'1}-\gamma_{j0})^2\\
    &\ \qquad + \sum_{i,i'\in\p} \sum_{j,j'\in\p}  \frac{\npt\npc(\npt-1)(\npc-1)}{\np(\np-1)(\np-2)(\np-3)} (\gamst-\gamma_{j0})^2(\gamma_{i'1}-\gamma_{j'0})^2 \biggr] ,
\end{align*}
where if $\np=3$, the last term is zero, and if $\np=2$, the last three terms are zero. Besides,
\begin{align*}
    \E \left(\frac{1}{\npt \npc} \sum_{\substack{\s\in\p\\ \zps=1}} \sum_{\substack{\s\neq\s'\in\p\\ z_{\s'}=0}}(\gamst-\gamssc)^2  \right) =\frac{1}{\np(\np-1)} \sump \sum_{j\in \p} (\gamst-\gamma_{j0})^2  ,
\end{align*}
so by calculation, the variance is
\begin{align*}
    &\var\left(  \frac{1}{\npt \npc} \sum_{\substack{\s\in\p\\ \zps=1}} \sum_{\substack{\s\neq\s'\in\p\\ z_{\s'}=0}}(\gamst-\gamssc)^2 \right) \\
    = &\ \frac{\np(\np-1)-\npt\npc}{\np^2(\np-1)^2\npt\npc}  \sump \sum_{j\in \p} (\gamst-\gamma_{j0})^4 \\
    &+ \frac{\np(\np-1)(\npt-1) - (\np-2)\npt\npc}{\np^2(\np-1)^2(\np-2)\npt\npc }  \sump \sum_{j,j'\in\p}(\gamst-\gamma_{j0})^2(\gamst-\gamma_{j'0})^2 \\
    &+ \frac{\np(\np-1)(\npc-1) - (\np-2)\npt\npc}{\np^2(\np-1)^2(\np-2)\npt\npc } \sum_{i,i'\in\p} \sum_{j\in\p}(\gamst-\gamma_{j0})^2(\gamma_{i'1}-\gamma_{j0})^2\\
    &+ \frac{\np(\np-1)(\npc-1)(\npt-1) -(\np-2)(\np-3)\npt\npc }{\np^2(\np-1)^2(\np-2)(\np-3)\npt\npc} \sum_{i,i'\in\p} \sum_{j,j'\in\p}   (\gamst-\gamma_{j0})^2(\gamma_{i'1}-\gamma_{j'0})^2.
\end{align*}
Thus, there exist constants $c_2$ and $c_2'$ such that
\begin{equation}
\begin{split}
    \var \left(\np^2  \nu^{(s)}_\p(r_1,r_0) \right) \le &\  c_2 \np  \sump \sumssp \left[ \wps(\ypst - r_1) - \wpst(y_{i'0}-r_0) \right]^4  \\
\end{split}
    \label{eq:varnus-upbd}
\end{equation}
Besides,
\begin{equation}
\begin{split}
    \E [ \np^2 \nu_\p^{(l)} (\rho_1, \rho_0) ]& = \sum_{z=0,1} \frac{1}{\npk} \np^2 \frac{1}{\np-1} \sump \left[ \wps (\ypsk - \rho_z) - \frac{1}{\np} \sum_{i'\in\p} \wpst (\ypstk - \rho_z) \right]^2 \\
    &\ge c_3  \sump \sum_{z=1,0} \left[ \wps (\ypsk - \rho_z) - \frac{1}{\np} \sum_{i'\in\p} \wpst (\ypstk - \rho_z) \right]^2
\end{split}
    \label{eq:Enul-lwbd}
\end{equation}
for some constant $c_3$ and
\begin{equation}
\begin{split}
    \E [\np^2\nu_\p^{(s)} (\rho_1, \rho_0)] &=  \frac{1}{\np(\np-1)} \sump \sumssp \left[ \np\wps(\ypst - \rho_1) - \np\wpst(y_{\s'0} - \rho_0) \right]^2 \\
    & \ge c_4   \sump \sumssp \left[ \wps(\ypst - \rho_1) -   \wpst(y_{\s'0} - \rho_0) \right]^2
\end{split}
    \label{eq:Enus-lwbd}
\end{equation}
for some constant $c_4$. 
Under the assumptions, when $e_\p=l$ for some strata, we have that as the sample size grows, by \eqref{eq:varnul-upbd} and \eqref{eq:Enul-lwbd},
\begin{align*}
    &\frac{\sum_{b:e_\p=l} \var\left( \np^2  \nu^{(l)}_\p(r_1,r_0) \right)}{ \left[\E  \sum_{b:e_\p=l} \np^2 \nu_\p^{(l)} (\rho_1, \rho_0)\right]^2} \\
    &\le  \frac{\sum_{b:e_\p=l}\np\bwp^4}{\left( \sum_{b:e_\p=l}\np\bwp^2\right)^2}\cdot \frac{\sum_{b:e_\p=l} c_1 \frac{1}{\sum_{b:e_\p=l}\np\bwp^4}  \sump\sum_{z=1,0} \left[ \wps (\ypsk - r_z) - \frac{1}{\np} \sum_{i'\in\p} \wpst (\ypstk - r_z) \right]^4}{ \left\{ \sum_{b:e_\p=l} c_3 \frac{1}{\sum_{b:e_\p=l}\np\bwp^2} \sump \left[ \wps (\ypsk - \rho_z) - \frac{1}{\np} \sum_{i'\in\p} \wpst (\ypstk - \rho_z) \right]^2 \right\}^2 } .
\end{align*}
When $e_\p=s$ for some strata, by \eqref{eq:varnus-upbd} and \eqref{eq:Enus-lwbd} we have
\begin{align*}
    &\frac{\sum_{b:e_\p=s} \var\left( \np^2  \nu^{(s)}_\p(r_1,r_0) \right)}{ \left[\E  \sum_{b:e_\p=s} \np^2 \nu_\p^{(s)} (\rho_1, \rho_0)\right]^2} \\
    &\le \frac{\sum_{b:e_\p=s}\np^2\bwp^4}{\left( \sum_{b:e_\p=s}\np^2\bwp^2\right)^2} \cdot  \frac{\sum_{b:e_\p=s}\np \cdot c_2 \frac{1}{\sum_{b:e_\p=s}\np^2\bwp^4} \sump \sumssp \left[ \wps(\ypst - r_1) -   \wpst(y_{\s'0} - r_0) \right]^4}{ \left\{ \sum_{b:e_\p=s} c_4 \frac{1}{\sum_{b:e_\p=s}\np^2\bwp^2} \sump \sumssp \left[ \wps(\ypst - \rho_1) -   \wpst(y_{\s'0} - \rho_0) \right]^2 \right\}^2 },
\end{align*}
which tends to zero following Claim \ref{clm:sumto0}. As $\B\to\infty$, we have ${\sum_{b:e_\p=l}\np\bwp^4} / {(\sum_{b:e_\p=l}\np\bwp^2)^2} $ and/or $ {\sum_{b:e_\p=s}\np^2\bwp^4} /{( \sum_{b:e_\p=s}\np^2\bwp^2)^2}$ tending to zero. 
Therefore, with bounded $\np$'s for all $\p$ with $e_\p=s$,
\begin{align*}
    \frac{1}{\left[ \E \spb \np^2 \nu_\p^{(e_\p)} (\rho_1, \rho_0)\right]^2} \spb \var\left( \np^2 \nu^{(e_\p)}_\p(r_1,r_0) \right) \to 0 \ \ \text{as}\ \B\to\infty.
\end{align*}

\noindent 
In addition, we need to show that there exists a constant $L>0$ such that 
$$ \frac{\B}{\E  \spb \np^2 \nu_\p^{(e_\p)} (\rho_1, \rho_0)}\left|\np^2 \nu^{(e_\p)}_\p(r_1,r_0) - \np^2 \nu^{(e_\p)}_\p(r_1',r_0')  \right| \le L \|(r_1,r_0) - (r_1',r_0') \|, $$ 
for any$ \ (r_1,r_0), (r_1',r_0') \in K$ and any $  1\le \p \le \B .$
Since $\nu^{(e_\p)}$ is a quadratic function of $(r_1,r_0) $, it suffices to show that coefficients of $r_1^2$, $r_0^2$, $r_1r_0$, $r_1$, and $r_0$ are within a uniform finite range. Denote the mean of unit weights in for units \textit{assigned to} treatment $z$ in stratum $\p$ as $\bar{w}_{\p |z} = \sum_{\s\in\p, \zps=z} \wps / \npk $ and the mean weights in stratum $\p$ as $\bar{w}_b=\sump \wps/\np$. By calculation, the coefficient of $r_z^2$ in $\nu^{(l)}_\p$ is
\begin{align*}
    \frac{1}{\npk} \frac{1}{\npk-1} \sumpindk \left( \wps - \bar{w}_{\p z}^{obs} \right)^2 \le \frac{1}{\npk} \frac{\np}{\npk-1} \frac{1}{\np} \sump (\wps - \bar{w}_\p)^2 , 
\end{align*}
which is bounded by a constant time of $\np^{-1}\bwp^2$. The coefficient of $r_z$ in $\nu^{(l)}_\p$ is
\begin{align*}
    &\ 2 \frac{1}{\npk} \frac{1}{\npk-1}  \sumpindk \left( \bar{w}_{\p |z} -\wps \right) \wps \ypsk \\
    \le&\ \frac{1}{\npk} \frac{1}{\npk-1}  \sumpindk \left[ \left( \bar{w}_{\p |z} -\wps \right)^2 +  \left( \wps \ypsk  - \frac{1}{\npk} \sum_{\substack{i'\in\p \\ z_{\s'}=z}}  \wpst \ypstk  \right)^2 \right]\\
    \le&\  \frac{1}{\npk} \frac{\np}{\npk-1} \left[ \frac{1}{\np} \sump(\wps - \bar{w}_\p)^2 + \frac{1}{\np} \sump \left(\wps\ypsk - \frac{1}{\np} \sum_{i'\in\p}  \wpst \ypstk \right)^2 \right],
\end{align*}
which is bounded by a constant time of $\np^{-1}\bwp^2$.
The coefficient of $r_z^2$ in $\nu^{(s)}_\p$ is
$$  
        \frac{1}{\npk} \sumpindk \left[ \wps^2 -  \left( \wps - \bar{w}_{\p z}^{obs} \right)^2\right] = \left(\bar{w}_{\p z}^{obs}\right)^2, 
$$
and the coefficient of $r_1r_0$ in $\nu^{(s)}_\p$ is
$$ \frac{1}{\npc\npt} \sum_{\substack{\s\in\p\\ \zps=0}} \sum_{\substack{\s\neq\s'\in\p\\ z_{\s'}=1}} \wps \wpst. $$
Those coefficients are bounded by constant times of $\bwp^2$. 
The coefficient of $r_0$ in $\nu^{(s)}_\p$ is
\begin{align*}
    &\ \frac{1}{\npc\npt} 2 \sum_{\substack{\s\in\p\\ \zps=0}} \sum_{\substack{\s\neq\s'\in\p\\ z_{\s'}=1}} w_{i'}\left( \wps \ypst - w_{i'}y_{i'0} \right)  - \frac{1}{\npc^2}  2 \sum_{\substack{\s\in\p\\ \zps=0}} \left( \bar{w}_{\p 0|0} -\wps \right) \wps \ypsc \\
    =&\  2 \bar{w}_{\p|0} \left( \frac{1}{\npt} \sum_{\substack{\s\in\p\\ \zps=1}} \wps \ypst - \frac{1}{\npc} \sum_{\substack{\s\in\p\\ \zps=0}} \wps \ypsc \right),
\end{align*} 
which is bounded by a constant time of $ \bwp^2$ under the assumptions. Similarly, the coefficient of $r_1$ in $\nu^{(s)}_\p$ is also bounded by $ \bwp^2$. By \eqref{eq:Enul-lwbd} and \eqref{eq:Enus-lwbd}, $$\E  \spb \np^2 \nu_\p^{(e_\p)} (\rho_1, \rho_0) \ge  \sum_{\p:e_\p=l} c_3 \np \bwp^2 + \sum_{\p:e_\p=s} c_4 \np^2 \bwp^2  . $$
The assumptions suggest that $B\np\bwp^2/\spb\np\bwp^2$ is bounded for all $\p$ and $B\np^2\bwp^2/\spb\np\bwp^2$ is bounded for all $\p$ with $e_\p=s$, therefore, there exists a constant $L$ that is independent of the sample size such that for any $(r_1,r_0), (r_1', r_0') \in K$,
$$\frac{\B }{\E \spb\np^2  \nu_\p^{(e_\p)} (\rho_1, \rho_0)}\left|\np^2 \nu^{(e_\p)}_\p (r_1,r_0) - \np^2 \nu^{(e_\p)}_\p(r_1', r_0')  \right| \le L \|(r_1,r_0) - (r_1', r_0') \|. $$
Applying Lemma \ref{lem:wllnrandfunc} to random functions (\ref{eq:tri-randfunc}), we get
$$ \sup_{(r_1,r_0)\in K} \left| \frac{\spb\np^2 \nu^{(e_\p)}_\p(r_1,r_0)}{\E \spb\np^2 \nu^{(e_\p)}_\p(\rho_1, \rho_0) } - \frac{\E\spb\np^2 \nu^{(e_\p)}_\p(r_1,r_0)}{\E \spb\np^2 \nu^{(e_\p)}_\p(\rho_1, \rho_0) }  \right| \xrightarrow{p} 0. $$
Denote 
$$ \mu(r_1,r_0) = \E\left[\frac{1}{W}\spb\np^2 \nu^{(e_\p)}_\p (r_1,r_0) \right], $$
then $\mu$ is a continuous function on $K$. The remainder of the proof follows from the arguments in the proof of Theorem \ref{prop-cons2} after defining $\mu(r_1,r_0)$. Letting $\xi_\B = \E \left[ \spb\np^2 \nu^{(e_\p)}_\p  (\rho_1, \rho_0) /W \right] $ satisfies the theorem. Under constant treatment effect, Remark \ref{rmk:varbias}'s conclusion on the bias of $\nu_{\p}^{(e_\p)}(\rho_1, \rho_0), e_\p=l,s$ suggests that $\liminf_{\B\to\infty}[\xi_\B/\var( \sqrt{W}\tilde\tau) ] = 1$. 

\end{proof}

\subsection{Theorem \ref{thm2}}

\subsubsection{Lemma \ref{lem:wlln2}}

\begin{lemma}
\label{lem:wlln2}
Under Conditions \ref{cond:wlln2-pi} and \ref{cond:wlln2-w}, $-\nabla \Phi_n \xrightarrow{p}  I_2$ as $\np\to\infty$ for all $\p$, where $I_2$ is a $2\times 2$ identity matrix.
\end{lemma}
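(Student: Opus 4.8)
The plan is to reduce the claim, exactly as in the proof of Lemma~\ref{lem:wlln}, to a weak law for the diagonal entries of $-\nabla\Phi_n$. Writing $\usk=\Indk/\ppk$ and $W=\ssn\wps$, the gradient again takes the diagonal form $-\nabla\Phi_n=\mathrm{diag}(W^{-1}\ssn\wps\ust,\ W^{-1}\ssn\wps\usc)$, so it suffices to show $W^{-1}\ssn\wps\usk\xrightarrow{p}1$ for $z=0,1$ as $\np\to\infty$ for every $\p$. As in Lemma~\ref{lem:wlln}, $\E[W^{-1}\ssn\wps\usk]=W^{-1}\ssn\wps=1$, and by independence of the treatment assignments across strata $\var(W^{-1}\ssn\wps\usk)=W^{-2}\sum_{\p=1}^{\B}\var(\sump\wps\usk)$, where $\var(\sump\wps\usk)=\sump(\ppk^{-1}-1)\wps^2+\sump\sumssp(\delbkk\ppk^{-2}-1)\wps\wpst$ is exactly the expression computed in that proof; so by Chebyshev's inequality it remains only to drive this variance to zero.

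First I would note that Condition~\ref{cond:wlln2-pi} together with $\np\to\infty$ forces $\npk=\ppk\np\to\infty$, so the degenerate $\npk=1$ branch of the Lemma~\ref{lem:wlln} argument does not arise here. Then, for each fixed stratum, I would bound the two pieces of $\var(\sump\wps\usk)$ using strong overlap: $\ppk^{-1}-1\le(1-c)/c$ is bounded, and $|\delbkk\ppk^{-2}-1|=(\np-\npk)/[\npk(\np-1)]$ is $O(\np^{-1})$ since $\npk\ge c\np$. Applying the Cauchy--Schwarz inequality $(\sump\wps)^2\le\np\sump\wps^2$ to the cross term then gives $\var(\sump\wps\usk)\le C\sump\wps^2$ for a constant $C$ depending only on $c$ (valid once $\np\ge2$). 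Summing over the finitely many strata and dividing by $W^2=(\ssn\wps)^2$ yields $\var(W^{-1}\ssn\wps\usk)\le C\,\ssn\wps^2/(\ssn\wps)^2$; since $\B$ is fixed and $\ssn\wps^2/(\ssn\wps)^2\le\sum_\p(\sump\wps^2)/(\sump\wps)^2\to0$ by Condition~\ref{cond:wlln2-w}, the variance vanishes. Chebyshev's inequality then gives $W^{-1}\ssn\wps\usk\xrightarrow{p}1$, hence $-\nabla\Phi_n\xrightarrow{p}I_2$.

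The argument is the within-stratum counterpart of the proof of Lemma~\ref{lem:wlln}, read in the regime $\np\to\infty$ with $\B$ fixed instead of $\B\to\infty$. The only step that needs genuine care is the off-diagonal covariance term $\sump\sumssp(\delbkk\ppk^{-2}-1)\wps\wpst$: I would use that its coefficient is $O(\np^{-1})$ under strong overlap and that Cauchy--Schwarz absorbs the weight cross-products into a multiple of $\sump\wps^2$, after which finiteness of $\B$ and the Kish condition close the bound; every remaining step is routine transcription.
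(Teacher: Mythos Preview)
Your proposal is correct and follows essentially the same approach as the paper's proof: both identify the diagonal form of $-\nabla\Phi_n$, compute the same mean and variance of $\sump\wps\usk$, use the identity $\delbkk/\ppk^{2}-1=(\npk-\np)/[\npk(\np-1)]=O(\np^{-1})$ under strong overlap, apply Cauchy--Schwarz to control the cross terms, and finish with Chebyshev. The only cosmetic difference is that you bound each stratum's variance by $C\sump\wps^{2}$ before summing and then invoke a per-stratum reading of the Kish condition via the inequality $\ssn\wps^{2}/(\ssn\wps)^{2}\le\sum_{\p}\bigl(\sump\wps^{2}\bigr)/\bigl(\sump\wps\bigr)^{2}$, whereas the paper reads Condition~\ref{cond:wlln2-w} as a global statement and uses $\spb\sump\wps^{2}/W^{2}\to0$ directly; both readings yield the needed conclusion.
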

This lemma is similar to Lemma \ref{lem:wlln}. They imply that $-\nabla \Phi_n$ tends to the same limit $I_2$ in probability, but their underlying asymptotic mechanisms are distinct. Lemma \ref{lem:wlln} requires that the number of strata grows infinitely, while Lemma \ref{lem:wlln2} requires that the number of strata is fixed as their sizes grow.

\begin{proof}[Proof of Lemma \ref{lem:wlln2}]
    For unit $\s$ and treatment $z$, define $\vsk =  \Indk\ypsk/\psk $ and $\usk =  {\Indk}/{\psk} $.
    Then 
    $$ \psi_i(r_1, r_0) = \left( \begin{array}{c}
     \{\Indt/\pst\} (\ypstt - r_1)  \\
     \{\Indc/\psc\} (\ypsc - r_0) 
\end{array} \right) = \left( \begin{array}{c}
    \vst - r_1 \ust  \\
    \vsc - r_0 \usc   
\end{array} \right) $$
    The gradient of $\Phi_n$ with respect to $\theta=(r_1,r_0)$ is
    $$ \nabla \Phi_n = \left(\begin{array}{ccc}
            -\ssn\wps\ust/\facB &  0\\
             0& -\ssn\wps\usc/\facB 
            \end{array} \right).$$
    For $z=1,0$ and $b=1,\ldots,\B$, 
    $$ \E\left[\sump\wps\usk\right] = \E \left[\sump \frac{\Indk}{\ppk}\wps \right]  =  \sump \wps $$
    and
    \begin{equation*}
        \begin{split}
            \var\left( \sump\wps\usk\right) = & \  \E \left( \sump\frac{\Indk}{\ppk^2}\wps^2 + \sump\sumssp \frac{\Indk \Indsk }{\ppk^2}\wps\wpst  \right) - \left(\sump \wps\right)^2 \\
            =&\ \sump \frac{1}{\ppk}\wps^2 + \sump \sumssp \frac{\E \Indk \Indsk}{\ppk^2} \wps\wpst  - \left(\sump \wps\right)^2 \\
            =&\  \sump \left(\frac{1}{\ppk} - 1\right) \wps^2 + \sump \sumssp \left(\frac{\delbkk}{\ppk^2}-1\right) \wps\wpst 
            .
        \end{split}
    \end{equation*}
    Under Condition \ref{cond:wlln2-pi}, $$\frac{\delbkk}{\ppk^2}-1 = \frac{\npk - \np}{\npk(\np-1)} = O({\np}^{-1}).$$
    Note that under Condition \ref{cond:wlln2-w}, $\spb \sump \wps^2 / \facB^2 \to 0 $ and by Cauchy-Schwartz inequality, $$\frac{1}{\facB^2}\spb \sumssp\frac{1}{\np} \wps \wpst \le \frac{1}{\facB^2}\spb \frac{1}{\np} \left(\sump\wps \right)^2 \le \frac{1}{\facB^2}\spb \left(\sump \wps^2 \right) \to 0. $$ 
    Thus,
    $$  \E\left[\frac{1}{\facB}\ssn\wps \usk\right] = \frac{1}{\facB}\sum_{b=1}^B  \sump \wps = 1 
    $$
    and the independence of treatment assignment between blocks implies that
    \begin{align*} 
    \var \left(\frac{1}{\facB}\ssn\wps \usk\right) 
    =  \frac{1}{\facB^2} \sum_{b=1}^B  \left[\sump \left(\frac{1}{\ppk} - 1\right) \wps^2 + \sump \sumssp \left(\frac{\delbkk}{\ppk^2}-1\right)  \wps\wpst \right]\to 0.
    \end{align*}
    By Chebyshev's inequality, for any $ \epsilon>0$,
    $$ \P\left(\left|\frac{1}{\facB}\ssn\wps\usk - 1 \right|>\epsilon \right) \le \frac{\var\left(\frac{1}{\facB}\ssn\wps \usk\right)}{ \epsilon^2} \to 0. 
    $$
    Therefore, $\ssn\wps\usk/W \xrightarrow{p} 1$ and
    $$ -\nabla G_n = \left(\begin{array}{ccc}
            \ssn\wps\ust/\facB & 0\\
            0 & \ssn\wps\usc/\facB 
            \end{array} \right) \xrightarrow{p} I_2.
    $$
\end{proof}

\subsubsection{Lemma \ref{lem:fpclt}}

\begin{lemma}
\label{lem:fpclt}
Under Conditions \ref{cond:clt-var}--\ref{cond:clt-square}, as $\np\to\infty$, $$ \frac{1}{\bwp}\left( \begin{array}{cc}
    \left( \frac{1}{\ppt}-1\right) S_{\p,1}^2 & - S_{\p,01} \\
    - S_{\p,01} & \left( \frac{1}{\ppc}-1\right) S_{\p,0}^2
\end{array} \right) = \tSigma_{n,\p} \to \tSigma_{\infty,\p} ,$$
and 
$$ \frac{1}{\sqrt{\np\bwp}} \left( \begin{array}{c}
    \sump \left(\frac{\Indt}{\ppt}- 1 \right) \wps (\ypst- \rho_1) \\
    \sump \left(\frac{\Indc}{\ppc} - 1 \right) \wps (\ypsc - \rho_0)  \\  
\end{array} \right)\ \simd \  \mathcal{N}(0,\tSigma_{n,\p}). $$
\end{lemma}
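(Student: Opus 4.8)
The plan is to recognize the displayed random vector as a centered, rescaled pair of treatment-arm sample means in the within-stratum completely randomized experiment, and then to invoke the finite population central limit theorem of \cite{Li2017}, whose three regularity hypotheses are, by design, Conditions \ref{cond:clt-var}--\ref{cond:clt-square}. Throughout, I treat stratum $\p$ as a completely randomized experiment over its $\np$ clusters, with $\npt$ of them assigned to arm $1$ and $\npc$ to arm $0$, the weights $\wps$ and cluster-mean potential outcomes $\ypst,\ypsc$ (hence the $\gamsz$'s, evaluated at $\rho_z$) being regarded as a fixed finite population.

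First I would record the algebraic identity that makes this possible. Since $\ppt=\npt/\np$ and $\gamst=\np\wps(\ypst-\rho_1)$, one has $\sump(\Indt/\ppt-1)\wps(\ypst-\rho_1)=\bgamobst-\bgampt$, and symmetrically $\sump(\Indc/\ppc-1)\wps(\ypsc-\rho_0)=\bgamobsc-\bgampc$, where $\bgamobsz$ is the arm-$z$ sample mean of the $\gamsz$'s; hence the vector in the lemma equals $(\np\bwp)^{-1/2}(\bgamobst-\bgampt,\ \bgamobsc-\bgampc)^{T}$. Next I would prove the first assertion: by the classical variance and covariance formulas for a completely randomized experiment applied to the fixed populations $\{\gamst\}$ and $\{\gamsc\}$, $\var(\bgamobst)=(1/\ppt-1)S_{\p,1}^2/\np$, $\var(\bgamobsc)=(1/\ppc-1)S_{\p,0}^2/\np$, and $\cov(\bgamobst,\bgamobsc)=-S_{\p,01}/\np$, the last obtained from a direct covariance computation (or from the identity tying the variance of the difference-in-means to $S_{\p,1}^2$, $S_{\p,0}^2$, and the finite-population variance of $\gamst-\gamsc$). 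Dividing through by $\np\bwp$ identifies the finite-sample covariance matrix of $(\np\bwp)^{-1/2}(\bgamobst-\bgampt,\bgamobsc-\bgampc)^{T}$ as exactly $\tSigma_{n,\p}$; Condition \ref{cond:clt-var} makes each of $S_{\p,1}^2/(\np^2\bwp)$, $S_{\p,0}^2/(\np^2\bwp)$ and $S_{\p,01}/(\np^2\bwp)$ converge, and Condition \ref{cond:clt-prop} makes $1/\ppt-1$ and $1/\ppc-1$ converge, so $\tSigma_{n,\p}\to\tSigma_{\infty,\p}$.

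Second, I would apply \cite{Li2017}'s theorem. The natural standardization is to work with $\gamsz/(\np\sqrt{\bwp})=\wps(\ypsk-\rho_z)/\sqrt{\bwp}$: the displayed vector is then $\sqrt{\np}$ times the centered vector of arm-specific sample means of these standardized outcomes in the within-stratum completely randomized experiment. The three hypotheses I would match as follows. (i) $\npk/\np$ has a positive limit for each $z$ --- this is Condition \ref{cond:clt-prop}. (ii) The finite-population variances and covariance of the standardized outcomes, namely $S_{\p,1}^2/(\np^2\bwp)$, $S_{\p,0}^2/(\np^2\bwp)$ and $S_{\p,01}/(\np^2\bwp)$, converge --- this is Condition \ref{cond:clt-var} (equivalently, the first part already proved). (iii) The Lindeberg-type negligibility requirement, that $\np^{-1}$ times the maximum over $z$ and $\s\in\p$ of the squared centered standardized outcome tends to zero; substituting $\gamsz/(\np\sqrt{\bwp})$ turns this into $\max_{z}\max_{\s\in\p}[\wps(\ypsk-\rho_z)-\sump\wps(\ypsk-\rho_z)/\np]^2/(\np\bwp)\to0$, which is precisely Condition \ref{cond:clt-square}. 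The theorem then yields $(\np\bwp)^{-1/2}(\bgamobst-\bgampt,\bgamobsc-\bgampc)^{T}\simd\mathcal{N}(0,\tSigma_{\infty,\p})$, and since $\tSigma_{n,\p}\to\tSigma_{\infty,\p}$ this is the stated $\mathcal{N}(0,\tSigma_{n,\p})$ limit.

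The calculations are mechanical once the within-stratum completely randomized structure is in place, so I expect the only real difficulty to be bookkeeping: choosing the standardization by $\np\sqrt{\bwp}$ so that hypothesis (iii) of \cite{Li2017} lines up verbatim with Condition \ref{cond:clt-square}, and correctly pinning down the sign and magnitude $-S_{\p,01}/\np$ of the cross-covariance of the two arm means (negative because both arms come from a single partition of the $\np$ clusters). As a minor caveat, $\tSigma_{\infty,\p}$ may be singular; \cite{Li2017}'s theorem allows convergence to a possibly degenerate Gaussian, so the conclusion should be phrased via convergence of the covariance matrix rather than by assuming $\tSigma_{\infty,\p}$ is invertible.
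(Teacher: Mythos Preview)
Your proposal is correct and follows essentially the same route as the paper: both recognize the displayed vector as $(\np\bwp)^{-1/2}(\bgamobst-\bgampt,\ \bgamobsc-\bgampc)^{T}$ for the pseudo-outcomes $\gamsz$, identify $\tSigma_{n,\p}$ as its exact covariance via the standard completely randomized experiment formulas, and then invoke \cite{Li2017}'s finite population CLT with Conditions \ref{cond:clt-var}--\ref{cond:clt-square} supplying the needed hypotheses. The paper formalizes the last step through an auxiliary lemma (its Lemma~\ref{lem-extfpclt}) and the $\delta_z$-vector bookkeeping, but this is cosmetic.
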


Before proving Lemma \ref{lem:fpclt}, we state a lemma that follows from Theorem 5 of \cite{Li2017}. First, we describe the settings. In a completely randomized experiment with $N$ units and $Q$ treatments, let $\pi_q$ be the probability of assignment to treatment $q$ and $n_q$ be the observed number of units treated by treatment $q$. For any unit $i$ and treatment $q$, let $Y_i(q)$ be the $p$-dimensional potential outcome vector and let $\bar{Y}(q)=\sum_{i=1}^N Y_i(q)/N$ be the mean of potential outcomes. We consider an estimand of the form $\tau_i(A) = \sum_{q=1}^Q A_q Y_i(q)$, where $A_q$ is a $k$ by $p$ constant matrix. The population average causal estimand is $\tau(A) = \sum_{i=1}^N \tau_i(A)/N = \sum_{q=1}^Q A_q \bar{Y}(q)$. Then an estimator of $\tau(A) $ is $\hat\tau(A) = \sum_{q=1}^Q A_q \hat{\bar{Y}}(q), $ where $\hat{\bar{Y}}(q)$ is the mean of observed outcomes under treatment $q$. For $1\le q \neq r \le Q$, let the finite population covariances of the potential outcomes be $$S_q^2 = \frac{1}{N-1} \sum_{i=1}^N [Y_i(q) - \bar{Y}(q)] [Y_i(q) - \bar{Y}(q)]^T,\  \ \ S_{qr}= \frac{1}{N-1} \sum_{i=1}^N [Y_i(q) - \bar{Y}(q)] [Y_i(r) - \bar{Y}(r)]^T .$$

\begin{lemma}
\label{lem-extfpclt}
    If for any $1\le q \neq r \le Q$, $S_q^2$ and $S_{qr}$ have limiting values, $\pi_q$ has a positive limiting value, and $\max_{1\le q\le Q}\max_{1\le i\le N}\|Y_i(q)-\bar{Y}(q)\|^2_2/N\to0$, then 
    $$ \sqrt{N}(\hat\tau(A) - \tau(A))\xrightarrow{d} \mathcal{N}(0, V)\ \ as \ \ N\to\infty, $$
    where $V$ is the limit of $$ \sum_{q=1}^Q \left(\frac{N}{n_q} - 1\right) A_q S^2_q A_q^T - \sum_{q=1}^Q \sum_{r\neq q} A_q S_{qr} A_r^T .$$
\end{lemma}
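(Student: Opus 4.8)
The plan is to recognize this statement as the specialization, to a completely randomized experiment with $Q$ treatment arms and vector-valued potential outcomes, of the finite-population central limit theorem of \cite{Li2017} (their Theorem 5), and to supply the two pieces of bookkeeping that make the correspondence exact: an explicit finite-sample covariance computation that identifies the matrix $V$, and a verification that the three stated hypotheses are precisely the regularity conditions that theorem requires. First I would write $\hat\tau(A) - \tau(A) = \sum_{q=1}^Q A_q(\hat{\bar{Y}}(q) - \bar{Y}(q))$, so that the object of interest is a fixed linear image, through the coefficient matrices $A_q$, of the stacked vector of centered arm means $\big(\hat{\bar{Y}}(1)-\bar{Y}(1),\ldots,\hat{\bar{Y}}(Q)-\bar{Y}(Q)\big)$. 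Asymptotic normality of this vector, transported through the $A_q$'s, is what Theorem 5 supplies.

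Next I would compute the exact finite-population moments. Each $\hat{\bar{Y}}(q)$ is the mean of a simple random sample of size $n_q$ drawn without replacement from $\{Y_i(q)\}_{i=1}^N$, so $\E[\hat{\bar{Y}}(q)] = \bar{Y}(q)$ and $\cov(\hat{\bar{Y}}(q)) = (1/n_q - 1/N)\,S_q^2$, while the standard cross-arm identity for complete randomization gives $\cov(\hat{\bar{Y}}(q),\hat{\bar{Y}}(r)) = -S_{qr}/N$ for $q\neq r$ (obtained from $\cov(W_i^{(q)},W_j^{(r)})$ and the identity $\sum_{i\neq j}Y_i(q)Y_j(r)^T = N^2\bar{Y}(q)\bar{Y}(r)^T - \sum_i Y_i(q)Y_i(r)^T$). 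Assembling these through the $A_q$'s and multiplying by $N$ yields, at every $N$,
\begin{align*}
N\,\cov(\hat\tau(A)) = \sum_{q=1}^Q A_q\Big(\tfrac{N}{n_q}-1\Big)S_q^2 A_q^T - \sum_{q=1}^Q\sum_{r\neq q} A_q S_{qr} A_r^T,
\end{align*}
which is exactly the pre-limit expression whose limit is $V$. This confirms that the candidate limiting covariance is correct and is the genuine variance of the scaled estimator.

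Then I would verify the hypotheses against the conditions of \cite{Li2017}. Convergence of $S_q^2$ and $S_{qr}$, together with positivity of the limits of $\pi_q = n_q/N$ (so that $N/n_q$ stays bounded), guarantees that the displayed pre-limit covariance has a finite limit $V$. The remaining assumption, $\max_{q}\max_i \|Y_i(q)-\bar{Y}(q)\|_2^2/N \to 0$, is precisely the uniform asymptotic negligibility (Lindeberg-type) requirement of that theorem, ensuring no single unit dominates the sum. With all three conditions in force, Theorem 5 of \cite{Li2017} delivers $\sqrt{N}(\hat\tau(A)-\tau(A)) \xrightarrow{d} \mathcal{N}(0,V)$.

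The main obstacle is the full matrix/vector generality together with the fact that nothing in the hypotheses forces $V$ to be nonsingular, so the argument must allow a possibly degenerate Gaussian limit. The clean way to handle both at once is the Cram\'er--Wold device: for fixed $\lambda\in\mathbb{R}^k$, the scalar $\lambda^T\sqrt{N}(\hat\tau(A)-\tau(A)) = \sqrt{N}\sum_q(\lambda^T A_q)(\hat{\bar{Y}}(q)-\bar{Y}(q))$ is itself an estimator of the form $\hat\tau(\tilde A)$ with $k=1$ and row coefficients $\tilde A_q = \lambda^T A_q$, to which the scalar case of the theorem applies. Here the vector max condition transfers to the scalar Lindeberg condition via $|(\lambda^T A_q)(Y_i(q)-\bar{Y}(q))| \le \|\lambda^T A_q\|_2\,\|Y_i(q)-\bar{Y}(q)\|_2$, and the scalar limiting variance equals $\lambda^T V \lambda$ by the covariance computation above. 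Since this holds for every $\lambda$, including directions in which $\lambda^T V\lambda = 0$, the multivariate conclusion follows with the degenerate limit accommodated automatically.
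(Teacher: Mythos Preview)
Your proposal is correct and follows essentially the same route as the paper: identify the finite-sample covariance of $\sqrt{N}(\hat\tau(A)-\tau(A))$ with the displayed expression and then invoke Theorem~5 of \cite{Li2017}. The only cosmetic differences are that the paper obtains the covariance by expanding $S^2_{\tau(A)}$ from Li's Theorem~3 rather than from the raw SRS moment formulas, and it omits your Cram\'er--Wold reduction because Theorem~5 of \cite{Li2017} is already stated in multivariate form (so that step, while not wrong, is unnecessary).
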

\begin{proof}
    First, we show that the covariance matrix of $\hat\tau(A)$ in Theorem 3 of \cite{Li2017} is equal to the finite version of $V$:
    \begin{align*}
        S^2_{\tau(A)} =&\  \frac{1}{N-1} \sum_i (\tau_i(A)-\tau(A))(\tau_i(A)-\tau(A))^T \\
        =&\  \frac{1}{N-1} \sum_i \left[ \sum_{q=1}^Q A_q(Y_i(q)-\bar{Y}(q)) \right]\left[ \sum_{q=1}^QA_q(Y_i(q)-\bar{Y}(q)) \right]^T \\
        =&\ \frac{1}{N-1} \sum_i\Bigg[ \sum_{q=1}^Q A_q (Y_i(q)-\bar{Y}(q)) (Y_i(q)-\bar{Y}(q))^T \\
        & \qquad \qquad \qquad + \sum_{q=1}^Q \sum_{r\neq q}A_q(Y_i(q)-\bar{Y}(q))(Y_i(r)-\bar{Y}(r))^T A_r^T\Bigg]\\
        =&\ \sum_{q=1}^Q A_q S^2_q A_q^T + \sum_{q=1}^Q \sum_{r\neq q}A_q S_{qr} A_r^T,
    \end{align*}
    thus,
    $$N\cov(\hat\tau(A)) = \sum_{q=1}^Q \frac{N}{n_q}A_q S^2_q A_q^T - \frac{N}{N}S^2_{\tau(A)} = \sum_{q=1}^Q \left( \frac{N}{n_q}-1\right) A_q S^2_q A_q^T - \sum_{q=1}^Q \sum_{r\neq q}A_q S_{qr} A_r^T .$$
    According to Theorem 5 of \cite{Li2017}, this matrix has a limit $V$ and $ \sqrt{N}(\hat\tau(A) - \tau(A))\xrightarrow{d} \mathcal{N}(0, V)$ as $N$ tends to infinity.

\end{proof}

\begin{proof}[Proof of Lemma \ref{lem:fpclt}]
For unit $\s$ and treatment $z$, we defined $\vsk =  \Indk\ypsk/\psk, $ and $\usk =  {\Indk}/{\psk} $. Consider $\gamsz(\cdot)$ evaluated at $\rho_z$ as a transformed potential outcome of unit $\s$ under treatment $z$. The mean of $\gamsz(\cdot)$ evaluated at $\rho_z$ for units assigned to treatment $z$ in stratum $\p$ is 
$$\bgamobsz= \frac{1}{\npk} \sumpindk \gamsz(\rho_z) =  \sump\frac{\Indk }{\npk }\wps (\ypsk - \rho_z) = \frac{1}{\np} \sump(\vsk-\rho_z\usk),$$
which can be considered as an estimate of the finite population mean evaluated at $\rho_z$:
$$ \bgampz(\rho_z) = \frac{1}{\np}\sump \gamsz(\rho_z) =  \frac{1}{\np} \sump \wps (\ypsk - \rho_z). $$
Define vectors $\delta_0 = (1,0)^T$ and $\delta_1 = (0,1)^T$, so the $(z+1)$th entry of $\delta_z$ is 1. Then, we can define an individual-effect-type of estimand (induced by $\gamsz(\rho_z)$ and $\delta_z$) for unit $\s$ in stratum $\p$ as $\tau_{\p\s} = \sum_{z=0}^1 \delta_z  \np\gamsz(\rho_z)$. Accordingly, the finite population average effect is defined as 
$$
\tau_\p=  \sum_{z=0}^1 \delta_z  \np\bgampz(\rho_z) =  \left( \begin{array}{c}
    \sump \wps (\ypst - \rho_1) \\
    \sump \wps (\ypsc - \rho_0)   \\
\end{array} \right),
$$ 
and its estimator is defined as
$$\hat\tau_\p = \sum_{z=0}^1 \delta_z \bgamobsz(\rho_z) =  \left( \begin{array}{c}
    \sump \frac{\Indt}{\ppt}\wps (\ypst - \rho_1) \\
    \sump \frac{\Indc}{\ppc}\wps (\ypsc - \rho_0)   \\
\end{array} \right)= \left( \begin{array}{c}
    \sump \wps (\vst - \rho_1 \ust ) \\
    \sump \wps (\vsc - \rho_0 \usc )  \\
\end{array} \right).
$$
Now we check if the conditions of Lemma \ref{lem-extfpclt} are satisfied. By Condition \ref{cond:clt-var}, $S_{\p,0}^2/ \bwp $, $S_{\p,1}^2/ \bwp $, and $S_{\p,01}/ \bwp$ have limiting values. By Condition \ref{cond:clt-prop}, the assignment probabilities $\ppk$ have positive limiting values. Note that
$$ \max_{z=1,0} \max_{i\in b} \frac{1}{\np\bwp} \| \gamsz - \bgampz \|^2_2  = \max_{z=1,0} \max_{i\in b} \frac{1}{\np\bwp}  \left\|  \wps (\ypsk-\rho_z) - \frac{1}{\np  } \sump \wps(\ypsk - \rho_z) \right\|^2_2 \to 0 $$
holds under Condition \ref{cond:clt-square}.
Therefore, we can apply the lemma to $\hat\tau_\p$ for all $\B$ strata. 
For a given stratum $\p$, there exists a matrix $\tSigma_{\infty,\p}$ such that
\begin{align*}
    &\frac{1}{ \bwp} \left[\sum_{z=0}^1 \left( \frac{1}{\ppk}-1\right) \delta_z S_{\p,z}^2 \delta_z^T -2 \delta_0 S_{\p,01}\delta_{1}^T\right] \\
    &= \frac{1}{ \bwp}\left( \begin{array}{cc}
    \left( \frac{1}{\ppt}-1\right) S_{\p,1}^2 & - S_{\p,01} \\
    - S_{\p,01} & \left( \frac{1}{\ppc}-1\right) S_{\p,0}^2
\end{array} \right)  = \tSigma_{n,\p} \to \tSigma_{\infty,\p} 
\end{align*}
and
$$ \frac{1}{\sqrt{\np\bwp}} \left(\hat\tau_\p - \tau_\p \right) \simd \mathcal{N}(0,\tSigma_{n,\p}). $$

\end{proof}

\subsubsection{The proof of Theorem \ref{thm2}}
\begin{proof}[Proof of Theorem \ref{thm2}]
We combine the results of Lemma \ref{lem:fpclt} for each stratum to derive the asymptotic distribution of $\hat\tau$. Using the same notation as in the proof of Lemma \ref{lem:fpclt}, by Condition \ref{cond:clt-prop}, 
$$ \frac{1}{ \sqrt{W}} \spb {\sqrt{\np\bwp}} \cdot \frac{1}{\sqrt{\np\bwp}}   (\hat\tau_\p - \tau_\p ) \ \simd\  \mathcal{N}\left(0, \ \spb \frac{\np\bwp}{ W} \tSigma_{n,\p}\right). $$
Note that 
$$ \frac{1}{ \sqrt{W}} \spb \hat\tau_\p = \frac{1}{ \sqrt{W}} \spb \left( \begin{array}{c}
    \sump \wps (\vst - \rho_1 \ust) \\
    \sump \wps (\vsc - \rho_0 \usc)   \\
\end{array} \right), $$
which equals $\sqrt{W} \Phi_n(\theta_n)$.
By the definition of $\rho_1$ and $\rho_0$ we have
$$\frac{1}{\sqrt{W}} \spb  \sump \wps(\ypsk - \rho_z) = 0,\ \ z=1,0.$$
Thus, 
$$\frac{1}{ \sqrt{W}} \spb \tau_\p = \frac{1}{ \sqrt{W}} \spb \left( \begin{array}{c}
    \sump \wps (\ypst - \rho_1) \\
    \sump \wps (\ypsc - \rho_0)   \\
\end{array} \right) = 0, $$
and 
\begin{equation}
    \sqrt{W} \Phi_n(\theta_n) = \sqrt{W} \frac{1}{W}\ssn \wps \left( \begin{array}{c}
    \vst - \rho_1 \ust \\
    \vsc - \rho_0 \usc   \\
\end{array} \right) \ \simd \ \mathcal{N}\left( 0,  \tSigma_n \right),
    \label{eq:proof-aympnormGn}
\end{equation}
where 
$$ \tSigma_{n} =\spb  \frac{\np\bwp}{ W} \tSigma_{n,\p}= \frac{\np}{W}\left( \begin{array}{cc}
    \left( \frac{1}{\ppt}-1\right) S_{\p,1}^2 & - S_{\p,01} \\
    - S_{\p,01} & \left( \frac{1}{\ppc}-1\right) S_{\p,0}^2
\end{array} \right). $$
By Slutsky's Theorem combined with Lemma \ref{lem:wlln2} and equation \eqref{eq:proof-aympnormGn} we have
\begin{align*}
    \Sigma_{n}^{-\frac{1}{2}}\cdot I_2 \cdot
    \sqrt{\facB} (\hat\theta_n - \theta_n)
    &= \  \Sigma_{n}^{-\frac{1}{2}}\cdot I_2 \cdot \left(-\nabla \Phi_n\right)^{-1} \cdot  \Sigma_{n}^{\frac{1}{2}} \cdot  \Sigma_{n}^{-\frac{1}{2}} \cdot \sqrt{\facB}\Phi_n(\theta_n)  \xrightarrow{d}  \ \mathcal{N}(0, I_2).
\end{align*}
Thus, $\sqrt{W}(\hat\theta_n - \theta_n)  \ \simd\  \mathcal{N}(0, \tSigma_n )  .$
Since $\hat\tau - \tau = (\hat\rho_1-\rho_1) - (\hat\rho_0-\rho_0)$, we have $(\tsigma_0^2 - 2\tsigma_{01} + \tsigma_1^2)^{-1/2} (\hat\tau - \tau) \xrightarrow{d} \mathcal{N}(0,1). $
\end{proof}

\subsection{Theorem \ref{prop-cons2}}

\subsubsection{Useful claims}
We first state some useful claims before proving Theorem \ref{prop-cons2}.
Claim \ref{clm:sample-var} follows from \cite{Finucan1974} and we omit the proof.

\begin{claim}
    \label{clm:sample-var}
    Consider a simple random sample $\{X_1,\ldots,X_n\}$ drawn without replacement from a finite population $\{x_1,\ldots,x_N\}$. Denote the centered moment statistics of the sample as $$ M_r = \frac{1}{n} \sum_{i=1}^n (X_i - \bar{X}_n)^r, \ r=1,2,\ldots ,$$
    where $\bar{X}_n = \sum_{i=1}^n X_i/n $. Denote the centered moments of the finite population as $$m_r =\frac{1}{N} \sum_{i=1}^N (x_i - \bar{x}_N)^r, \ r=1,2,\ldots,  $$
    where $\bar{x}_N = \sum_{i=1}^N x_i/N $.
    Then $$ \E M_2 = \frac{n-1}{n} \cdot \frac{N}{N-1} m_2 $$ and when $N>3$,
    \begin{align*}
        \var(M_2) = &\  \frac{(n-1)N (N-n)}{n^3 (N-1)^2 (N-2) (N-3)} \cdot \\
        & \ \left[ (nN-N-n-1)(N-1)m_4 - (nN^2-3N^2 + 6N - 3n-3) m_2^2 \right] .
    \end{align*}
\end{claim}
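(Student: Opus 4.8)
The plan is to prove both moment formulas in Claim~\ref{clm:sample-var} by direct computation, exploiting two structural simplifications: the translation invariance of $M_2$ and the standard inclusion probabilities for sampling without replacement. First, since $M_2 = \frac{1}{n}\sum_{i=1}^n (X_i - \bar{X}_n)^2$ depends only on the differences $X_i - \bar{X}_n$, it is invariant under a common shift of all population values; I would therefore assume without loss of generality that the population is centered, $\sum_{i=1}^N x_i = 0$, so that $m_1 = 0$ and the uncentered power sums satisfy $p_r := \sum_{i=1}^N x_i^r = N m_r$ for $r \ge 2$ while $p_1 = 0$. Writing $P_1 = \sum_i X_i$ and $P_2 = \sum_i X_i^2$ for the sample power sums, I would use the algebraic identity $n M_2 = P_2 - P_1^2/n$ throughout.

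The workhorse is the fact that for a simple random sample without replacement, any fixed set of $k$ distinct population units appears in the sample with probability $n^{(k)}/N^{(k)}$, where $x^{(k)} := x(x-1)\cdots(x-k+1)$. Consequently, the sample expectation of any monomial equals $n^{(k)}/N^{(k)}$ times the corresponding population symmetric sum, where $k$ is the number of distinct indices appearing in it. For the mean, I would compute $\E[P_2] = (n/N)p_2 = n m_2$ and, using $\sum_{k\ne l} x_k x_l = p_1^2 - p_2 = -p_2$, obtain $\E[P_1^2] = \E[P_2] + \frac{n(n-1)}{N(N-1)}(-p_2) = n m_2 (N-n)/(N-1)$; substituting these into $\E[n M_2] = \E[P_2] - \E[P_1^2]/n$ and simplifying yields $\E[M_2] = \frac{n-1}{n}\cdot\frac{N}{N-1}\,m_2$, as claimed.

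For the variance I would expand $\E[(nM_2)^2] = \E[P_2^2] - \frac{2}{n}\E[P_2 P_1^2] + \frac{1}{n^2}\E[P_1^4]$ and evaluate each term by sorting its monomials according to the index-coincidence pattern. The degree-four pieces require the five set partitions of four indices (all equal; one triple and a single; two pairs; one pair and two singletons; all distinct), each carried with its correct multinomial multiplicity and weighted by the appropriate $n^{(k)}/N^{(k)}$ for $k=1,2,3,4$; this is where the factors $N-2$ and $N-3$ in the denominator originate, and why the formula requires $N>3$. A convenient simplification is that every population symmetric sum reduces to a polynomial in $p_2$ and $p_4$ alone: because $p_1 = 0$, all contributions that would carry $p_3$ (hence $m_3$) are multiplied by $p_1$ and vanish --- for instance $\sum_{k\ne l}x_k^3 x_l = p_3 p_1 - p_4 = -p_4$ and $\sum_{k,l,m \text{ distinct}} x_k^2 x_l x_m = 2p_4 - p_2^2$. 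After substituting $p_2 = Nm_2$ and $p_4 = Nm_4$, subtracting $(\E M_2)^2$, and collecting terms, the result is a rational function of $n$ and $N$ multiplying $m_4$ and $m_2^2$, which factors into the stated form.

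The main obstacle is purely the degree-four bookkeeping in $\E[P_1^4]$ and the cross term $\E[P_2 P_1^2]$: one must keep the multinomial multiplicities, the falling-factorial sampling weights, and the reductions of the population symmetric sums mutually consistent, and then carry out the final simplification --- placing all contributions (including $-(\E M_2)^2$) over the common denominator $n^3(N-1)^2(N-2)(N-3)$ and verifying that the numerator collapses to $(n-1)N(N-n)$ times the claimed linear combination, namely $(nN-N-n-1)(N-1)\,m_4 - (nN^2 - 3N^2 + 6N - 3n - 3)\,m_2^2$. These computations reproduce the formulas of \cite{Finucan1974}, which is why the claim is stated without a detailed derivation.
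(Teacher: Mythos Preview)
Your proposal is correct, but note that the paper does not actually prove this claim: it simply states that it follows from \cite{Finucan1974} and omits the derivation. The Finucan technique, which the paper does spell out in its proof of the companion Claim~\ref{clm:sample-cov} on the sample covariance, is structurally different from the direct expansion you describe. Rather than classifying all degree-four monomials by coincidence pattern and tracking the multinomial multiplicities, Finucan works with $c_n = n\sum_{i}(X_i-\bar X_n)^2 = n^2 M_2$ and observes (by the same symmetry argument you use) that $\var(c_n)$ is a polynomial in $n$ of degree at most four with no constant term; since $c_1=0$ and $c_N = N^2 m_2$ are deterministic, $n=1$ and $n=N$ are roots, forcing $\var(c_n) = n(n-1)(N-n)(A+Bn)$. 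The two remaining coefficients are then pinned down by evaluating the variance at the easy extremes $n=2$ and $n=N-1$. This interpolation sidesteps most of the partition bookkeeping you flag as the main obstacle; your approach reaches the same formula but with substantially more intermediate computation.
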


\begin{claim}
    \label{clm:sample-cov}
    Let $\{(X_1,Y_1),\ldots,(X_n, Y_n)\} $ be a simple random sample drawn without replacement from a finite population $\{(x_1,y_1),\ldots,(x_N,y_N)\}$ with $N\ge 3,\  1\le n\le N$. Denote the sample covariance as
    $$\sigma_{XY} = \frac{1}{n-1}\sum_{i=1}^n (X_i-\bar{X}_n)(Y_i-\bar{Y}_n), $$
    where $\bar{X}_n = \sum_{i=1}^n X_i / n$ and $\bar{Y}_n = \sum_{i=1}^n Y_i / n$ are sample means. Then 
    \begin{align*}
        \var(\sigma_{XY}) =&\  \frac{N(N-n)}{(N-1)^2(N-2)(N-3) n(n-1)} \cdot \\
        &\ \big[(nN-n-N-1)(N-1) \mu_{x^2y^2}+ (N-n-1)(N-1)\mu_{x^2}\mu_{y^2}  \\ &\ - (nN^2-nN-2n-2N^2+4N-2)\mu_{xy}^2   \big],
    \end{align*}
    where the moments are defined as 
    $$\mu_{x^2y^2} = \frac{1}{N}\sum_{i=1}^N (x_i-\bar{x})^2(y_i-\bar{y})^2, \ \  \mu_{xy} =\frac{1}{N}\sum_{i=1}^N (x_i-\bar{x})(y_i-\bar{y}), $$
    $$\mu_{x^2} = \frac{1}{N}\sum_{i=1}^N (x_i-\bar{x})^2,\  \text{ and } \  \mu_{y^2} = \frac{1}{N}\sum_{i=1}^N (y_i-\bar{y})^2.$$
\end{claim}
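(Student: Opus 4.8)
The plan is to reduce the claim to a direct finite-population moment computation built on the inclusion indicators of the design. First I would note that $\sigma_{XY}$ is invariant under translating the $x_i$'s and the $y_i$'s separately, so without loss of generality I center the population, assuming $\bar x=\bar y=0$; this leaves all four target moments $\mu_{x^2y^2},\mu_{x^2},\mu_{y^2},\mu_{xy}$ unchanged and makes every odd power sum of the centered data vanish. Writing $I_i$ for the indicator that unit $i$ enters the sample, a short expansion of $(n-1)\sigma_{XY}=\sum_{i\in S}X_iY_i-n\bar X_n\bar Y_n$ gives
$$\sigma_{XY} = \frac{1}{n}\sum_{i=1}^N I_i x_i y_i - \frac{1}{n(n-1)}\sum_{i\neq j} I_i I_j x_i y_j,$$
which expresses $\sigma_{XY}$ purely through the $I_i$ and the fixed population values. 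The whole proof then rests on the simple-random-sampling joint inclusion probabilities $\E[I_i]=n/N$, $\E[I_iI_j]=\tfrac{n(n-1)}{N(N-1)}$, $\E[I_iI_jI_k]=\tfrac{n(n-1)(n-2)}{N(N-1)(N-2)}$, and $\E[I_iI_jI_kI_l]=\tfrac{n(n-1)(n-2)(n-3)}{N(N-1)(N-2)(N-3)}$ for distinct indices.

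With this representation I would compute $\E[\sigma_{XY}]$ and $\E[\sigma_{XY}^2]$ and form $\var(\sigma_{XY})=\E[\sigma_{XY}^2]-(\E\sigma_{XY})^2$. The first moment is immediate and yields $\E[\sigma_{XY}]=\tfrac{N}{N-1}\mu_{xy}$ after using the centering to collapse $\sum_{i\neq j}x_iy_j=-\sum_i x_iy_i$. For the second moment I would square the two-term expression, producing a squared single sum, a cross term, and a squared double sum; expanding these gives products of up to four inclusion indicators whose index sets must be partitioned according to how many of the four running indices coincide (patterns with $2$, $3$, or $4$ distinct units). Each pattern is weighted by the matching inclusion probability above, and then every resulting symmetric power sum is reduced, via the centering identities $\sum_{i\neq j}x_ix_j=-\sum_i x_i^2$ and their three- and four-index analogues, to the three building blocks $\sum_i x_i^2y_i^2$, $\big(\sum_i x_i^2\big)\big(\sum_i y_i^2\big)$, and $\big(\sum_i x_iy_i\big)^2$, that is, to $N\mu_{x^2y^2}$, $N^2\mu_{x^2}\mu_{y^2}$, and $N^2\mu_{xy}^2$.

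The main obstacle is precisely this bookkeeping for the fourth-order term: enumerating every coincidence pattern of the four indices with its combinatorial multiplicity and correct sign, pairing each with the right inclusion probability, and carrying out the reduction to the three basic moments without error. Once the three coefficients are assembled over the common denominator $(N-1)^2(N-2)(N-3)n(n-1)$, the final task is the algebraic simplification that subtracts $(\E\sigma_{XY})^2=\tfrac{N^2}{(N-1)^2}\mu_{xy}^2$ and collapses the rational coefficients to those in the statement; this step is routine but lengthy.

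To guard against arithmetic slips in that simplification, I would use the specialization $y_i=x_i$ as a validation check: then $\sigma_{XY}$ becomes $\tfrac{n}{n-1}M_2$ with $M_2$ as in Claim \ref{clm:sample-var}, and $\mu_{x^2y^2}=m_4$ while $\mu_{x^2}\mu_{y^2}=\mu_{xy}^2=m_2^2$, so the formula must reduce to $\tfrac{n^2}{(n-1)^2}\var(M_2)$ with $\var(M_2)$ given by the expression of \cite{Finucan1974}; matching the two collapses the $m_2^2$ coefficients and confirms the $m_4$ coefficient, certifying the computation. An alternative derivation by polarization, using $4\sigma_{XY}=s^2_{x+y}-s^2_{x-y}$ to reduce to sample variances, is available but only shifts the difficulty onto computing $\cov(s^2_x,s^2_y)$, which is of the same order as the original problem; I would therefore favor the direct indicator computation.
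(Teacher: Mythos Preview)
Your approach is correct and would work: the inclusion-indicator representation is valid, the joint inclusion probabilities are right, and the reduction of all fourth-order symmetric sums to the three moments $\mu_{x^2y^2}$, $\mu_{x^2}\mu_{y^2}$, $\mu_{xy}^2$ is exactly what is needed. The specialization $y_i=x_i$ as a check against Claim~\ref{clm:sample-var} is a nice safeguard.

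However, the paper takes a genuinely different and lighter route, following \cite{Finucan1974}. Rather than expanding $\E[\sigma_{XY}^2]$ directly, it works with $c_n=n(n-1)\sigma_{XY}$ and argues structurally that $\var(c_n)$, once expanded over individuals, pairs, triplets, and quartets, must be a polynomial of degree at most four in $n$. Since $\var(c_1)=\var(c_N)=0$, this polynomial factors as $n(n-1)(N-n)(A+Bn)$, leaving only two unknown constants. These are then pinned down from the two easy endpoints $n=N-1$ (where the complement is a single unit, so $\var(c_{N-1})=N^2\var(X'Y')$) and $n=2$ (where $c_2=(X_1-X_2)(Y_1-Y_2)$). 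Each of those boundary computations is short, and solving the resulting $2\times2$ linear system yields $A$ and $B$ directly. The payoff is that the tedious enumeration of coincidence patterns and the heavy rational-function simplification you anticipate are bypassed entirely; your direct indicator computation trades that structural insight for a longer, though mechanical, calculation.
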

\begin{proof}[Proof of Claim \ref{clm:sample-cov}]
    The proof utilizes the method described by \cite{Finucan1974}. Suppose we are concerned with the variance of $$c_n = n\sum_{i=1}^n (X_i-\bar{X}_n)(Y_i-\bar{Y}_n) = n(n-1)\sigma_{XY}.$$
    Without loss of generality, assume $\bar{x}=\bar{y}=0$. Here are some useful identities:
    $$c_1 = 0 , \quad c_2 = (X_1-X_2)(Y_1-Y_2), \quad c_N = N^2\sigma_{xy},$$ and
    $$c_{N-1} = (N-1)\sum_{i=1}^{N-1}\left(X_i+\frac{X'}{N-1}\right) \left(Y_i+\frac{Y'}{N-1}\right) = (N-1)N\sigma_{xy} - NX'Y', $$
    where $\sigma_{xy} = \sum_{i=1}^n x_iy_i/n$ and $(X', Y')$ denotes the complement of the sample of size $N-1$. By expanding $c_n^2$ and $c_n$ as separate sums over individuals, pairs, triplets, and quartets and using symmetry in the expectations, it follows that the variance of $c_n$ is of the form $an+bn^2+cn^3+dn^4$. See \cite{Finucan1974} for more details. Then by $\var(c_1)=\var(c_N)=0,$ it simplifies to $n(n-1)(N-n)(A+Bn)$. Now we utilize the cases $n=N-1$ and $n=2$ to determine $A$ and $B$. When $n=N-1$,
    $$ \var(c_{N-1}) = N^2 \var(X'Y') = N^2\left( \mu_{x^2y^2} - \mu_{xy}^2\right), $$
    which gives us 
    \begin{equation}
        (N-1)(N-2)(A+BN-B) = N^2\left( \mu_{x^2y^2} - \mu_{xy}^2\right).
        \label{eq:AB1}
    \end{equation}
    When $n=2$,
    \begin{align*}
        \var(c_2) =&\  \E[(X_1-X_2)^2(Y_1-Y_2)^2] -\{\E[(X_1-X_2)(Y_1-Y_2)]\}^2\\
        =&\  \frac{1}{2\binom{N}{2}} \sum_{i,j} (x_i-x_j)^2(y_i-y_j)^2  - \left\{ \frac{1}{2\binom{N}{2}} \sum_{i,j}(x_i-x_j)(y_i-y_j)  \right\}^2,
    \end{align*}
    where by $\bar{x}=\bar{y}=0$,
    \begin{align*}
        \sum_{i,j} (x_i-x_j)^2(y_i-y_j)^2 = &\ 2 \sum_{i,j} x_i^2y_i^2 + 4 \sum_{i,j} x_i x_j y_i y_j + 2 \sum_{i,j} x_i^2 y_j^2 \\
        =&\  2N^2 \mu_{x^2y^2} + 4 N^2 \mu_{xy}^2 + 2 N^2 \mu_{x^2}\mu_{y^2}
    \end{align*}
    and 
    $$ \sum_{i,j} (x_i-x_j)(y_i-y_j) = 2N \sum_{i,j}x_i y_i = 2N^2 \mu_{xy}. $$
    Then
    $$\var(c_2) = \frac{2N}{N-1} \mu_{x^2y^2}  + \frac{2N}{N-1}\mu_{x^2}\mu_{y^2} - \frac{4N}{(N-1)^2} \mu_{xy}^2 ,$$
    which gives us
    \begin{equation}
    2(N-1)(N-2)(A+2B) = 2N \mu_{x^2y^2}  + 2N\mu_{x^2}\mu_{y^2} - \frac{4N}{(N-1)} \mu_{xy}^2 .   \label{eq:AB2}
    \end{equation}
    Solving $A$ from equation (\ref{eq:AB1}) and substitute into equation (\ref{eq:AB2}), we get
    $$B = \frac{N}{(N-1)^2(N-2)(N-3)}\left[(N-1)^2\mu_{x^2y^2} - (N-1)\mu_{x^2}\mu_{y^2}   - (N+1)(N-2)\mu_{xy}^2 \right].$$
    Then 
    $$A = \frac{N}{(N-1)(N-2)(N-3)}\left[-(N+1) \mu_{x^2y^2} + (N-1)\mu_{x^2}\mu_{y^2}   +2(N-1) \mu_{xy}^2 \right]. $$
    Finally, after some simple calculation we can derive the variance of $\sigma_{XY}$.
\end{proof}

\begin{claim}
    \label{clm:functcoef}
    Let $A_n$, $B_n$, and $C_n$, $n\ge 1$ be random variables and $a_n$, $b_n$, and $c_n$, $n\ge 1$ be constants. Suppose that $A_n-a_n\xrightarrow{p}0$, $B_n-b_n\xrightarrow{p}0$, and $C_n-c_n\xrightarrow{p}0$ as $n\to\infty$. Let $f_n(x) = A_n x^2 + B_n x + C_n$, $n\ge1$ be a sequence of random functions of $x\in K$, where $K$ is a compact set in $ \mathbb{R}$. Let $g_n(x) = a_n x^2 + b_n x + c_n$, $n\ge 1$ be a sequence of quadratic functions of $x\in K$. Then $$\sup_{x\in K} |f_n(x) - g_n(x)| \xrightarrow{p}0.$$
\end{claim}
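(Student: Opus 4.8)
The plan is to reduce the uniform (sup-norm) statement to a pointwise bound that is itself independent of $x$, and then invoke standard facts about convergence in probability. First I would use compactness of $K\subseteq\mathbb{R}$ to fix a constant $M>0$ with $|x|\le M$ for every $x\in K$. Writing out the difference of the two quadratics gives the exact identity
\begin{equation*}
f_n(x)-g_n(x) = (A_n-a_n)x^2 + (B_n-b_n)x + (C_n-c_n),\qquad x\in K,
\end{equation*}
so by the triangle inequality and the bound $|x|\le M$,
\begin{equation*}
\sup_{x\in K}\bigl|f_n(x)-g_n(x)\bigr| \;\le\; M^2\,|A_n-a_n| + M\,|B_n-b_n| + |C_n-c_n| \;=:\; R_n.
\end{equation*}
The right-hand side $R_n$ no longer depends on $x$, which is the key simplification that compactness buys us.

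Next I would argue that $R_n\xrightarrow{p}0$. This is immediate from the hypotheses $A_n-a_n\xrightarrow{p}0$, $B_n-b_n\xrightarrow{p}0$, $C_n-c_n\xrightarrow{p}0$ together with the elementary fact that a finite linear combination (with fixed deterministic coefficients $M^2$, $M$, $1$) of sequences converging to zero in probability again converges to zero in probability; one can see this directly by a union bound, since $\{R_n>\varepsilon\}$ is contained in the union of the three events $\{M^2|A_n-a_n|>\varepsilon/3\}$, $\{M|B_n-b_n|>\varepsilon/3\}$, $\{|C_n-c_n|>\varepsilon/3\}$, each of which has probability tending to zero. Finally, since $0\le \sup_{x\in K}|f_n(x)-g_n(x)|\le R_n$, a squeeze argument yields $\sup_{x\in K}|f_n(x)-g_n(x)|\xrightarrow{p}0$, which is the claim.

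There is no substantive obstacle here; the only point that requires a moment's care is ensuring the dominating bound $R_n$ is genuinely uniform in $x$, and this is exactly what boundedness of $K$ provides. The argument would carry over verbatim to degree-$d$ polynomials on a compact subset of $\mathbb{R}^p$, with $M$ replaced by a bound on $\max(1,\|x\|)^d$ over the compact set and the three-term union bound replaced by one with $d+1$ (or more) terms.
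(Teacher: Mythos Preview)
Your argument is correct: bounding $|x|\le M$ on the compact set $K$, applying the triangle inequality to the coefficientwise decomposition, and then using a union bound to show $R_n\xrightarrow{p}0$ is exactly the natural elementary proof, with no gaps. The paper itself states this claim without proof (it is one of the ``useful claims'' listed before the proof of Theorem~\ref{prop-cons2}, and no argument is supplied), so there is nothing to compare against; your write-up would serve perfectly well as the omitted justification.
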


\subsubsection{The proof of Theorem \ref{prop-cons2}}

\begin{proof}[Proof of Theorem \ref{prop-cons2}]
    Denote the limit of $(\rho_1, \rho_0) $ as $(\rho_{1,\infty},\rho_{0,\infty}). $ There exists a compact set $K\subseteq \mathbb{R}^2$ such that $(\rho_1, \rho_0) \in K $ for all $\B\ge1$. We first show that for any $1\le\p\le\B$,
\begin{equation}
    \left\| \frac{\np^2 \nu_\p(r_1,r_0)}{\E \left[\spb\np^2 \nu_\p(\rho_1, \rho_0) \right]} - \frac{\E[\np^2 \nu_\p(r_1,r_0)]}{\E \left[\spb\np^2 \nu_\p(\rho_1, \rho_0) \right]} \right\|_\infty \xrightarrow{p} 0,\ \text{ as } \np\to\infty.
    \label{eq:tri-randfunc2}
\end{equation}
By calculation, for $z=0,1$, the coefficient of $r_z^2$ in $\nu_\p(r_1,r_0)$ is 
\begin{align*}
    \frac{1}{\npk} \frac{1}{\npk-1} \sumpindk \left( \wps - \bar{w}_{\p z}^{obs} \right)^2,
\end{align*}
and by Claim \ref{clm:sample-var}, it has an expected value of
\begin{align*}
    \frac{1}{\npk}  \frac{1}{\np-1} \sump (\wps - \bar{w}_\p)^2
\end{align*}
and its variance has an upper bound of
\begin{align*}
    &\ \frac{1}{\npk^2} \frac{\np(\np-\npk)}{\npk (\npk-1)(\np-1) (\np-2) (\np-3)} (\npk\np - \np - \npk -1) \frac{1}{\np} \sump  (\wps - \bar{w}_\p)^4 \\
    \le &\  \frac{1}{\np\npk^3}  \sump  (\wps - \bar{w}_\p)^4 \le c_1 \frac{1}{\np^{3}} \bwp^4
\end{align*}
for a constant $c_1$. The expectation
\begin{equation}
\begin{split}
    \E [\np^2\nu_\p (\rho_1, \rho_0)]& = \sum_{z=0,1} \frac{1}{\npk} \np^2 \frac{1}{\np-1} \sump \left[ \wps (\ypsk - \rho_z) - \frac{1}{\np} \sum_{i'\in\p} \wpst (\ypstk - \rho_z) \right]^2 \\
    &\ge c_2  \sump \left[ \wps (\ypsk - \rho_z) - \frac{1}{\np} \sum_{i'\in\p} \wpst (\ypstk - \rho_z) \right]^2
\end{split}
\label{eq:Enu-upbd}
\end{equation}
for some constant $c_2$. Hence, for any $\epsilon>0$,
\begin{align*}
    & \P\left( \frac{1}{\E \left[\spb \np^2 \nu_\p(\rho_1, \rho_0) \right]}\frac{\np^2}{\npk} \left| \frac{1}{\npk-1} \sumpindk \left( \wps - \bar{w}_{\p z}^{obs} \right)^2 -  \frac{1}{\np-1} \sump (\wps - \bar{w}_\p)^2 \right| > \epsilon \right) \\
    & \le  \frac{1}{\epsilon^2}\frac{c_1 \np \bwp^4}{c_2 \left(\spb \np \bwp^2 \right)^2}  \to 0  \quad \text{as} \quad n\to\infty.
\end{align*}
Thus, the coefficient of $r_z^2$ in $\np^2 \nu_\p(r_1,r_0) /\E [\spb\np^2 \nu_\p(\rho_1, \rho_0) ] $ subtracted by its expected value tends to zero in probability.
Similarly, the constant term in $\np^2 \nu_\p(r_1,r_0) /\E [\spb\np^2 \nu_\p(\rho_1, \rho_0) ] $ minus its expectation tends to zero in probability:
\begin{align*}
    &\frac{1}{\E \left[\spb\np^2 \nu_\p(\rho_1, \rho_0) \right]}\frac{\np^2}{\npk} \frac{1}{\npk-1} \sumpindk \left( \wps \ypsk - \frac{1}{\npk} \sum_{\substack{i'\in\p\\z_{\s'}=z}}  \wpst \ypstk \right)^2 \\
    &- \frac{1}{\E \left[\spb\np^2 \nu_\p(\rho_1, \rho_0) \right]}\frac{\np^2}{\npk} \frac{1}{\np-1} \sump  \left( \wps \ypsk - \frac{1}{\np} \sum_{i'\in\p}  \wpst \ypstk \right)^2 \quad \xrightarrow{p} \quad 0.
\end{align*}
The coefficient of $r_z$ in $\nu_\p(r_1,r_0)$ is
$$\frac{1}{\npk} \frac{1}{\npk-1}  \sumpindk \left( \bar{w}_{\p |z} -\wps \right) \left(\wps \ypsk - \frac{1}{\npk} \sum_{\substack{i'\in\p\\z_{\s'}=z}} \wpst \ypstk  \right), $$
and its variance is bounded from above by a constant time of
\begin{align*}
    \frac{1}{\npk^3}\left[ \frac{1}{\np} \sump  (\wps - \bar{w}_\p)^4 + \frac{1}{\np} \sump  \left( \wps \ypsk - \frac{1}{\np} \sum_{i'\in\p}  \wpst \ypstk \right)^4 \right] \le c_3 \frac{1}{\np^3} \bwp^4
\end{align*}
for a constant $c_3$ according to Claim \ref{clm:sample-cov}. Similarly, we have
\begin{align*}
    &\frac{1}{\E \left[\spb\np^2 \nu_\p(\rho_1, \rho_0) \right]}\frac{\np^2}{\npk} \frac{1}{\npk-1}  \sumpindk \left( \bar{w}_{\p |z} -\wps \right) \left(\wps \ypsk - \frac{1}{\npk} \sum_{\substack{i'\in\p\\z_{\s'}=z}} \wpst \ypstk  \right) \\
    &-\frac{1}{\E \left[\spb\np^2 \nu_\p(\rho_1, \rho_0) \right]}  \frac{\np^2}{\npk} \frac{1}{\np-1} \left[ \sump( \bar{w}_\p - \wps) \left(\wps\ypsk - \frac{1}{\np} \sum_{i'\in\p}  \wpst \ypstk \right) \right] \quad \xrightarrow{p} \quad 0 .
\end{align*}
Now, $ \left\{\np^2\nu_\p(r_1,r_0) - \E[\np^2 \nu_\p(r_1,r_0)]\right\}/\E \left[\spb\np^2 \nu_\p(\rho_1, \rho_0) \right] $ is a quadratic function of $(r_1,r_0)$ without a cross term $r_0r_1$ defined on a compact set with coefficients tending to zero in probability, thus, by Claim \ref{clm:functcoef} $$ \sup_{(r_1,r_0)\in K} \left| \frac{\np^2 \nu_\p(r_1,r_0) - \E[\np^2 \nu_\p(r_1,r_0)]}{\E \left[\spb\np^2 \nu_\p(\rho_1, \rho_0) \right]}\right| \xrightarrow{p} 0.  $$
Since $\B$ is fixed as $n$ increases, combining results for all stratum gives us
$$ \sup_{(r_1,r_0)\in K} \left| \frac{\spb\np^2  \nu_\p(r_1,r_0)}{\E \left[\spb\np^2 \nu_\p(\rho_1, \rho_0) \right]} - \frac{\E \left[\spb \np^2 \nu_\p(r_1,r_0) \right]}{\E \left[\spb\np^2 \nu_\p(\rho_1, \rho_0) \right]}   \right| \xrightarrow{p} 0. $$
Denote 
$$ \mu(r_1,r_0) = \E\left[\frac{1}{W}\spb \np^2 \nu_\p (r_1,r_0) \right], $$
then $\mu$ is a continuous function on $K$.
Since $(\rho_1, \rho_0) $ converges to $(\rho_{1,\infty},\rho_{0,\infty}) $, we have that $\mu(\rho_1, \rho_0) $ converges to $ \mu (\rho_{1,\infty},\rho_{0,\infty})$ and following Theorem \ref{thm1}, $ (\hat\rho_1,\hat\rho_0) \xrightarrow{p}(\rho_{1,\infty},\rho_{0,\infty})$. Then by continuous mapping theorem, $\mu(\hat\rho_1,\hat\rho_0) \xrightarrow{p} \mu(\rho_{1,\infty},\rho_{0,\infty}) . $
Thus,
$$ \left| \mu(\hat\rho_1,\hat\rho_0) - \mu(\rho_1, \rho_0)  \right| \le \left| \mu(\hat\rho_1,\hat\rho_0) - \mu(\rho_{1,\infty},\rho_{0,\infty})  \right| + \left| \mu(\rho_{1,\infty},\rho_{0,\infty}) - \mu(\rho_1, \rho_0)  \right| \xrightarrow{p} 0.$$
By equation \eqref{eq:Enu-upbd}, there exists a constant $c$ such that $\mu(\rho_1, \rho_0) \ge c\cdot \spb \np\bwp^2 / W$, so $ \mu(\hat\rho_1,\hat\rho_0)/ \mu(\rho_1, \rho_0) \xrightarrow{p}1. $ Now for any $\epsilon>0$, 
\begin{align*}
    &\ \P\left(\left| \frac{\spb \np^2 \nu_\p(\hat\rho_1,\hat\rho_0)}{\E \spb \np^2 \nu_\p(\rho_1, \rho_0) } - 1  \right|> \epsilon \right) \\
    \le &\ \P \left(\left| \frac{\spb \np^2 \nu_\p(\hat\rho_1,\hat\rho_0)}{\E \spb \np^2 \nu_\p(\rho_1, \rho_0) } - \frac{\mu(\hat\rho_1,\hat\rho_0)}{\mu(\rho_1, \rho_0)}  \right|> \frac{\epsilon}{2} \right) + \P \left(\left| \frac{\mu(\hat\rho_1,\hat\rho_0)}{\mu(\rho_1, \rho_0)} - 1  \right|> \frac{\epsilon}{2} \right) \\
    \le &\ \P \left(\sup_{(r_1,r_0)\in K} \left|\frac{\spb \np^2 \nu_\p(r_1,r_0)}{\E \spb \np^2 \nu_\p(\rho_1, \rho_0) } - \frac{\mu(r_1,r_0)}{\mu(\rho_1, \rho_0)} \right|> \frac{\epsilon}{2} \right) + \P\left( (\hat\rho_1,\hat\rho_0) \notin K \right) \\
    +&\  \P \left(\left| \frac{\mu(\hat\rho_1,\hat\rho_0)}{\mu(\rho_1, \rho_0)} - 1 \right|> \frac{\epsilon}{2} \right) \\
    \to &\ 0, \ \ \text{as} \ \np\to\infty,\ \text{for all } \p.
\end{align*}
Here, there exists $\delta>0$ such that $\P\left( (\hat\rho_1,\hat\rho_0) \notin K \right) \le\P\left( \|(\hat\rho_1,\hat\rho_0) - (\rho_{1,\infty},\rho_{0,\infty}) \| > \delta \right) \to 0 .$
Therefore, 
$$ \frac{\spb \np^2 \nu_\p(\hat\rho_1,\hat\rho_0)}{\E \spb \np^2 \nu_\p(\rho_1, \rho_0) }  \xrightarrow{p} 1. $$
When the treatment effect is constant for all units, according to the derivation of $\nu_{\p} $ and Remark \ref{rmk:varbias}'s conclusion on the bias of $\nu_{\p}=\nu_{\p}^{(l)}  $ (which we prove below), we have $$ \E \left[ \frac{1}{W} \spb \nu_\p  (\rho_1, \rho_0)  \right] = \frac{1}{W} \spb \np^2 \var\left(\bgamobst(\rho_1)-\bgamobsc(\rho_0) \right). $$
More generally, by the derivation of $\nu_{\p} $,
\begin{equation}
    \E \left[ \frac{1}{W} \spb \np^2 \nu_\p  (\rho_1, \rho_0)  \right] \ge \frac{1}{W} \spb \np^2 \var\left(\bgamobst(\rho_1)-\bgamobsc(\rho_0) \right)
    \label{eq:proof-prop2}
\end{equation}
and as $\np\to\infty$ for all $\p$,
$$ \frac{1}{W} \spb \np^2 \var\left(\bgamobst(\rho_1)-\bgamobsc(\rho_0) \right) = \var(\sqrt{W}\tilde\tau), $$
the variance of $\sqrt{W}\tilde\tau$. Letting $\xi_n = \E \left[ \spb \np^2 \nu_\p  (\rho_1, \rho_0) /W \right] $ satisfies the theorem. When the treatment effect is constant for all units, meaning that $\ypst-\ypsc = \tau$ for all $\s$, the variance estimator \eqref{eq:case2varest} is unbiased for $ (\sigma_0^2 - 2\sigma_{01} + \sigma_1^2)/W$, so equality holds in \eqref{eq:proof-prop2} and $\liminf_{\np\to\infty, \forall b}[\xi_n / \var(\sqrt{W}\tilde\tau) ] = 1$.

\end{proof}

\section{Additional simulation details and results}

\subsection{Comparing SATE estimators under size-effect correlation}

\subsubsection{Detailed setup}

In this section we detail the setup of the simulation study described in Section \ref{sec:ATEsimu}. Let the sample be 10 or 50 strata of 2 clusters, or 2 strata of 10 or 50 clusters. Generate cluster weights from an i.i.d.\ Gamma distribution with shape $4$ and rate $4/30$ (referencing the simulation studies of \cite{Tong2022accounting}). 
We consider two settings with regard to stratification: clusters are closely matched on size or clusters are not matched on size in strata. In the former setting, cluster weights are ordered and then assigned to strata. Potential outcomes are generated from the following model (with reference to the data generating process of \cite{pashley2021insights}):
$$\left(\begin{array}{c}
        \ypsc  \\
        y_{\s1}\\
    \end{array}\right) \sim MV \mathcal{N} \left( \left( \begin{array}{c}
        \Phi^{-1}(1-\frac{i}{1+n})  \\
        \Phi^{-1}(1-\frac{i}{1+n}) + t_i
    \end{array} \right),
    \left(\begin{array}{ccc}
        1 & 0  \\
        0 & 1  
    \end{array}\right) \right),\ i=1,\ldots,n,$$
where $\Phi^{-1}$ is the quantile function of standard normal distribution, and $t_i$ is the expected cluster treatment effect. We set $t_i=5$ to consider a constant treatment effect and $t_i=w_i/6$ (where $\E (w_i/6)=5$) to consider a heterogeneous effect that is correlated with cluster size. Thus, the simulation employs a $2\times2$ factorial design with a factor of cluster size variation within strata and a factor of treatment effect correlation with cluster size.

After cluster weights and potential outcomes are generated, they are kept fixed. In each stratum, half of the clusters are randomly selected to receive the treatment. Then the four estimators are calculated based on the treatment assignment and observed potential outcomes. As described in Section \ref{sec:mov}, IKN takes the stratumwise differences in weighted means of outcomes and then aggregates them based on stratum weights:
$$  \spb \frac{\sump \wps }{\ssn \wps} \cdot \left[ \frac{\sump \Indt \wps \ypst}{\sump\Indt \wps} - \frac{\sump \Indc \wps \ypsc}{\sump\Indc \wps}  \right]. $$  The estimator derived from the fixed effect model is the estimated treatment coefficient in a linear model with $\B-1$ stratum indicators and an intercept. Repeat the randomization and estimation process for 10,000 times.

\subsubsection{More results}

Figure \ref{fig:hajek-abcd} shows empirical distributions of the four SATE estimators under different designs with different sample sizes. Table \ref{tab:hajek} presents the biases, standard deviations (SD), and root mean squared errors (rMSEs) of these estimators in different cases. IKN and FE have larger biases than the other two when cluster sizes are not matched in strata and treatment effects are correlated with sizes, which is consistent with our findings in Section \ref{sec:inconsistency}.

\begin{figure}[!ht]
    \centering
    \includegraphics[width=\textwidth]{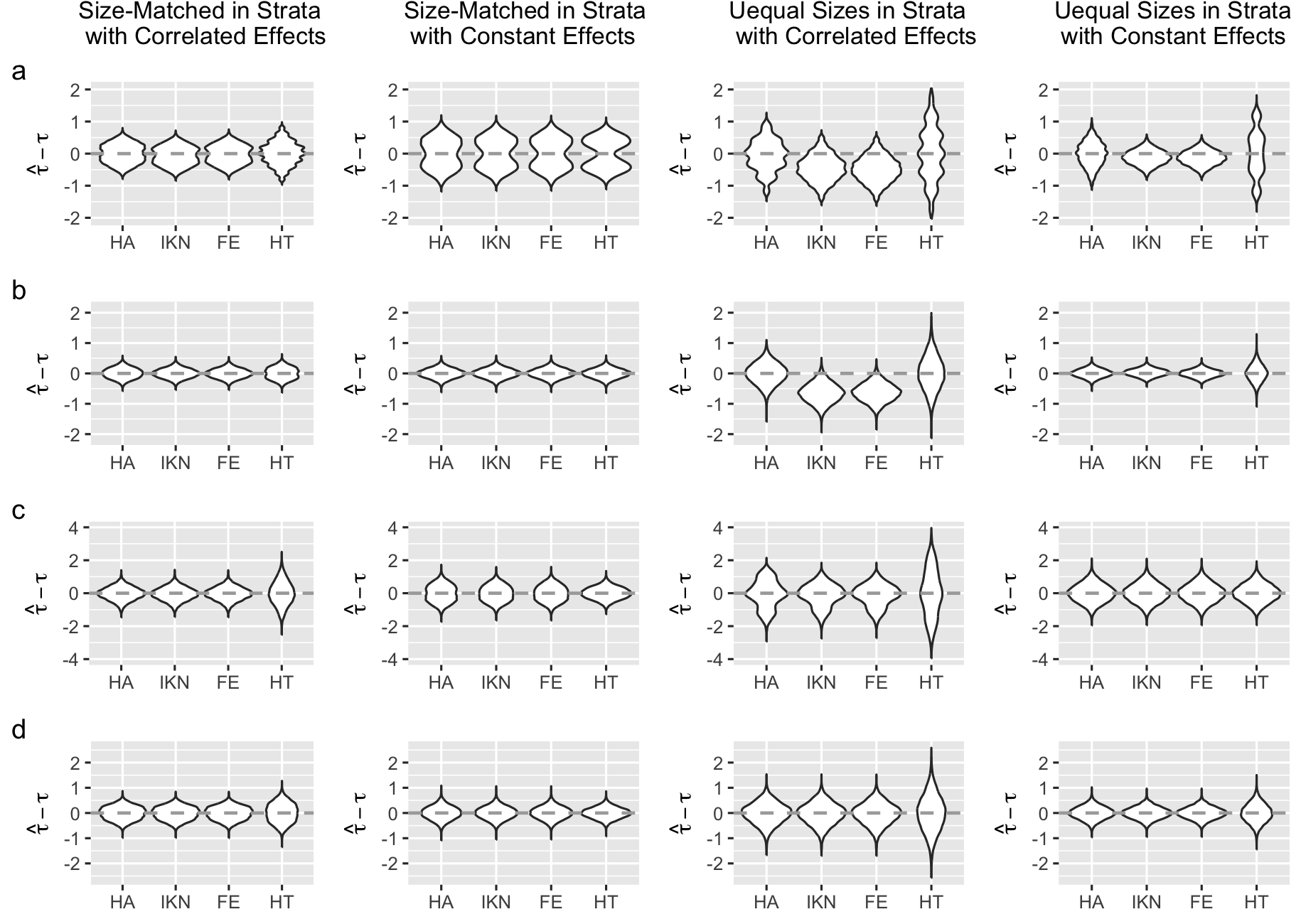}
    \caption{Empirical distributions of the H\'ajek effect estimator (HA), the IKN effect estimator (IKN), the fixed effects estimator (FE), and the Horvitz-Thompson estimator (HT) centered at the SATE. Each column represents a design, described at the top. Each row represents a sample size: a, 10 strata of 2 clusters; b, 50 strata of 2 clusters; c, 2 strata of 10 clusters; d, 2 strata of 50 clusters.}
    \label{fig:hajek-abcd}
\end{figure}

\begin{table}[!ht]
\caption{Biases, standard deviations (SDs), and root mean squared
  errors (rMSEs) of the H\'ajek effect estimator (HA), the
  \citeauthor{imai2009essential} estimator (IKN), the fixed effects estimator (FE), and the Horvitz-Thompson estimator (HT) under different designs and sample sizes.}
\centering
\resizebox{\columnwidth}{!}{
\begin{tabular}{ccccccccccccccc}
  \hline
   &&& \multicolumn{3}{c}{Size-matched in Strata}  &\multicolumn{3}{c}{Size-matched in Strata} & \multicolumn{3}{c}{Unequal Sizes in Strata} & \multicolumn{3}{c}{Unequal Sizes in Strata}\\ 
   &&& \multicolumn{3}{c}{with Correlated Effects}  &\multicolumn{3}{c}{with Constant Effects} & \multicolumn{3}{c}{with Correlated Effects} & \multicolumn{3}{c}{with Constant Effects}\\
$\np$ & $B$ & Estimator & Bias & SD & rMSE & Bias & SD & rMSE & Bias & SD & rMSE & Bias & SD & rMSE  \\ 
  \hline
  2 & 10 & HA & -0.0048 & 0.29 & \underline{0.30} & 0.0027 & 0.45 & \underline{0.45} & -0.0630 & 0.48 & \underline{0.54} & -0.0185 & 0.38 & \underline{0.40} \\
  2 & 10 & IKN & -0.0695 & 0.29 & 0.39 & 0.0069 & 0.46 & 0.46 & -0.4485 & 0.40 & 0.78 & -0.1185 & 0.23 & 0.41 \\ 
  2 & 10 & FE & -0.0332 & 0.29 & 0.34 & 0.0073 & 0.46 & 0.46 & -0.4858 & 0.39 & 0.80 & -0.1312 & 0.23 & 0.43 \\ 
  2 & 10 & HT & 0.0023 & 0.35 & 0.35 & 0.0005 & 0.46 & 0.46 & -0.0060 & 0.81 & 0.82 & -0.0002 & 0.71 & 0.71 \\ \hline
  2 & 50 & HA & 0.0036 & 0.17 & 0.18 & -0.0006 & 0.16 & \underline{0.16} & -0.0115 & 0.34 & \underline{0.36} & 0.0030 & 0.14 & \underline{0.15} \\ 
  2 & 50 & IKN & -0.0008 & 0.15 & \underline{0.16} & 0.0037 & 0.16 & 0.17 & -0.6465 & 0.32 & 0.87 & 0.0247 & 0.13 & 0.20 \\ 
  2 & 50 & FE & -0.0042 & 0.15 & 0.17 & 0.0036 & 0.16 & 0.17 & -0.6252 & 0.30 & 0.85 & -0.0049 & 0.13 & \underline{0.15} \\ 
  2 & 50 & HT & 0.0039 & 0.19 & 0.20 & -0.0011 & 0.16 & \underline{0.16} & 0.0058 & 0.55 & 0.56 & 0.0057 & 0.28 & 0.29 \\ \hline
  10 & 2 & HA & -0.0270 & 0.39 & \underline{0.42} & -0.0192 & 0.51 & 0.53 & -0.0723 & 0.83 & 0.87 & 0.0036 & 0.55 & \underline{0.55} \\ 
  10 & 2 & IKN & -0.0493 & 0.39 & 0.45 & -0.0284 & 0.48 & 0.51 & -0.1428 & 0.68 & 0.78 & 0.0086 & 0.56 & 0.57 \\ 
  10 & 2 & FE & -0.0365 & 0.39 & 0.44 & -0.0285 & 0.48 & 0.51 & -0.1426 & 0.66 & \underline{0.76} & 0.0079 & 0.56 & 0.57 \\ 
  10 & 2 & HT & -0.0086 & 0.72 & 0.72 & -0.0015 & 0.36 & \underline{0.36} & -0.0041 & 1.41 & 1.41 & 0.0004 & 0.55 & \underline{0.55} \\ \hline
  50 & 2 & HA & -0.0066 & 0.26 & \underline{0.28} & -0.0023 & 0.26 & 0.26 & -0.0191 & 0.41 & \underline{0.44} & -0.0022 & 0.25 & 0.25 \\ 
  50 & 2 & IKN & -0.0101 & 0.26 & \underline{0.28} & -0.0017 & 0.25 & 0.25 & -0.0341 & 0.42 & 0.46 & -0.0015 & 0.23 & \underline{0.23} \\ 
  50 & 2 & FE & -0.0102 & 0.26 & \underline{0.28} & -0.0017 & 0.25 & 0.25 & -0.0339 & 0.42 & 0.46 & -0.0016 & 0.23 & 0.24 \\ 
  50 & 2 & HT & 0.0008 & 0.38 & 0.38 & -0.0029 & 0.21 & \underline{0.22} & -0.0109 & 0.72 & 0.73 & -0.0096 & 0.36 & 0.38 \\ 
   \hline
\end{tabular}
}
    \label{tab:hajek}
\end{table}

\clearpage

\subsection{Variance estimators and confidence intervals for stratified cRCTs with many small strata}

\subsubsection{Detailed setup}
In this section we detail the setup of the simulation study described in Section \ref{secsimu-small}.
Consider the design with many small strata. Let the number of strata $\B$ be $10$ or $50$ and vary the sizes of each stratum $\np $ to be $2$ or $4$. The units in each stratum are randomly assigned to treatment and control. 
When stratum size is 4, the design is either perfectly balanced in the sense that $\np /2$ clusters are assigned to treatment, or unbalanced where one cluster is randomly selected to treatment in strata $\p=1,2,\ldots,\B/2$ and $\np/2 $ clusters are randomly selected to treatment in strata $\p=\B/2+1,\B/2+2,\ldots,\B$. 
Cluster sizes are still generated from the iid Gamma distribution with shape $4$ and rate $4/30$. The potential outcomes are generated from the model \eqref{eq:potential_gen} but with $t_{\s}  = 5+ \Phi^{-1}(1-\frac{\p}{1+\B})\cdot \alpha + \Phi^{-1}(1-\frac{\s_\p}{1+\np})\cdot \beta$, where $\s_\p$ is the index of unit $\s$ in stratum $\p$. Here, $\alpha$ controls for the treatment effect heterogeneity among strata, and $\beta$ controls for that within strata. When varying $\alpha$, keep $\beta$ fixed at 0, and vice versa. Take $\alpha$ and $\beta$ to be $0, 0.25, 0.5, 0.75, 1$. 

After generating the weights and potential outcomes, they are kept fixed.  The ``mixed" estimator applies $\nu^{(l)}_{\p}$ whenever possible.
The comparison with HC2 is possible because the H\'ajek effect estimator equals the coefficient of $z_i$ in a weighted least squares (WLS) model that regresses $y_{iz_i}$ on $z_i$ weighted by $\wps/\pi_{[i]z_i}$. The HC2 variance estimate for the coefficient of $\zps$ serve as a model-based variance estimation for the H\'ajek effect estimator.

\subsubsection{More results}

Figure \ref{fig:var_sd} shows the standard deviations (SDs) of all variance estimators under different sample sizes and values of $\alpha$ and $\beta$. There are marginal differences between the SDs of the two proposed estimators, as shown in Figure \ref{fig:var_sd}. The estimator $\hat{v}_{\text{small}}(\hat\tau)$ has a slightly higher SD than $\hat{v}_{\text{large}}(\hat\tau)$ in small samples, leading to reduced CI coverage even when their biases are comparable. HC2 has a larger SD compared with the proposed estimators. 

\begin{figure}[!htb]
    \centering
    \begin{subfigure}{\textwidth}
    \includegraphics[clip,width=\textwidth]{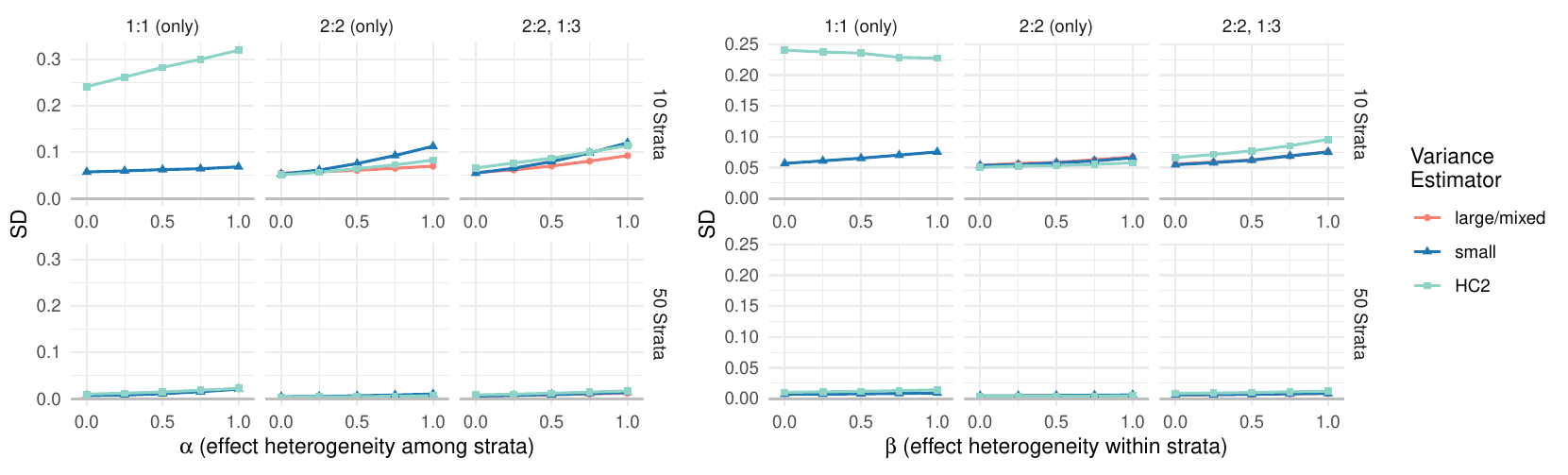}
    \end{subfigure}
    \caption{Standard deviations (SDs) of the variance estimators plotted against $\alpha$ and $\beta$ under different designs and sample sizes. Each column represents a stratum design and each row represents a total number of strata. }
    \label{fig:var_sd}
\end{figure}

Figure \ref{fig:var_cil} shows the mean lengths of the CIs. 
\begin{figure}[!htb]
    \centering
    \includegraphics[clip,width=\textwidth]{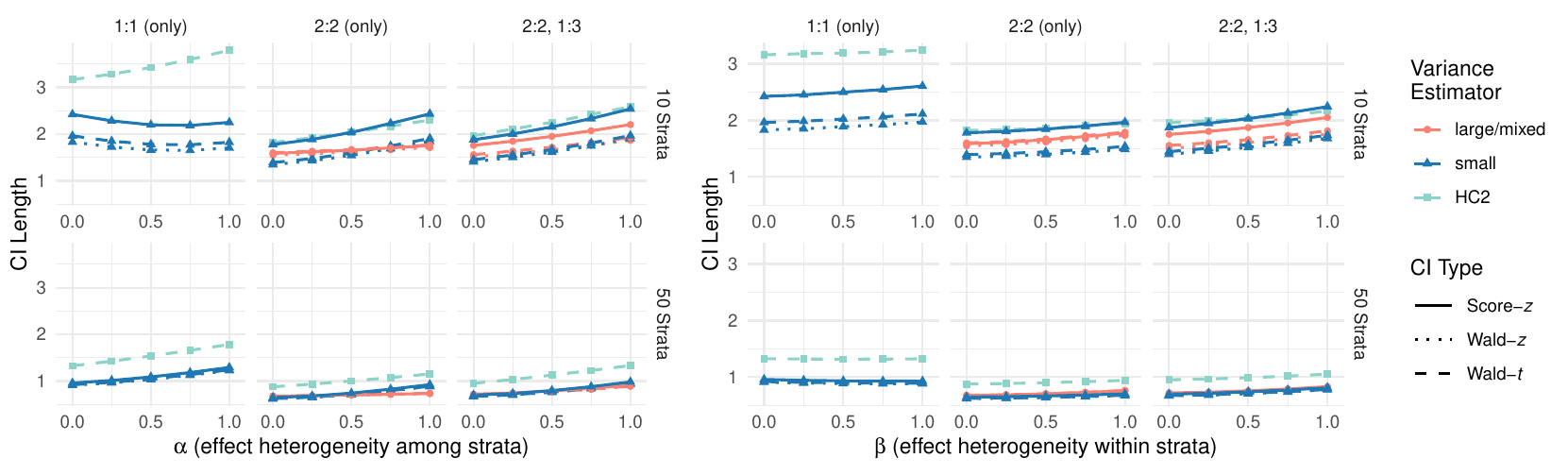}
    \caption{Mean lenghts of different types of 95\% confidence intervals (CIs) plotted against $\alpha$ and $\beta$ under different designs and sample sizes. Each column represents a stratum design and each row represents a total number of strata. }
    \label{fig:var_cil}
\end{figure}
The differences between lengths of these CIs are marginal with a large sample size. The HC2 CIs are generally longer than others, as HC2 is the most conservative when estimating the variance. On the other hand, the Score-based CIs reach the desired level in most cases but are not as long in length.

\subsection{Variance estimators and confidence intervals for stratified cRCTs with a few large strata}

In this section we present more results of the simulation study described in Section \ref{sec:simu-large}.
Figure \ref{fig:c2_sd} shows the relative biases and SDs of the proposed variance estimator $\hat v(\hat\tau)$ and HC2 when there are two large strata of sizes 10, 20, or 50 in the experiment. 
\begin{figure}[htb]
    \centering
    \begin{subfigure}{\textwidth}
    \includegraphics[clip,width=\textwidth]{plots/var_est_rlt_bias.pdf}
    \end{subfigure}
    \begin{subfigure}{\textwidth}
    \includegraphics[clip,width=\textwidth]{plots/var_est_sd.pdf}
    \end{subfigure}
    \caption{Relative biases and standard deviations (SDs) of the variance estimators as functions of $\alpha$ and $\beta$ under different designs and sample sizes. Each column represents a stratum size and each row represents a treatment assignment design. }
    \label{fig:c2_sd}
\end{figure}
In general, $\hat v(\hat\tau)$ has a smaller upward bias. Neither of the two estimators exhibit a smaller SD under all the experiment designs. When the design is unbalanced compared to balanced, the estimators' relative biases are similar but their SDs are larger.

Figure \ref{fig:c2_cil} shows the mean lengths of CIs. 
\begin{figure}[ht]
    \centering
    \includegraphics[clip,width=\textwidth]{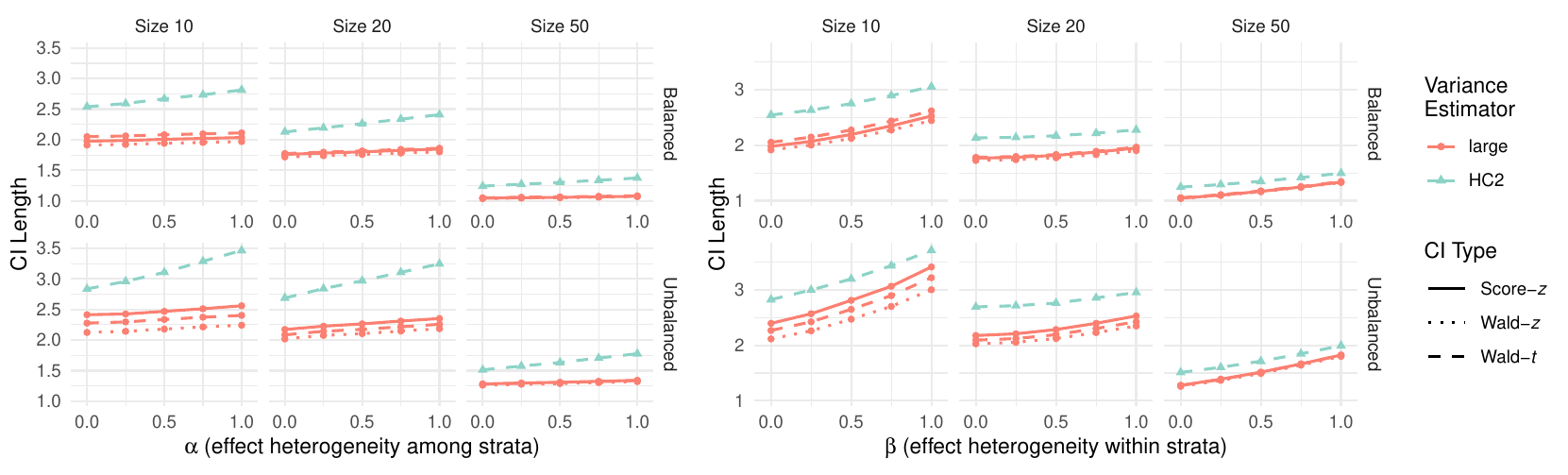}
    \caption{Mean lengths of different types of 95\% confidence intervals plotted against $\alpha$ and $\beta$ under different designs and sample sizes. Each column represents a stratum size and each row represents a treatment assignment design. }
    \label{fig:c2_cil}
\end{figure}
HC2 produces longer Wald-type CIs, which is consistent with its upward bias in the variance estimate. When the stratum size is less than or equal to 20 and the treatment assignment is unbalanced, the Score-based test provides a better coverage rate without a largely increased length compared to other Wald-type CIs generated by $\hat v(\hat\tau)$. This makes it a valuable complement to the Wald-type CIs in such challenging cases.

\newpage
\section{Additional details and results for empirical examples}
\label{sec:sm-examples}
\subsection{Variance estimation calibrated to the OSNAP design}
\label{sec:SM-var-OSNAP}

Besides the potential outcome imputation scheme in the previous section, we also consider a linear model that better fits the data to impute the counterfactuals. Forward-selected via the Akaike information criterion (AIC), this model includes treatment, number of staff, percentage of boys, and pre-treatment site-total water consumption. We use model predictions to impute the counterfactuals. For each of the $2^{10}$ possible treatment assignments, we calculate the H\'ajek effect estimator, variance estimators, and 95\% CIs. 

Biases and SDs of variance estimators are shown in Table \ref{tab:water-var1}. Both variance estimators are overestimating the variance, but our proposed estimator has a smaller bias and SD in both cases.
\begin{table}[ht]
  \caption{Biases and SDs of our proposed variance estimator $\hat{v}(\hat\tau)$ and the HC2 estimator under different outcome imputations.}
  \label{tab:water-var1}
  \centering
  \begin{minipage}[t]{0.45\linewidth}
    \centering
    \text{(a) Constant site-total effect imputation}\\[2pt]
    \resizebox{\linewidth}{!}{%
      \begin{tabular}{ccc}\hline
        Estimator & Bias (Relative \%) & SD\\\hline
        $\hat{v}(\hat\tau)$ & 0.00010 (136) & 0.00005\\
        HC2                & 0.00021 (281) & 0.00007\\\hline
      \end{tabular}}
  \end{minipage}
  \hfill
  \begin{minipage}[t]{0.45\linewidth}
    \centering
    \text{(b) Linear model imputation}\\[2pt]
    \resizebox{\linewidth}{!}{%
      \begin{tabular}{ccc}\hline
        Estimator & Bias (Relative \%) & SD\\\hline
        $\hat{v}(\hat\tau)$ & 0.00007 (46) & 0.00009\\
        HC2                & 0.00028 (182) & 0.00011\\\hline
      \end{tabular}}
  \end{minipage}
\end{table}


The mean lengths and empirical coverage rates of 95\% CIs are shown in Table \ref{tab:water-ci1}. In both cases, the estimators are overestimating the variance, so all CIs have coverage rates above 95\%. However, our proposed CIs are less conservative.

\begin{table}[ht]
    \caption{Mean lengths and empirical coverage probabilities of 95\% confidence intervals under different outcome imputations. }
    \label{tab:water-ci1}
    \begin{subtable}{.5\linewidth}
        \caption{A constant site-total effect }
        \centering
        \resizebox{\columnwidth}{!}{
        \begin{tabular}{cccc} \hline
         Method  & Mean length & Coverage (\%) \\
        \hline 
         Wald-$z$ with $\hat{v}(\hat\tau)$  & 0.052 & 99.32 \\
         Wald-$t$ with $\hat{v}(\hat\tau)$   & 0.056 & 99.61 \\ 
         Score-$z$ with $\hat{v}_{H}(\hat\tau)$ & 0.090 & 100 \\
         Wald-$t$ with HC2  & 0.071 & 100 \\ \hline
        \end{tabular}
        }
    \end{subtable}
    \begin{subtable}{.5\linewidth}
        \caption{A linear model}
        \centering
        \resizebox{\columnwidth}{!}{
        \begin{tabular}{cccc} \hline
         Method  & Mean length & Coverage (\%) \\
        \hline 
         Wald-$z$ with $\hat{v}(\hat\tau)$  & 0.057 & 97.27 \\
         Wald-$t$ with $\hat{v}(\hat\tau)$   & 0.062 & 98.14 \\ 
         Score-$z$ with $\hat{v}_{H}(\hat\tau)$ & 0.092 & 100 \\
         Wald-$t$ with HC2  & 0.086 & 100 \\ \hline
        \end{tabular}
        }
    \end{subtable}
\end{table}

\subsection{A study of student spatial thinking}
\label{sec:sm-minecraft-eg}

In this section, we evaluate the covariate adjustment method with another empirical example.
Spatial thinking is crucial in STEM learning, yet sometimes neglected in traditional classrooms. \cite{SLATTERY2024} evaluates the impact of a 6-week Minecraft: Education Edition intervention on spatial thinking among 5th and 6th graders. 
The sample consists of 31 classes (22-32 students each), stratified by classroom gender (mixed or all girls), grade, and educational disadvantage status into five strata (2-9 classes each). Within each stratum, about half of the classes are randomly chosen for the intervention. The outcome is the change in Spatial Reasoning Instrument (SRI) scores from pre- to post-treatment. Student-level data is collected, and we adjust for a person- and a cluster-level covariate, namely pre-treatment SRI score and urban versus rural school location.

\subsubsection{Data analysis}
Applied to these data, the covariate-adjusted H\'ajek technique estimates the SATE as 0.38, with a standard error of 0.30. The score-based CI yields $(-0.21,1.05)$, and the Wald CI with $n-4=27$ degrees of freedom (accounting for four parameters: $\hat\rho_{adj,z}$ for $z=1,0$ and $\hat\beta$ for two covariates.) yields $(-0.24,1.01)$. 
Neither interval excludes zero, in accordance with the findings of \cite{SLATTERY2024}. Results are also summarized in Table \ref{tab:spatial-pointest}.
\begin{table}[ht]
    \caption{The covariate-adjusted H\'ajek effect estimator $\hat\tau_{\mathrm{adj}}$, standard error (SE) estimate, and 95\% confidence intervals (CIs) derived from a $t$-test with $\hat{v}(\hat\tau)$ and a $z$-test with $\hat{v}_{H}(\hat\tau)$. }
    \label{tab:spatial-pointest}
    \centering
    \begin{tabular}{cccc}
    \hline
        $\hat\tau_{\mathrm{adj}}$ & SE & $t$-test with $\hat{v}(\hat\tau_{\mathrm{adj}})$ & $z$-test with $\hat{v}_{H}(\hat\tau_{\mathrm{adj}})$ \\
        0.38 & 0.30 
        & $(-0.24,1.00)$ & $(-0.12,1.07)$ \\        \hline
    \end{tabular}
\end{table}

\subsubsection{Simulation of the covariate adjustment method based on the data}
\label{sec:minecraft-simu1}
We also conduct a simulation study. Impute the counterfactuals with predictions from a linear model determined via AIC forward selection, including the treatment, pre-treatment SRI score, student age, and a binary class gender. We then repeatedly randomize treatment assignments and calculate the covariate-adjusted H\'ajek effect estimate $\hat\tau_{\mathrm{adj}}$ and its variance estimate $\hat v(\hat\tau_{\mathrm{adj}})$. We also include a model-based alternative, CR2, a cluster-robust variance estimator for the linear, independent disturbances model of \eqref{eq:covadj-wls}, and construct its CI using a $t$-distribution with adjusted degrees of freedom (23.6) following \cite{McCaffrey2002Bias} and \cite{pustejovsky2017small}. 

Table \ref{tab:spatial-var} presents the biases, relative biases and SDs of the variance estimators. Table \ref{tab:spatial-ci} shows the mean lengths and empirical coverage rates of the CIs. The variance estimator exhibits less upward bias and a smaller SD (34\% and 0.018) compared to CR2 (114\% and 0.025), as shown in Table \ref{tab:spatial-var}. Coverage rates exceed 95\% for all CIs (Table \ref{tab:spatial-ci}). The two $z$-test produces shorter CIs than those from the $t$-tests. These simulation results validate the usage of the proposed method for covariate adjustments.

\begin{table}[ht]
    \caption{Biases and standard deviations (SDs) of the variance estimators. }
    \label{tab:spatial-var}
    \centering
        \begin{tabular}{cccc} \hline
        Estimator & Bias (Relative \%) & SD \\
        \hline 
        $\hat{v}(\hat\tau_{adj})$ & 0.011 (34.1) & 0.018 \\ 
        CR2 & 0.036 (113.6) & 0.025  \\ \hline
        \end{tabular}
\end{table}
\begin{table}[ht]
    \caption{Mean lengths and empirical coverage probabilities of 95\% confidence intervals. }
    \label{tab:spatial-ci}
    \centering
        \begin{tabular}{cccc} \hline
         Method  & Mean length & Coverage (\%) \\
        \hline 
         Wald-$z$ with $\hat{v}(\hat\tau_{adj})$  & 0.79 & 96.23 \\
         Wald-$t$ with $\hat{v}(\hat\tau_{adj})$   & 0.83 & 97.14 \\
         Score-$z$ with $\hat{v}_{H}(\hat\tau_{adj})$ & 0.82 & 96.23 \\
         Wald-$t$ with CR2 & 1.06 & 99.86 \\ \hline
        \end{tabular}
\end{table}

\subsubsection{Simulation without covariate adjustment based on the data}
We also evaluate the proposed method without covariate adjustment using the Minecraft study. We impute the counterfactual potential outcomes with two models. The first model assumes a constant effect of 0.48 at student level. This number is chosen because the H\'ajek effect estimator is calculated to be 0.48 for this study. The second model is a linear model including the treatment indicator, student age, educational disadvantage status, and an intercept. Note that fitting a WLS model with student-level observations enable the calculation of CR2, a cluster robust variance estimator. We are able to compare our variance estimator with CR2 because the estimated slope of a WLS regression of student-level outcome on binary treatment indicator and an intercept weighted by inverse assignment probability is equivalent to the H\'ajek effect estimator. Thus, we can compare our variance estimate with the CR2 variance estimate of the treatment coefficient. The CI associated with CR2 is still calculated using a $t$-distribution with adjusted degrees of freedom (calculated to be 26.5) following \cite{McCaffrey2002Bias} and \cite{pustejovsky2017small}. The other two $t$-tests has $n-2=29$ degrees of freedom.

Results are shown in Tables \ref{tabS:spatial-var} and \ref{tabS:spatial-ci}. 
\begin{table}[ht]
    \caption{Biases and standard deviations (SDs) of our proposed variance estimator $\hat{v}(\hat\tau)$, HC2, and CR2 variance estimator under different outcome imputation schemes. }
    \begin{subtable}{.5\linewidth}
      \centering
      \resizebox{0.9\columnwidth}{!}{
        \begin{tabular}{cccc} \hline
         Estimator & Bias (Relative \%) & SD \\
         \hline 
         $\hat{v}(\hat\tau)$ & -0.0078 (-7.8) & 0.0160 \\
         HC2 & 0.0083 (8.2) & 0.0070 \\
        CR2 & 0.0089 (8.8) & 0.0071  \\ \hline
        \end{tabular}
        }
        \caption{Outcome imputed by a constant effect}
    \end{subtable}%
    \begin{subtable}{.5\linewidth}
      \centering
      \resizebox{0.9\columnwidth}{!}{
        \begin{tabular}{ccc} \hline
         Estimator & Bias (Relative \%) & SD \\
         \hline 
         $\hat{v}(\hat\tau)$ & 0.0141 (50.4) & 0.0195 \\
         HC2& 0.0315 (112.5) & 0.0270 \\
         CR2 & 0.0318 (113.6) & 0.0274 \\ \hline
        \end{tabular}
        }
        \caption{Outcome imputed by a linear model}
    \end{subtable} 
    \label{tabS:spatial-var}
\end{table}
\begin{table}[ht]
    \caption{Mean lengths and empirical coverage probabilities of 95\% confidence intervals under different outcome imputation schemes. }
    \begin{subtable}{.5\linewidth}
        \centering
        \resizebox{\columnwidth}{!}{
        \begin{tabular}{cccc} \hline
         Method  & Mean length & Coverage (\%) \\
        \hline 
         Wald-$z$ with $\hat{v}(\hat\tau)$  & 1.19 & 93.29 \\
         Wald-$t$ with $\hat{v}(\hat\tau)$   & 1.24 & 94.31 \\ 
         Score-$z$ with $\hat{v}_{H}(\hat\tau)$ & 1.20 & 94.52 \\
         Wald-$t$ with HC2  & 1.35 & 95.92 \\
         Wald-$t$ with CR2 & 1.35 & 95.94 \\ \hline
        \end{tabular}
        }
        \caption{Outcome imputed by a constant effect }
    \end{subtable}
    \begin{subtable}{.5\linewidth}
        \centering
        \resizebox{\columnwidth}{!}{
        \begin{tabular}{cccc} \hline
         Method  & Mean length & Coverage (\%) \\
        \hline 
         Wald-$z$ with $\hat{v}(\hat\tau)$  & 0.78 & 95.50 \\
         Wald-$t$ with $\hat{v}(\hat\tau)$   & 0.82 & 96.23 \\ 
         Score-$z$ with $\hat{v}_{H}(\hat\tau)$ & 0.79 & 97.28 \\
         Wald-$t$ with HC2  & 0.97 & 99.01 \\ 
         Wald-$t$ with CR2 & 0.97 & 99.01 \\ \hline
        \end{tabular}
        }
        \caption{Outcome imputed by a linear model}
    \end{subtable}
    \label{tabS:spatial-ci}
\end{table}
When the treatment effect is constant, both HC2 and CR2 are computed under a correctly specified model, and they achieve a coverage around 95\%. Our variance estimator is underestimating the variability of the SATE estimate, resulting in a slightly lower coverage rate of the CIs, but the Score-based test still has a coverage close to 95\%. On the other hand, when the outcomes are imputed by a linear model, our variance estimator has the smaller bias and SD, and its associated CIs also effectively cover the SATE. Overall, the Score-based test produces CIs with fairly good coverage. 

\section*{Data and Software Availability}
The Minecraft data example in this Supplementary Material is publicly available at \url{https://osf.io/buhnr/}. The proposed design-based variance estimators can be implemented using the publicly available \texttt{propertee} R software, available on GitHub at \url{https://github.com/benbhansen-stats/propertee} and on CRAN at \url{https://cran.r-project.org/package=propertee}.

\bibliographystyle{agsm}

\bibliography{causal}
\makeatletter\@input{xx-ms.tex}\makeatother
\makeatletter\@input{xx-sup.tex}\makeatother